\DeclarePairedDelimiter\ceil{\lceil}{\rceil}
\DeclarePairedDelimiter\floor{\lfloor}{\rfloor}
\DeclareMathOperator{\len}{len}
\DeclareMathOperator{\sspan}{span}
\DeclareMathOperator{\optone}{OPT_1}
\DeclareMathOperator{\opttwo}{OPT_{BSHM}}
\DeclareMathOperator{\cn}{{\it v}}
\theoremstyle{assumption}
\theoremstyle{definition}
\newtheorem{definition}{Definition}[section]
\theoremstyle{proposition}
\newtheorem{proposition}{Proposition}[subsection]
\newtheorem{theorem}{Theorem}[section]
\theoremstyle{property}
\newtheorem{property}{Property}[section]
\theoremstyle{remark}
\newtheorem{remark}{Remark}[subsection]
\theoremstyle{corollary}
\newtheorem{corollary}[theorem]{Corollary}
\theoremstyle{lemma}
\newtheorem{lemma}[theorem]{Lemma}
\begin{document}
\title{Analysis of Busy-Time Scheduling on Heterogeneous Machines}

\author{
        Mozhengfu Liu and Xueyan Tang\\
        School of Computer Science and Engineering\\
        Nanyang Technological University\\
        Singapore\\
        \{mozhengfu.liu,~asxytang\}@ntu.edu.sg
}

\maketitle

\begin{abstract}
This paper studies a generalized busy-time scheduling model on heterogeneous machines. The input to the model includes a set of jobs and a set of machine types. Each job has a size and a time interval during which it should be processed. Each job is to be placed on a machine for execution. Different types of machines have distinct capacities and cost rates. The total size of the jobs running on a machine must always be kept within the machine's capacity, giving rise to placement restrictions for jobs of various sizes among the machine types. Each machine used is charged according to the time duration in which it is busy, i.e., it is processing jobs. The objective is to schedule the jobs onto machines to minimize the total cost of all the machines used. We develop an $O(1)$-approximation algorithm in the offline setting and an $O(\mu)$-competitive algorithm in the online setting (where $\mu$ is the max/min job length ratio), both of which are asymptotically optimal.
\end{abstract}

\section{Introduction}

In this paper, we study generalized busy-time scheduling on heterogeneous machines. In this model, each job is specified by a size and a time interval of execution. The jobs are to be scheduled onto machines nonpreemptively. At any time, the total size of the jobs running on a machine cannot exceed the machine’s capacity. Each machine used is charged proportional to its busy time which is the total length of the time periods in which it is processing jobs. Multiple types of machines with different capacities and cost rates are available. The goal is to schedule the jobs onto machines to minimize the accumulated cost of all the machines used. We focus on the algorithmic aspects of the above model and aim to develop effective solutions
in both the offline and online settings.

Our busy-time scheduling model has useful applications.
Major cloud providers such as Amazon EC2 \cite{amazonec2}, Google Cloud \cite{googlecloud2} and Microsoft Azure \cite{microsoftazure2} provide different types of predefined server instances (virtual machines) for customers to rent at different rates. Jobs with various resource demands have placement constraints among the server types accordingly. The servers rented from the clouds are charged according to their working hours. It is critical for cloud users to decide the types and numbers of servers to rent in order to minimize the total rental cost for processing their jobs. Our model elegantly captures the ``pay-as-you-go'' billing feature of the clouds and the goal of optimizing the monetary expenses for cloud users.

There have been quite a few studies on busy-time scheduling, but almost all of them assumed homogeneous machines only. Earlier work has investigated scheduling interval jobs of uniform sizes, so that each machine can run at most a fixed number of jobs simultaneously \cite{winkler2003wavelength, alicherry2003line, kumar2005approximation, flammini2010minimizing, shalom2014online, mertzios2015optimizing, chang2017lp}. This problem was termed interval scheduling with bounded parallelism and is NP-hard. More recent work has addressed scheduling interval jobs of non-uniform sizes, where the number of jobs that can share a machine is not fixed \cite{spaa2014, tpds2016, ipdps2016, ren2016competitiveness, ren2016clairvoyant, azar2017tight}. This problem was termed MinUsageTime dynamic bin packing. For both problems, the objective is to minimize the total machine busy time for processing a given set of jobs. In the offline setting where all the jobs are known, there exist $O(1)$-approximation algorithms for both problems \cite{alicherry2003line, kumar2005approximation, flammini2010minimizing, khandekar2015real, ren2016clairvoyant, chang2017lp}.
For jobs of uniform sizes, Flammini {\em et al.} \cite{flammini2010minimizing} presented a 4-approximation First Fit algorithm that schedules jobs in descending order of length. Chang {\em et al.} \cite{chang2017lp} proposed a 3-approximation GreedyTracking algorithm. They also showed that the work of Alicherry and Bhatia \cite{alicherry2003line} and that of Kumar and Rudra \cite{kumar2005approximation} imply 2-approximation algorithms. For jobs of non-uniform sizes, Khandekar {\em et al.} \cite{khandekar2015real} gave a 5-approximation algorithm by extending the work of
\cite{flammini2010minimizing}. Ren and Tang  \cite{ren2016clairvoyant} presented a 4-approximation dual coloring algorithm by extending the work of
\cite{kumar2005approximation}. Very recently, Buchbinder {\em et al.} \cite{buchbinder2021prediction} presented algorithms with improved asymptotic approximation ratios.
In the online setting where jobs are released when they are to start execution and the job lengths are not known until they complete execution, the competitiveness of scheduling is bounded from below by the variation of job lengths for both problems \cite{spaa2014, tpds2016}. That is, the competitive ratio of any online algorithm is $\Omega(\mu)$, where $\mu$ is the max/min job length ratio among all the jobs to schedule. It has been shown that the First Fit algorithm achieves a competitive ratio of $\mu+3$ for scheduling jobs of non-uniform sizes, closely matching the lower bound \cite{ren2016competitiveness}. In the case that the length of a job is revealed when it is released, the competitiveness of scheduling has a tight bound of $\Theta(\sqrt{\log\mu})$  \cite{azar2017tight}. In addition, recent work has also considered discrete charging unit \cite{tpds2019}, machine launch cost \cite{tpds2020}, and load predictions \cite{buchbinder2021prediction} in busy-time scheduling. However, none of the above work has studied multiple types of machines.

With heterogeneous machines, jobs have different restrictions on which types of machines can process them. In addition, various machine types can differ in the normalized cost rate per capacity unit. As a result, the cost of scheduling each job depends on not only other overlapping jobs scheduled on the same machine but also the machine type. To the best of our knowledge, the only work that considered heterogeneous machines was \cite{ren2020heterogeneous}, which investigated just two simple cases in which the normalized cost rate per capacity unit increases or decreases monotonically with the machine capacity. In both cases, it was shown that there exist $O(1)$-approximation and $O(\mu)$-competitive algorithms. The authors of \cite{ren2020heterogeneous} conjectured that in the general case, the asymptotic approximability and competitiveness of scheduling would be dependent on the cost and capacity profiles of the machine types. In this paper, we close this open problem by developing $O(1)$-approximation and $O(\mu)$-competitive algorithms in the offline and online settings respectively for {\em any} set of machine types and {\em any} set of jobs, when there are plenty of machines available for each type.

\section{Problem Definition}

Formally, the input to the Busy-time Scheduling on Heterogeneous Machines (BSHM) model includes a set of jobs $\mathcal{J}$ and a set of machine types $\mathcal{M}$.

Each job $J\in{\mathcal{J}}$ has its size $s(J)$ which represents the resource demand for processing $J$, and its time interval of execution  $I(J):=[I(J)^-,I(J)^+)$.
We refer to $I(J)$ as $J$'s {\em active interval} and say that $J$ is {\em active} during $I(J)$. We also refer to the two endpoints $I(J)^-$ and $I(J)^+$ of $I(J)$ as $J$'s start and end times respectively. We denote the length of $I(J)$ by $\len(J):=I(J)^+-I(J)^-$ and refer to it as the {\em job length}.
Let $\mu:=\max_{J\in{\mathcal{J}}} \len(J) / \min_{J\in{\mathcal{J}}} \len(J)$ denote the max/min job length ratio. Without loss of generality, we assume that the maximum and minimum job lengths are $\mu$ and $1$ respectively.

Each job needs to be scheduled to run on a single machine.
Let $\mathcal{M} = \{1,2,\ldots,|\mathcal{M}|\}$ be the set of the indices of all machine types available, where every machine type indexed by $z\in{\{1,2,\ldots,|\mathcal{M}|\}}$ has a cost rate $r_z$ (per unit time) and a resource capacity $g_z$.
At any time, the total size of the jobs running on a machine cannot exceed the machine capacity. Each machine used is charged at its cost rate for the time duration in which it is processing at least one job. There are sufficient machines of each type available. The objective of BSHM is to minimize the total cost of machine usage for processing all the jobs $\mathcal{J}$.

Note that if two different machine types satisfy $g_i \leq g_j$ and $r_i \geq r_j$, then type-$i$ machines will not be needed for processing jobs because any type-$i$ machine used can be replaced by a type-$j$ machine that has at least the same capacity but lower or equal cost. Thus, without loss of generality, we assume that the machine types have distinct capacities and it holds that $g_1 < g_2 < \dots < g_{|\mathcal{M}|}$ and $r_1 < r_2 < \dots < r_{|\mathcal{M}|}$.

We study both the offline and online settings of BSHM. The difference between the two settings lies in how much information about $\mathcal{J}$ can be used for scheduling each job. In the offline setting, all the information about $\mathcal{J}$ can be used, while in the online setting, only the information available before each job $J$ starts can be used for scheduling $J$, i.e., this includes the start times and sizes of the jobs started before $I(J)^-$ and the end times of the jobs ended before $I(J)^-$.

The performance of an offline algorithm or an online algorithm is often characterized by its {\em approximation ratio} or {\em competitive ratio}, i.e., the worst-case ratio between a solution constructed by the algorithm and an optimal solution over all instances of the problem \cite{borodin1998,vazirani2013approximation}. To facilitate algorithm analysis, we assume that the cost rate of each machine type $z$ is a power of $8$, i.e., $r_z\in{}\{8^{n}:n\in{}\mathbb{Z}\}$, where $\mathbb{Z}$ denotes the set of all integers.
This assumption will cause us to lose at most a factor of $8 = O(1)$ in deriving the approximation or competitive ratio of any algorithm.
Specifically, for each $z\in \mathcal{M}$, suppose $c_z$ is the real cost rate of machine type $z$, which can be any positive value, and $r_z\in{}\{8^{n}:n\in{}\mathbb{Z}\}$ is the power of $8$ integer such that $\frac{1}{8} \cdot r_z < c_z \leq{} r_z$, which is the assumed cost rate for machine type $z$ to be used throughout the rest of this paper.
For any scheduling algorithm $ALG$, let $N(z,t)$ be the number of type-$z$ machines used by $ALG$ at time instant $t$ for a given set of jobs $\mathcal{J}$. For the two sets of cost rates of machine types $\{c_z: z\in \mathcal{M}\}$ and $\{r_z: z\in \mathcal{M}\}$, the optimal scheduling for $\mathcal{J}$ can be different. Let $O_c(z,t)$ be the number of type-$z$ machines used by the optimal scheduling for $\{c_z: z\in \mathcal{M}\}$ at time instant $t$, and $O_r(z,t)$ be the number of type-$z$ machines used by the optimal scheduling for $\{r_z: z\in \mathcal{M}\}$ at time instant $t$. If it is shown that $\int_t \sum_{z\in \mathcal{M}} N(z,t)\cdot r_z \,\mathrm{d}t \leq \alpha\cdot \int_t \sum_{z\in \mathcal{M}} O_r(z,t)\cdot r_z \,\mathrm{d}t$, we have
\begin{equation*}
\begin{aligned}
\int_t \sum_{z\in \mathcal{M}} N(z,t)\cdot c_z \,\mathrm{d}t
&\leq \int_t \sum_{z\in \mathcal{M}} N(z,t)\cdot r_z \,\mathrm{d}t\\
&\leq \alpha\cdot \int_t \sum_{z\in \mathcal{M}} O_r(z,t)\cdot r_z \,\mathrm{d}t\\
&\leq \alpha\cdot \int_t \sum_{z\in \mathcal{M}} O_c(z,t)\cdot r_z \,\mathrm{d}t\\
&\leq 8 \alpha\cdot \int_t \sum_{z\in \mathcal{M}} O_c(z,t)\cdot c_z \,\mathrm{d}t,\\
\end{aligned}
\end{equation*}
where the third inequality is because $O_r(z,t)$ is the optimal scheduling for the set of cost rates $\{r_z: z\in \mathcal{M}\}$. Thus, for the purpose of studying the asymptotic approximability and competitiveness of scheduling, it suffices to assume that the cost rates are powers of $8$.

To facilitate presentation, we further define some notations.
We denote by $\opttwo(\mathcal{X})$ the optimal cost of scheduling any given set of jobs $\mathcal{X}$ for the BSHM problem. For any job $J$, $m(J)$ denotes the machine type in $\{1,2,\ldots,|\mathcal{M}|\}$ such that $s(J)\in{}(g_{m(J)-1},g_{m(J)}]$, i.e., $m(J)$ is the lowest-indexed machine type that can accommodate job $J$. We refer to $m(J)$ as the {\em exact machine type} of $J$. For any set of jobs $\mathcal{X}$, $S(\mathcal{X}) := \sum_{J\in{}\mathcal{X}}{s(J)}$ denotes the total size of these jobs, and $\sspan (\mathcal{X}) := \cup_{J\in{}\mathcal{X}} I(J)$ denotes the time interval(s) in which at least one job in $\mathcal{X}$ is active. For any set of jobs $\mathcal{X}$ and any time instant $t$, $\mathcal{X}(t) :=  \{J\in{}\mathcal{X}:t\in{}I(J)\}$ denotes the active jobs in $\mathcal{X}$ at time $t$, and $S(\mathcal{X},t) := S(\mathcal{X}(t))$ denotes the total size of the active jobs at time $t$.

Details of all the missing proofs in the analysis are given in the corresponding sections of appendices.

\section{Preliminaries}

\subsection{Cost-per-capacity graph}

The main challenge for the general BSHM problem comes from the arbitrary order of the normalized cost rates per capacity unit among different machine types. We construct a directed graph to describe the relationships among the machine types in terms of their normalized cost rates per capacity unit. The graph is referred to as the {\em cost-per-capacity graph}. This graph was also indicated in \cite{ren2020heterogeneous}.

\begin{definition}
In the cost-per-capacity graph, each node $i$ represents a machine type $i\in{} \mathcal{M}$.
Each node $i$ has a directed edge pointing to node $p(i):=\min \{j: j>i \land r_j/g_j < r_i/g_i \}$ if such $p(i)$ exists (i.e., $p(i)$ is the lowest-indexed machine type above $i$ that has a lower normalized cost rate per capacity unit than $i$).
\end{definition}

\begin{proposition} \label{pro:forest}
The cost-per-capacity graph is a forest.
\end{proposition}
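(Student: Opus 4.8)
The plan is to reduce the statement to a purely combinatorial fact about directed graphs. The cost-per-capacity graph has two features I would isolate immediately: (i) every node $i$ has out-degree at most $1$, since its only possible outgoing edge points to $p(i)$; and (ii) every edge is \emph{increasing}, i.e., whenever $p(i)$ is defined it satisfies $p(i) > i$. Note that the specific inequality $r_j/g_j < r_i/g_i$ in the definition of $p(i)$ plays no further role — the argument below uses only (i) and (ii). It then suffices to show that the underlying undirected graph is acyclic, which is exactly what it means to be a forest.

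First I would record the consequences of (ii): the graph has no directed cycle, because following edges $i \to p(i) \to p(p(i)) \to \cdots$ yields a strictly increasing sequence of indices in the finite set $\mathcal{M}$ and hence must terminate. In particular there are no self-loops (a self-loop would need $p(i) = i$) and no two parallel edges between a pair $\{i,j\}$ (this would force $p(i) = j$ and $p(j) = i$, i.e. $i < j < i$), so the underlying undirected graph is simple.

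Next I would argue component by component. Fix a weakly connected component with $c$ nodes; by (i) it has at most $c$ edges, and being connected it has at least $c-1$. If it had exactly $c$ edges, then every node in it would have out-degree exactly $1$, so iterating $i \mapsto p(i)$ from any starting node within the finite component would eventually revisit a node and thereby expose a directed cycle, contradicting the previous paragraph. Hence the component has exactly $c-1$ edges and is a tree (in fact an in-tree rooted at its unique node with no outgoing edge), so the whole graph is a forest.

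The only step that requires a little care is passing from ``no directed cycle'' to ``no undirected cycle'', since out-degree at most $1$ by itself does not preclude an undirected cycle (for instance a directed $2$-cycle would be one). The edge-counting argument above handles this cleanly; alternatively, one may take any simple undirected cycle, observe that — because each of its edges is oriented out of exactly one of its two endpoints while there are as many edges as vertices on the cycle — every cycle vertex must have its unique outgoing edge lying along the cycle, and then derive a contradiction at the largest-indexed vertex of the cycle, whose outgoing edge is forced to point to a smaller index. Everything else is immediate from the definition of $p(\cdot)$.
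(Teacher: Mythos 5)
Your proof is correct. It takes a somewhat different route from the paper's: the paper simply asserts that the underlying undirected graph is ``clearly acyclic'' (so that paths within a component are unique), and then spends its effort showing that each component is an in-tree rooted at $i^* = \max P(i)$ by walking the unique path from an arbitrary node $z$ to $i^*$ and propagating the orientation edge by edge using the uniqueness of $p(\cdot)$. You instead make the acyclicity itself the object of the proof, via an edge count: out-degree at most one bounds each weakly connected component on $c$ nodes by $c$ edges, connectivity forces at least $c-1$, and exactly $c$ would make $p$ total on the component and hence produce a directed cycle, which is excluded because $p(i)>i$. This is arguably the more careful argument, since the step the paper waves through — ruling out \emph{undirected} cycles, not just directed ones — is exactly the one your counting (or your alternative largest-index-on-the-cycle argument) handles explicitly; the monotonicity of the ratio alone only forbids directed cycles. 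What the paper's version buys in exchange is an explicit identification of the root of each tree as $\max P(i)$ and of the orientation of every edge toward it, facts it reuses implicitly later; your proof also yields the root (the unique node of out-degree zero in each component) but only in passing. Your observation that the specific ratio inequality is irrelevant beyond forcing $p(i)>i$ is accurate.
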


For simplicity, we shall use the terms ``node'' and ``machine type'' (or ``type'') interchangeably. We say that $p(i)$ is the parent of $i$, and $i$ is a child of $p(i)$. Let $f(i)$ denote the set of children of node $i$: $f(i):=\{z:p(z)=i\}$. Let $P(i)$ denote the set of nodes including node $i$ and all its ancestors $p(i)$, $p(p(i))$, $p(p(p(i)))$, $\ldots$.
Let $A_i$ denote the set of nodes in the tree rooted at node $i$: $A_i:=\{z: i\in P(z)\}$.
Let $v(i)$ denote the lowest-indexed node in $A_i$: $v(i):=\min{A_i}$.

Let $y(i)$ denote the set of younger siblings of node $i$: $y(i):=\{z:z<i \land p(z)=p(i)\}$, and let $e(i)$ denote the set of elder siblings of node $i$: $e(i):=\{z:z>i \land p(z)=p(i)\}$. Furthermore, let $T(i)$ denote the set of nodes $\left(\cup_{z\in{}P(i)} y(z) \right) \cup\{i\}$.

Table \ref{tab:forest} and Figure \ref{fig:forest} show an example of a set of machine types and the cost-per-capacity graph constructed accordingly. By the above definitions,
$p(10)=11$, $P(10)=\{10,11,13\}$,
$f(11)=\{9, 10\}$, $A_{11}=\{8,9,10,11\}$,
$y(11)=\{7\}$, and $e(11)=\{12\}$. In Figure \ref{fig:forest}, the nodes in grey constitute $T(10)=\{3,5,7,9,10\}$.

{
\begin{table}[h]
  \centering
  \caption{An example of a set of machine types}
  \label{tab:forest}
  \begin{tabular}{cccc}
    \toprule
    machine type $i$ & cost rate $r_i$ & capacity $g_i$ & ratio $r_i/g_i$\\
    \midrule
	13 & $8^6=262144$ & $100000$ & $2.62$\\
	12 & $8^5=32768$ & $3000$ & $10.92$\\
	11 & $8^4=4096$ & $1000$ & $4.10$\\
	10 & $8^3=512$ & $50$ & $10.24$\\
	9 & $8^2=64$ & $12$ & $5.33$\\
	8 & $8^1=8$ & $1$ & $8$\\
	7 & $8^0=1$ & $1/3$ & $3$\\
	6 & $8^{-1}=1/8$ & $1/40$  & $5$\\
	5 & $8^{-2}=1/64$ & $1/65$ & $1.02$\\
	4 & $8^{-3}=1/512$  & $1/1024$ & $2$\\
	3 & $8^{-4}=1/4096$ & $1/4096$ & $1$\\
	2 & $8^{-5}=1/32768$ & $1/100000$ & $3.05$\\
	1 & $8^{-6}=1/262144$ & $1/300000$ & $1.14$\\
  \bottomrule
\end{tabular}
\end{table}
}

\begin{figure}[h]
  \centering
  \includegraphics[width=7cm]{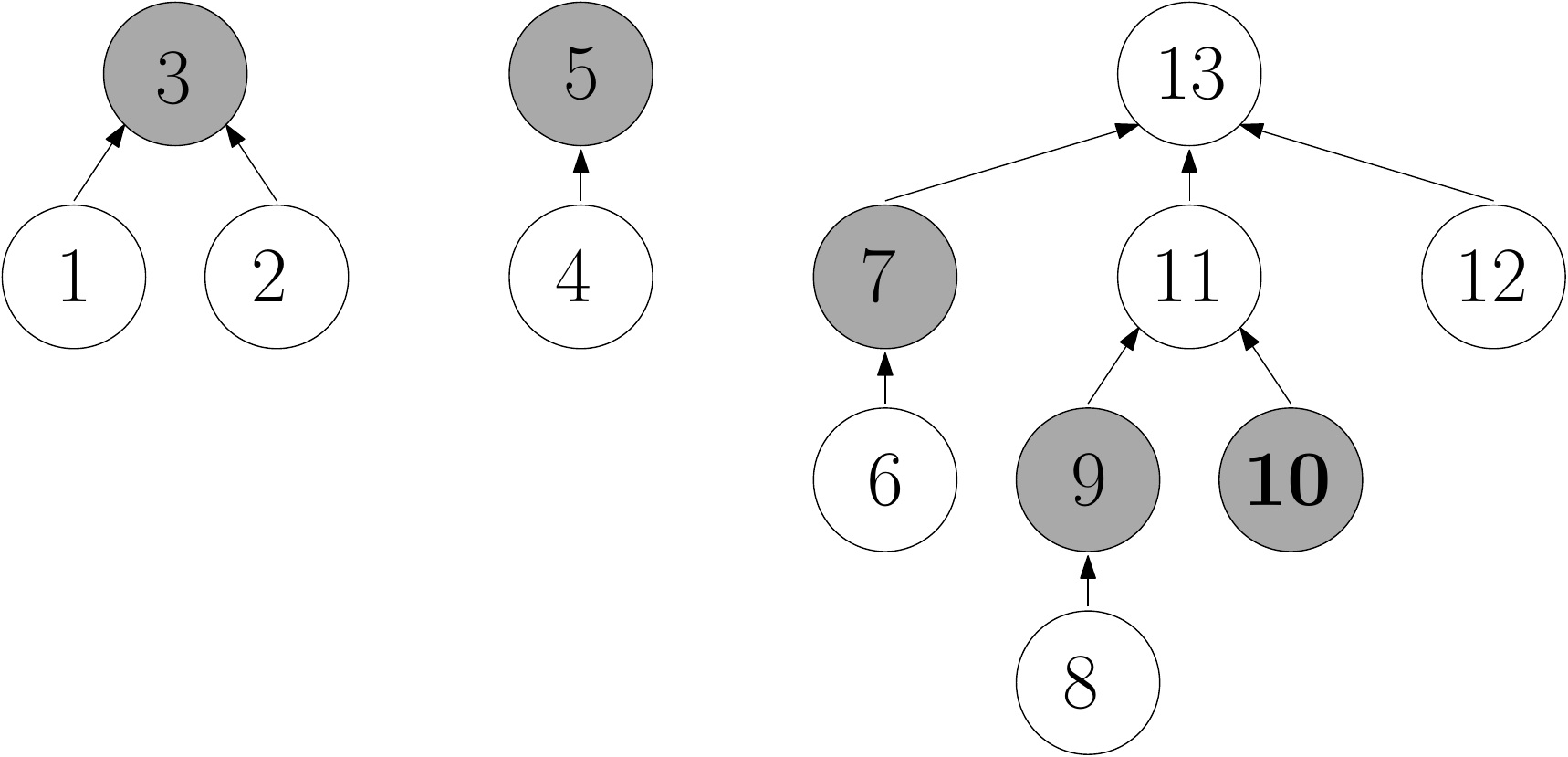}
  \caption{The cost-per-capacity graph for Table \ref{tab:forest}}
  \label{fig:forest}
\end{figure}

\begin{proposition} \label{pro:consecutive}
The node set of each tree has consecutive indexes, i.e., for any  $k\in \mathcal{M}$, $A_k=\{v(k),v(k)+1,\ldots{},k\}$.
\end{proposition}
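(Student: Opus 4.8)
The plan is to argue by strong induction on the index $k$, exploiting the recursive structure $A_k=\{k\}\cup\bigcup_{z\in f(k)}A_z$ (valid because the cost-per-capacity graph is a forest by Proposition~\ref{pro:forest}) together with the fact that the subtrees of distinct children are disjoint. Two easy preliminary observations come first. First, $k=\max A_k$: every $z\in A_k$ with $z\neq k$ has $k$ as a proper ancestor, and parents always have strictly larger indices, so $A_k\subseteq\{v(k),v(k)+1,\dots,k\}$ automatically, and the whole statement reduces to showing that $A_k$ has no gap. Second, if $k$ has any child at all, then $k-1$ must itself be a child of $k$: for any child $z\le k-1$ of $k$, the minimality in the definition of $p(z)=k$ gives $r_{k-1}/g_{k-1}\ge r_z/g_z>r_k/g_k$ (the first inequality is trivial when $z=k-1$), and since $k$ is the immediate successor of $k-1$ this forces $p(k-1)=k$. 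Hence the largest child of $k$ is $z_1=k-1$.

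The core of the proof is a secondary induction showing that, if the children of $k$ are listed in decreasing order as $z_1>z_2>\dots>z_\ell$, their subtrees stack up contiguously just below $k$. Writing $b_t:=\min A_{z_t}$ — which by the outer induction hypothesis equals $v(z_t)$, since each $A_{z_t}=\{v(z_t),\dots,z_t\}$ is already a consecutive block ending at $z_t$ — the claim is that $A_{z_1}\cup\dots\cup A_{z_t}=\{b_t,b_t+1,\dots,k-1\}$ for every $t$, and that $z_{t+1}=b_t-1$ whenever $t<\ell$. Given this, taking $t=\ell$ yields $A_k=\{k\}\cup\{b_\ell,\dots,k-1\}=\{b_\ell,b_\ell+1,\dots,k\}$, a consecutive block whose minimum is $v(k)$, which closes the outer induction. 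The base case $t=1$ of the secondary induction is just $A_{z_1}=\{b_1,\dots,k-1\}$ via $z_1=k-1$, and the inductive step is immediate from $A_{z_{t+1}}=\{b_{t+1},\dots,z_{t+1}\}=\{b_{t+1},\dots,b_t-1\}$ once $z_{t+1}=b_t-1$ is known.

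The key step — and the only place the definition of $p$ is really used — is establishing $z_{t+1}=b_t-1$, i.e., ruling out a gap directly below the already-covered block, and I would do this by contradiction. Suppose $z_{t+1}\le b_t-2$ and consider the node $w:=b_t-1$. On one hand, $p(w)$ cannot lie in any $A_{z_s}$ with $s\le t$: otherwise $w$ would be a descendant of $z_s$ and hence lie in $A_{z_s}\subseteq\{b_t,\dots,k-1\}$, impossible since $w=b_t-1$; as $p(w)>w$, this forces $p(w)\ge k$. On the other hand, $w\in(z_{t+1},k)$, so minimality in $p(z_{t+1})=k$ gives $r_w/g_w\ge r_{z_{t+1}}/g_{z_{t+1}}>r_k/g_k$, exhibiting $k$ as \emph{some} index above $w$ with strictly smaller normalized cost rate per capacity unit, hence $p(w)\le k$. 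Therefore $p(w)=k$, so $b_t-1$ is a child of $k$; it lies below $z_t$ with no child of $k$ strictly in between (any such index would sit inside $\{b_t,\dots,z_t-1\}\subseteq A_{z_t}$, a union of descendants of $z_t$), so $b_t-1=z_{t+1}$, contradicting $z_{t+1}\le b_t-2$. The main obstacle I anticipate is getting this two-sided sandwich on $p(w)$ stated correctly — in particular, noticing that one only needs $p(w)\le k$ from $k$ being a valid candidate, rather than verifying a ratio inequality at every intermediate index — and being careful that the disjointness/downward-closure of subtrees in the forest is invoked exactly where needed.
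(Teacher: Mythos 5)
Your proof is correct, and it organizes the induction differently from the paper. The paper inducts on the depth of a node in the forest (first the roots, then level by level downwards), and at each level obtains consecutiveness from the partition of a consecutive index set into the subtrees at that level together with a non-interleaving argument for siblings: if $z_{q-1}<z_q$ are siblings and $v(z_q)<z_{q-1}$, then $r_{v(z_q)}/g_{v(z_q)}>r_{z_q}/g_{z_q}\geq r_{z_{q-1}}/g_{z_{q-1}}$ would block $z_q$ from being an ancestor of $v(z_q)$. You instead run a strong induction on the index $k$ itself (valid since every child of $k$ has a smaller index) and make the argument local and constructive: after observing that the largest child of $k$ must be $k-1$, you stack the children's blocks downward, and the key step pins down the node $w=b_t-1$ just below the covered block by a two-sided sandwich on $p(w)$ — it cannot lie in an already-covered subtree (so $p(w)\geq k$), while $k$ itself is a witness with smaller normalized cost rate (so $p(w)\leq k$), forcing $w$ to be the next child. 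Your route avoids invoking the global partition property of the forest and makes explicit exactly where the minimality in the definition of $p(\cdot)$ is used; the paper's route treats all siblings of a node symmetrically in one pass. Both arguments rest on the same two facts (minimality of $p$ and strict decrease of $r_z/g_z$ along ancestor chains) and both are sound.
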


\begin{proposition} \label{pro:T_ratio}
For any $k\in \mathcal{M}$ and any $y_1,y_2\in{}T(k)$, if $y_1\leq{}y_2$, then  $r_{y_1}/g_{y_1} \leq{} r_{y_2}/g_{y_2}$. That is, for the nodes in $T(k)$, their normalized cost rates per capacity unit is non-decreasing with indexes.
\end{proposition}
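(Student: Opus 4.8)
The plan is to understand $T(k)$ through the ancestor chain of $k$. Write $P(k)=\{z_0,z_1,\dots,z_m\}$ with $z_0=k$, $z_{j+1}=p(z_j)$ for $0\le j\le m-1$, and $z_m$ a root of the forest. By the definition of the parent function, $r_{z_j}/g_{z_j}>r_{z_{j+1}}/g_{z_{j+1}}$ for each $j$, so the ratios strictly decrease along the chain: $r_{z_0}/g_{z_0}>r_{z_1}/g_{z_1}>\cdots>r_{z_m}/g_{z_m}$. Since $P(k)=\{z_0,\dots,z_m\}$, the definition of $T(k)$ gives $T(k)=\{k\}\cup\bigcup_{j=0}^{m}y(z_j)$, and from the definition of $y(\cdot)$ these pieces are pairwise disjoint. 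I group them as $G_0:=\{k\}\cup y(z_0)$ and $G_j:=y(z_j)$ for $1\le j\le m$, so that $T(k)=\bigsqcup_{j=0}^{m}G_j$.

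Two facts about this decomposition do all the work. (1) \emph{Index ordering}: if $0\le i<j\le m$, then every node of $G_j$ has strictly smaller index than every node of $G_i$. Indeed $G_i\subseteq A_{z_{i+1}}\subseteq A_{z_j}=\{v(z_j),\dots,z_j\}$ (the second inclusion because $z_j$ is an ancestor of $z_{i+1}$), so every node of $G_i$ has index $\ge v(z_j)$; and every $c\in G_j$ is either a child of $z_{j+1}$ with $c<z_j$, or (when $j=m$) a root with $c<z_m$, so $A_c$ is disjoint from $A_{z_j}$ and, being a block of consecutive indices by Proposition~\ref{pro:consecutive} that lies below $A_{z_j}$, satisfies $c<v(z_j)$. (2) \emph{Ratio bounds}: for $t\in G_j$ we have $r_t/g_t\le r_{z_j}/g_{z_j}$ (if $t\ne z_j$ then $t<z_j<p(t)$, or $t<z_m$ is a root when $j=m$, and the definition of $p$ yields the bound); moreover for $0\le j\le m-1$ and $t\in G_j$ we have $p(t)=z_{j+1}$, hence $r_t/g_t>r_{z_{j+1}}/g_{z_{j+1}}$. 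Finally, within a single $G_j$ the nodes share a common parent ($z_{j+1}$ if $j\le m-1$, and $z_1$ if $j=0$) or are all roots (if $j=m$), so by the definition of $p$ their ratios are non-decreasing with index.

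Given these, the proposition is immediate. Let $y_1\le y_2$ in $T(k)$, with $y_1\in G_a$ and $y_2\in G_b$. If $a=b$, then $y_1\le y_2$ lie in one group and $r_{y_1}/g_{y_1}\le r_{y_2}/g_{y_2}$ by within-group monotonicity. If $a\ne b$, fact (1) forces $a>b$ (the smaller-indexed node sits in the higher-indexed group), so $a\ge b+1$ and
\[
\frac{r_{y_1}}{g_{y_1}}\ \le\ \frac{r_{z_a}}{g_{z_a}}\ \le\ \frac{r_{z_{b+1}}}{g_{z_{b+1}}}\ <\ \frac{r_{y_2}}{g_{y_2}},
\]
using the upper ratio bound on $G_a$, the decreasing chain together with $a\ge b+1$, and the lower ratio bound on $G_b$ (available since $b\le m-1$).

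The step I expect to be the real obstacle is a clean, rigorous proof of fact (1): it is visually obvious from the forest picture, but turning it into an argument requires using Proposition~\ref{pro:consecutive} to see that the subtrees hanging under a common node occupy consecutive index intervals in increasing order, plus a separate treatment of the top level $j=m$, where the relevant ``siblings'' are the roots of the forest. The remaining facts are short consequences of the definition of $p(\cdot)$.
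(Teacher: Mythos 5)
Your proof is correct. Both your argument and the paper's ultimately rest on the same two ingredients -- the minimality in the definition of $p(i)=\min\{j: j>i \land r_j/g_j<r_i/g_i\}$ and the consecutiveness of subtree index sets (Proposition \ref{pro:consecutive}) -- but you organize them quite differently. The paper argues locally and directly about the pair $(y_1,y_2)$: it first shows that $y_2\in A_{p(y_1)}\setminus\{p(y_1)\}$, hence $y_1<y_2<p(y_1)$, and then observes that $r_{y_1}/g_{y_1}>r_{y_2}/g_{y_2}$ would make $y_2$ contradict the minimality defining $p(y_1)$; the whole proof is a few lines plus the small Lemma \ref{lem:maxT} to handle the case $y_2=k$. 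You instead build a global level decomposition $T(k)=\bigsqcup_j G_j$ along the ancestor chain, prove that the groups are ordered by index (your fact (1), which is exactly where Proposition \ref{pro:consecutive} enters, just as in the paper), bound the ratios within and between groups, and then chain the inequalities $r_{y_1}/g_{y_1}\le r_{z_a}/g_{z_a}\le r_{z_{b+1}}/g_{z_{b+1}}<r_{y_2}/g_{y_2}$. Your route is longer but buys strictly more: it exhibits the full layered structure of $T(k)$ (strict ratio decrease between levels, monotonicity within each level, and the index ordering of the levels, which is essentially the content of Proposition \ref{pro:T_two}), whereas the paper's proof establishes only the stated inequality. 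Your handling of the top level $j=m$ (roots of the forest, where $y(z_m)$ must be read as the set of smaller-indexed roots) is also consistent with the paper's usage. There is no gap; the only cost of your approach is the extra bookkeeping.
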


\begin{proposition} \label{pro:T_union}
For any $k\in{} \mathcal{M}$, the trees rooted at the nodes of $T(k)$, i.e., $\{A_z: z\in{}T(k)\}$, is a partitioning of $\{1,2,\ldots{},k\}$. Furthermore, suppose that $T(k)=\{i_1,i_2,\ldots,i_n\}$ where $i_1<i_2<\ldots<i_n$. Then $i_q=v(i_{q+1})-1$ for each $q=1,2,\ldots,n-1$.
\end{proposition}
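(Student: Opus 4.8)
The plan is to prove a \emph{sibling-tiling} fact and then telescope it along the ancestor chain of $k$. Throughout I adopt the convention (consistent with the worked example, where $y(13)=\{3,5\}$) that the roots of the forest are regarded as children of a virtual super-root; thus for a root $\rho$, $y(\rho)$ is the set of roots smaller than $\rho$, while for any $k$ the chain $P(k)=\{c_0,c_1,\dots,c_m\}$ with $c_0=k$ and $c_{j+1}=p(c_j)$ ends at the actual root $c_m$ of $k$'s tree. The sibling-tiling fact is: for any node $d$ with non-empty child set $f(d)=\{w_1<\dots<w_\ell\}$ — allowing also $d=$ super-root, reading $v(d)=1$ and $d-1=|\mathcal{M}|$ — one has $v(w_1)=v(d)$, $v(w_{i+1})=w_i+1$ for $1\le i<\ell$, and $w_\ell=d-1$. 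To prove it, note that since the graph is a forest (Proposition~\ref{pro:forest}), every proper descendant of $d$ lies in exactly one child's subtree, so $A_d=\{d\}\sqcup A_{w_1}\sqcup\dots\sqcup A_{w_\ell}$; by Proposition~\ref{pro:consecutive} each $A_z$ is a block of consecutive integers with maximum $z$, whence $A_{w_1}\sqcup\dots\sqcup A_{w_\ell}=\{v(d),\dots,d-1\}$ (using $w_i<d$). Finally, pairwise-disjoint integer blocks whose union is a single block must occur consecutively in order of their maxima: if $\min A_{w_{i+1}}\le w_i$ then $A_{w_{i+1}}$, being a block containing $w_{i+1}>w_i$, would also contain $w_i\in A_{w_i}$, contradicting disjointness; so the blocks increase and, their union having no gaps, the first starts at $v(d)$, each starts one past the previous, and the last ends at $d-1$.

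Next I telescope along $P(k)$. For $j\ge 1$ the parent of $c_j$ is $c_{j+1}$ (the super-root when $j=m$), and $y(c_j)$ consists exactly of the children of $c_{j+1}$ below $c_j$; applying the sibling-tiling fact at $c_{j+1}$ and deleting the top block $A_{c_j}=\{v(c_j),\dots,c_j\}$ shows that $\{A_z:z\in y(c_j)\}$ tiles $\{v(c_{j+1}),\dots,v(c_j)-1\}$, with the reading $v(c_{m+1})=1$. For $j=0$, applying the fact at $c_1=p(k)$ shows $\{A_z:z\in\{k\}\cup y(k)\}$ tiles $\{v(c_1),\dots,k\}$. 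Since $T(k)=\{k\}\cup y(c_0)\cup y(c_1)\cup\dots\cup y(c_m)$, the associated index-blocks
\[\{v(c_1),\dots,k\},\quad \{v(c_2),\dots,v(c_1)-1\},\quad\dots,\quad \{1,\dots,v(c_m)-1\}\]
are pairwise disjoint with union $\{1,2,\dots,k\}$; as each $A_z$ ($z\in T(k)$) sits inside exactly one of them and the $A_z$'s tile each block, $\{A_z:z\in T(k)\}$ partitions $\{1,\dots,k\}$. (The same tiling shows every element of $y(c_j)$ is at most $v(c_j)-1<k$, so $k=\max T(k)$.)

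For the ``Furthermore'' part, write $T(k)=\{i_1<\dots<i_n\}$. By Proposition~\ref{pro:consecutive} each $A_{i_q}=\{v(i_q),\dots,i_q\}$ is a block with maximum $i_q$, and these $n$ blocks partition $\{1,\dots,k\}$; ordered by their maxima they reproduce $i_1<\dots<i_n$, hence they are the consecutive blocks $\{1,\dots,i_1\},\{i_1+1,\dots,i_2\},\dots,\{i_{n-1}+1,\dots,i_n\}$ with $i_n=k$. Matching left endpoints yields $v(i_{q+1})=i_q+1$, i.e.\ $i_q=v(i_{q+1})-1$, for each $q=1,\dots,n-1$.

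The only delicate points are (a) pinning down the root/super-root convention so that $y(\cdot)$ and $P(\cdot)$ behave as in the example, and (b) the elementary observation inside the sibling-tiling fact that disjoint consecutive-integer blocks whose union is a block must line up consecutively; everything else is bookkeeping built on Propositions~\ref{pro:forest} and~\ref{pro:consecutive}. Note that Proposition~\ref{pro:T_ratio} is not needed for this statement.
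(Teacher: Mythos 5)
Your proof is correct, and it reaches the conclusion by a somewhat different route than the paper. The paper's argument (in the appendix) combines two auxiliary lemmas: $k=\max T(k)$ (giving $\cup_{z\in T(k)}A_z\subseteq\{1,\ldots,k\}$) and a covering lemma stating that $\bigl(P(k)\setminus\{k\}\bigr)\cup\bigl(\cup_{z\in V_k}A_z\bigr)=\mathcal{M}$, where $V_k$ also includes the elder siblings along $P(k)$; it then shows by a case analysis that any index outside $\cup_{z\in T(k)}A_z$ must exceed $k$, and finally derives $i_q=v(i_{q+1})-1$ from disjointness plus Proposition~\ref{pro:consecutive}. You instead prove a local sibling-tiling fact --- the children's subtrees of any node $d$ tile $\{v(d),\ldots,d-1\}$ contiguously in index order --- and telescope it along the ancestor chain $P(k)$, which yields the partition of $\{1,\ldots,k\}$ and the contiguity claim $i_q=v(i_{q+1})-1$ from one and the same mechanism; both routes ultimately rest only on Propositions~\ref{pro:forest} and~\ref{pro:consecutive}, as you note. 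Two features of your write-up are worth keeping: you make explicit the convention that the roots of the forest are treated as mutual siblings (children of a virtual super-root), which the paper uses only implicitly (e.g., $y(13)=\{3,5\}$ in the example) and which is genuinely needed for $T(k)$ to reach index $1$ when $k$'s tree is not the leftmost; and your intermediate statement (an exact interval decomposition at every ancestor level) is slightly stronger than the paper's covering lemma, at the cost of not being the reusable lemma (Lemma~\ref{lem:V}) that the paper also invokes in the proofs of Theorem~\ref{thm:optmachine} and Property~\ref{property:off_head}. The one place to be a little more careful in the final write-up is the phrase ``deleting the top block $A_{c_j}$'': what you actually delete from the tiling at $c_{j+1}$ are $A_{c_j}$ and all elder-sibling blocks, and it is the contiguous ordering of the children's blocks that guarantees the younger-sibling blocks tile exactly $\{v(c_{j+1}),\ldots,v(c_j)-1\}$; your sibling-tiling statement does contain this, so it is a wording issue rather than a gap.
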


\begin{proposition} \label{pro:T_two}
For any $k_0\in \mathcal{M}$ and any $k_1\in{}P(k_0)$, denoting $\min {T(k_0)\cap{}A_{k_1}}$ by $z_0$, we have:

(1) $T(k_0)\cap{}A_{k_1}=\left(\cup_{z\in{}P(k_0)\setminus{}P(k_1)} y(z)\right) \cup \{k_0\} =\{z:z\in{}T(k_0) \land z\geq{}z_0\}$;

(2) $T(k_0)\setminus{}A_{k_1}=\cup_{z\in{}P(k_1)} y(z)=\{z:z\in{}T(k_0) \land z<z_0\}$.
\end{proposition}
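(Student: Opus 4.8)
The plan is to exploit the fact that $k_1\in P(k_0)$ makes $P(k_1)$ a suffix of the ancestor chain $P(k_0)$, so that $P(k_0)$ decomposes as the disjoint union $(P(k_0)\setminus P(k_1))\sqcup P(k_1)$, and then to track how the younger-sibling sets $y(z)$ indexed by these two parts intersect the subtree $A_{k_1}$. First I would establish two claims about a node $z$ on the chain $P(k_0)$. Claim (i): if $z\in P(k_0)\setminus P(k_1)$ then $p(z)$ lies on the chain between $k_0$ and $k_1$ inclusive of $k_1$, hence $p(z)\in A_{k_1}$, and since $A_{k_1}$ is closed under taking children we get $y(z)\subseteq A_{k_1}$; for the same reason $k_0\in A_{k_1}$. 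Claim (ii): if $z\in P(k_1)$ then every $w\in y(z)$ satisfies $w<v(k_1)$, and in particular $y(z)\cap A_{k_1}=\emptyset$. The index bound in Claim (ii) is the crux; to get it I would invoke Proposition~\ref{pro:consecutive}: $w$ is not a descendant of $z$, so $w\notin A_z=\{v(z),\dots,z\}$, and as this is an interval with $w<z$ we must have $w<v(z)$; moreover $z\in P(k_1)$ forces $A_{k_1}\subseteq A_z$, hence $v(z)\le v(k_1)$, and combining gives $w<v(k_1)$.

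Granting Claims (i) and (ii), the first equality in each of (1) and (2) follows by routine set algebra. Distributing $\cap\,A_{k_1}$ over $T(k_0)=\bigl(\bigcup_{z\in P(k_0)}y(z)\bigr)\cup\{k_0\}$ and splitting the union via $P(k_0)=(P(k_0)\setminus P(k_1))\sqcup P(k_1)$: the terms with $z\in P(k_0)\setminus P(k_1)$ are unaffected by (i), the terms with $z\in P(k_1)$ vanish by (ii), and $\{k_0\}$ survives since $k_0\in A_{k_1}$; this yields $T(k_0)\cap A_{k_1}=\bigl(\bigcup_{z\in P(k_0)\setminus P(k_1)}y(z)\bigr)\cup\{k_0\}$, and in particular $z_0$ is well defined because $k_0$ lies in this set. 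The same computation with $\setminus A_{k_1}$ replacing $\cap\,A_{k_1}$ gives $T(k_0)\setminus A_{k_1}=\bigcup_{z\in P(k_1)}y(z)$.

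To finish, I would identify these two pieces as $\{z\in T(k_0):z\ge z_0\}$ and $\{z\in T(k_0):z<z_0\}$. Proposition~\ref{pro:consecutive} gives $A_{k_1}=\{v(k_1),\dots,k_1\}$, so every element of $T(k_0)\cap A_{k_1}$ is at least $v(k_1)$, whereas by the strengthened Claim (ii) every element of $T(k_0)\setminus A_{k_1}$ is strictly below $v(k_1)$; thus within $T(k_0)$ the part outside $A_{k_1}$ lies entirely below the part inside. Since $z_0$ is the least element of $T(k_0)\cap A_{k_1}$, this order-separation immediately gives $T(k_0)\cap A_{k_1}=\{z\in T(k_0):z\ge z_0\}$, and complementing within $T(k_0)$ gives $T(k_0)\setminus A_{k_1}=\{z\in T(k_0):z<z_0\}$. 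I expect the only real obstacle to be the rigorous proof of Claim (ii) --- turning ``a younger sibling of $z$ sits below the entire subtree $A_z$, which contains $k_1$'' into the inequality $w<v(k_1)$ --- and Proposition~\ref{pro:consecutive} is precisely the tool that makes this work; everything else is set manipulation on the chain $P(k_0)$.
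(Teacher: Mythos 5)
Your proposal is correct and follows essentially the same route as the paper's proof: identify which sets $y(z)$ fall inside versus outside $A_{k_1}$ to get the first equalities, then use Proposition \ref{pro:consecutive} to obtain the order separation $z_1 < v(k_1) \leq z_2$ between the two parts, which yields the second equalities via the minimality of $z_0$. Your Claims (i) and (ii) simply spell out the step the paper dismisses as ``not hard to see,'' and your justification of Claim (ii) matches the paper's key inequality exactly.
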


\subsection{One-shot job scheduling} \label{sec:optone}

To understand the optimal cost of BSHM, we start by considering scheduling jobs on heterogeneous machines at a single time instant and refer to this problem as {\em one-shot scheduling}. In the one-shot  scheduling problem, we relax the constraint that each job must be scheduled onto a single machine and allow a job to be divided into multiple pieces along its size dimension and each piece to be scheduled onto a distinct machine. However, we retain the restriction that all the machines onto which a job is scheduled must have capacities no less than the original size of the job. The goal of one-shot scheduling is to minimize the total cost rate of all the machines used for accommodating the jobs.

Note that BSHM does not allow jobs to be divided and it also enforces each job to be scheduled onto the same machine throughout its active interval. Therefore, the optimal cost $\opttwo(\mathcal{J})$ of BSHM for a set of jobs $\mathcal{J}$ is bounded from below by the accumulated cost of optimal one-shot scheduling for the active jobs $\mathcal{J}(t)$ at each time instant $t$, i.e.,
\begin{equation} \label{eq:opt2toopt1}
    \opttwo(\mathcal{J}) \geq \int_{t} \optone (\mathcal{J}(t)) \,\mathrm{d}t,
\end{equation}
where $\optone (\mathcal{J}(t))$ denotes the optimal cost rate of one-shot scheduling for the jobs $\mathcal{J}(t)$. Note that in one-shot scheduling, only the sizes of the jobs $\mathcal{J}(t)$ matter while the active intervals of the jobs are irrelevant. We shall use the above lower bound in the analysis of algorithm performance with respect to $\opttwo(\mathcal{J})$.

We define a {\em machine configuration} $w$ as a set of numbers $\{w(z):  z\in \mathcal{M}\}$, each representing the number of machines for a machine type.
Given a set of jobs $\mathcal{J}^{1d}$, the one-shot scheduling problem essentially seeks a minimum-cost machine configuration described by the following integer linear program:\footnote{Since only the sizes of jobs matter in one-shot scheduling, we use the notation $\mathcal{J}^{1d}$ for the input to one-shot scheduling, differentiating it from the input $\mathcal{J}$ to BSHM.} \begin{equation}\label{opt:1d}
\begin{aligned}
\min \quad
& \sum_{z\in \mathcal{M}} w(z)r_z\\
\textrm{s.t.} \quad
& S(\{J\in{}\mathcal{J}^{1d}:m(J)\geq{}i\}) \leq{}\sum_{z\geq i} w(z)\cdot g_z,
\forall{} i\in \mathcal{M};\\
& w(z)\in{}\mathbb{Z}_{\geq{}0},\forall z\in \mathcal{M};\\
\end{aligned}
\end{equation}

The first constraint above says that the total capacity of the machines of types at least $i$ must be no less than the total size of the jobs whose exact machine types are at least $i$.

Since integer programming is NP-complete in general, it is not easy to
derive a closed-form optimal machine configuration for one-shot scheduling. Moreover, the optimal machine configuration may not be unique. For the purpose of analysis, we present several properties of a particular optimal machine configuration. Let $k_0:=\max \{m(J):J\in{}\mathcal{J}^{1d}\}$ be the exact machine type of the highest index among all the jobs.
Then, any feasible machine configuration must include at least one machine of type no less than $k_0$. For each machine type $z\in \mathcal{M}$, let $H_z = \{J\in\mathcal{J}^{1d}:m(J)\in{}A_z \}$ denote all the jobs whose exact machine types are in the tree rooted at type $z$.

\begin{theorem} \label{thm:optmachine}
There exists an optimal machine configuration $w^*$ 
satisfying all the following properties: \\
(1) $k_{opt}:=\max \{z: w^* (z)>0\}\in{}P(k_0)$, i.e., the highest-indexed machine type $k_{opt}$ used is either $k_0$ or an ancestor of $k_0$ in the cost-per-capacity graph; \\
(2) $\sum_{z\in{A_{z_0}\setminus{}\{z_0\}}} w^* (z) r_z < r_{z_0}$ for each $z_0 \in \mathcal{M}$, i.e., the total cost of the machines in the tree rooted at each machine type $z_0$ (except type $z_0$ itself) is less than the cost of one type-$z_0$ machine; \\
(3) $\floor*{\frac{S(H_{k_{opt}} )}{g_{k_{opt}}}} \leq w^* (k_{opt}) \leq \ceil*{ \frac{S(H_{k_{opt}})}{g_{k_{opt}}} }$, i.e., the number of type-$k_{opt}$ machines can almost host all the jobs whose exact machine type is in the tree rooted at type $k_{opt}$.
\end{theorem}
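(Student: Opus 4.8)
The plan is to produce $w^*$ in stages: establish Property~(1) first, then Property~(2) on top of it, and finally derive Property~(3) from (1) and (2). Each stage rewrites the current configuration by local, cost-non-increasing, feasibility-preserving swaps until the property holds, checking that the swaps of a later stage do not permanently destroy an earlier one. The one structural fact I expect to carry the whole argument is the normalization ``$r_z$ is a power of $8$'': the naive swap ``discard a bundle of machines and buy a cheaper machine of at least their total capacity'' always loses an additive unit of the cheaper cost to the ceiling in the machine count, and the powers of $8$ are exactly what let us replace these by swaps that hit a cost target on the nose. I expect controlling this rounding --- most acutely in Property~(1) --- to be the main obstacle.

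For Property~(2), fix $z_0$ and suppose $\sum_{z\in A_{z_0}\setminus\{z_0\}}w(z)r_z\ge r_{z_0}$. Every $z\in A_{z_0}\setminus\{z_0\}$ is a proper descendant of $z_0$, so the cost-per-capacity strictly decreases along the parent pointers from $z$ to $z_0$, giving $r_z/g_z>r_{z_0}/g_{z_0}$ and, since $r_z<r_{z_0}$, $r_z\le r_{z_0}/8$. Process these machines in decreasing order of cost and accumulate greedily; the running partial sum is always an integer multiple of the next (smaller) power of $8$, so it can never overshoot $r_{z_0}$, and since the full total is at least $r_{z_0}$, some initial segment $X$ has total cost exactly $r_{z_0}$ --- and then its total capacity is $<r_{z_0}\cdot g_{z_0}/r_{z_0}=g_{z_0}$ by the cost-per-capacity bound. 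Replacing $X$ by one type-$z_0$ machine preserves feasibility of \eqref{opt:1d} (a type-$z_0$ machine serves a superset of the constraints served by the deleted proper-descendant machines, with strictly larger total capacity) and leaves the cost unchanged, while strictly decreasing $\sum_{z\in A_{z_0}\setminus\{z_0\}}w(z)r_z$. Iterating, and sweeping $z_0$ from smallest to largest, gives (2) at every $z_0$; because each such swap only removes machines from a subtree and adds a machine at that subtree's root, which (one checks) belongs to $P(k_0)$ whenever it lies above the current top type and otherwise leaves the top type unchanged, Property~(1) survives.

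For Property~(1), the constraint of \eqref{opt:1d} at $i=k_0$ forces $\sum_{z\ge k_0}w(z)g_z\ge S(\{J:m(J)\ge k_0\})>0$, so $k_{opt}\ge k_0$. Suppose $k_{opt}\notin P(k_0)$. Since a node common to $P(k_0)$ and $A_{k_{opt}}$ would force $k_{opt}$ to be an ancestor of $k_0$, we have $P(k_0)\cap A_{k_{opt}}=\emptyset$; combined with $A_{k_{opt}}=\{v(k_{opt}),\dots,k_{opt}\}$ (Proposition~\ref{pro:consecutive}) and $k_{opt}>k_0$, this yields $v(k_{opt})>k_0$, so every machine of a type in $A_{k_{opt}}$ has index $>k_0$ and hence contributes to exactly the constraints $i\le k_0$, and to all of them. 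Let $k_1:=\max\bigl(P(k_0)\cap\{1,\dots,k_{opt}\}\bigr)$, so $k_0\le k_1<v(k_{opt})<k_{opt}$; since $k_1<k_{opt}<p(k_1)$ and $p(k_1)$ is by definition the first node above $k_1$ of smaller cost-per-capacity, $r_{k_1}/g_{k_1}\le r_{k_{opt}}/g_{k_{opt}}$, and a type-$k_1$ machine ($k_1\ge k_0$) also serves every constraint $i\le k_0$. The plan is to shift the capacity carried by the machines of types in $A_{k_{opt}}$ onto type $k_1$ --- together with, if a residue remains, a single cheaper machine lower in $P(k_0)$ to absorb the ceiling term --- and to argue it is cost-non-increasing: Property~(2) at $z_0=k_{opt}$ bounds the cost of the proper-subtree machines by $r_{k_{opt}}$ and their capacity by $g_{k_{opt}}$, the within-subtree cost-per-capacity ordering bounds the cost of the $A_{k_{opt}}$-machines from below by their total capacity times $r_{k_{opt}}/g_{k_{opt}}$, and the powers-of-$8$ structure is what makes the residue fit without overshoot. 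The resulting configuration has $w(z)=0$ for every $z\ge v(k_{opt})$, so it is optimal with a strictly smaller top type; iterating drives the top type into $P(k_0)$. (This consolidation --- organizing it so the arithmetic is exactly cost-neutral --- is the delicate heart of the proof; it may be cleanest to run it as an induction on the tree structure around $k_0$.)

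For Property~(3), with (1) and (2) in force $k_{opt}\in P(k_0)$, $w^*$ is supported on $\{1,\dots,k_{opt}\}$, $A_{k_{opt}}=\{v(k_{opt}),\dots,k_{opt}\}$, and since $m(J)\le k_0\le k_{opt}$ always, $H_{k_{opt}}=\{J:m(J)\ge v(k_{opt})\}$, so $S(H_{k_{opt}})$ equals the right-hand side of the constraint at $i=v(k_{opt})$. Writing $B:=\sum_{z\in A_{k_{opt}}\setminus\{k_{opt}\}}w^*(z)g_z$, Property~(2) and the cost-per-capacity ordering give $B<g_{k_{opt}}$; the constraint at $i=v(k_{opt})$ then reads $w^*(k_{opt})g_{k_{opt}}+B\ge S(H_{k_{opt}})$, so $w^*(k_{opt})>S(H_{k_{opt}})/g_{k_{opt}}-1$ and therefore $w^*(k_{opt})\ge\floor*{S(H_{k_{opt}})/g_{k_{opt}}}$. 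For the upper bound, if $w^*(k_{opt})\ge\ceil*{S(H_{k_{opt}})/g_{k_{opt}}}+1$ then the type-$k_{opt}$ machines alone carry capacity $\ge S(H_{k_{opt}})+g_{k_{opt}}$, so deleting one of them keeps every constraint $i\ge v(k_{opt})$ satisfied; by optimality some constraint $i<v(k_{opt})$ must then break, but the jobs causing that have exact machine types in the sibling subtrees indexed by $T(k_0)\setminus A_{k_{opt}}$ (Propositions~\ref{pro:T_union} and~\ref{pro:T_two}), so the constraint can be repaired by one machine drawn from that $T$-structure, whose capacity covers the deficit (which is $<g_{k_{opt}}$) while its cost is strictly below $r_{k_{opt}}$ --- contradicting optimality. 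Hence $w^*(k_{opt})\le\ceil*{S(H_{k_{opt}})/g_{k_{opt}}}$, which, together with the two earlier parts, proves the theorem.
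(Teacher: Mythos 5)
Your overall strategy --- produce $w^*$ by a sequence of cost-neutral, feasibility-preserving exchanges that exploit the powers-of-$8$ normalization and the cost-per-capacity ordering, handling the three properties in successive stages --- is the paper's strategy, and your treatment of Property (2) (greedily accumulating subtree machines to a partial sum of exactly $r_{z_0}$ and replacing them by one type-$z_0$ machine) and of the lower bound in Property (3) matches the paper's construction $G_2$ and its final computation. But two steps do not hold up as written. First, the consolidation for Property (1) is only a plan, and the plan is both more complicated than necessary and logically misplaced: you invoke Property (2) at $z_0=k_{opt}$ inside the Property (1) argument while announcing that (1) is established before (2). In fact no residue machine and no appeal to (2) is needed: every $z\in A_{k_{opt}}$ satisfies $z>k_1$, so $r_z/r_{k_1}$ is a positive integer power of $8$ and the total cost $C$ of the $A_{k_{opt}}$-machines is an exact integer multiple of $r_{k_1}$; buying $C/r_{k_1}$ type-$k_1$ machines is exactly cost-neutral and, since $g_z/r_z\le g_{k_1}/r_{k_1}$ for all $z\in A_{k_{opt}}$, capacity-non-decreasing at every constraint level $\le k_0$ (levels above $k_0$ are vacuous). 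This is precisely the paper's map $G_1$, applied one subtree at a time, so the gap is fillable --- but as submitted the central step of (1) is not proved.

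Second, and more seriously, your upper-bound argument for Property (3) fails. After deleting one type-$k_{opt}$ machine, the deficit $D$ at a constraint $i<v(k_{opt})$ can be as large as $g_{k_{opt}}$ (the original constraint may be tight), while every type in $T(k_{opt})\setminus\{k_{opt}\}$ has capacity $g_s<g_{k_{opt}}$; so no single machine from that set covers the deficit. Covering it with $\lceil D/g_s\rceil$ type-$s$ machines costs up to $D\cdot r_s/g_s+r_s$, and since the ordering only gives $r_s/g_s\le r_{k_{opt}}/g_{k_{opt}}$, this need not be strictly below $r_{k_{opt}}$ --- at best the repair is cost-neutral, which yields no contradiction with optimality. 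The paper avoids this by arguing constructively rather than by contradiction: each excess type-$k_{opt}$ machine is exchanged for exactly $r_{k_{opt}}/r_s$ machines of type $s=v(k_{opt})-1$ (an integer by the powers-of-$8$ normalization), an exactly cost-neutral swap whose added capacity $r_{k_{opt}}g_s/r_s\ge g_{k_{opt}}$ preserves feasibility at all levels $\le s$, while the retained $\lceil S(H_{k_{opt}})/g_{k_{opt}}\rceil$ type-$k_{opt}$ machines cover all levels in $A_{k_{opt}}$ (the degenerate case $v(k_{opt})=1$ is handled by simply discarding redundant machines). You should replace your contradiction with this exchange and then verify, as the paper does, that it preserves Properties (1) and (2).
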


In the rest of this paper, the optimal machine configuration discussed shall always refer to one with the above properties.

\section{The Offline Setting}

\subsection{The offline algorithm: $ALG_{offline}$}

We now discuss the offline BSHM problem, which is NP-hard since it is a generalization of interval scheduling with bounded parallelism. Consider a set of jobs $\mathcal{J}$ for BSHM. For each $z\in{} \mathcal{M}$, let $R_z = \{J\in\mathcal{J}:m(J)\in{}A_z \}$ denote the set of jobs whose exact machine types are in the tree rooted at type $z$. Algorithm \ref{alg:offline} shows our offline algorithm $ALG_{offline}$ for BSHM. The algorithm iteratively determines the set of jobs $K_z$ scheduled onto each machine type $z$ in descending order of type indexes (line 1). For each machine type $z$, we consider all the unscheduled jobs in $R_z$, denoted by $\mathcal{R}_z$ (line 2). Note that $\mathcal{R}_z$ includes jobs whose exact machine types are $z$ (denoted by $\mathcal{R}^h_z$) and jobs whose exact machine types are $z$'s descendants in the cost-per-capacity graph (given by $\mathcal{R}_z\setminus{}\mathcal{R}^h_z$). The jobs in $\mathcal{R}^h_z$ must be scheduled onto type-$z$ machines (line 4) since all the machine types indexed higher than $z$ have been considered before. For each job $J$ in $\mathcal{R}_z\setminus{}\mathcal{R}^h_z$,
we check whether it is {\em cost-effective} to open a type-$z$ machine throughout $J$'s active interval. If so, $J$ is scheduled onto type-$z$ machines (lines 9-10). If not, $J$ is left to subsequent iterations and will be scheduled onto a descendant machine type of $z$ (more specifically, a machine type in the tree rooted at a child of $z$ that includes $m(J)$). To decide whether it is cost-effective to open a type-$z$ machine at a time instant $t$, we examine all the active jobs in $\mathcal{R}_z$ at time $t$. If there exists at least one job in $\mathcal{R}^h_z$ active at $t$ (i.e., $t \in \sspan(\mathcal{R}^h_z)$), time instant $t$ is considered cost-effective (line 7). Otherwise, all the jobs active at $t$ are from $\mathcal{R}_z\setminus{}\mathcal{R}^h_z$.
Thus, each job active at $t$ must have its exact machine type in one of the trees rooted at $z$'s children. For each child type $x$ of $z$, we compute the number of type-$x$ machines needed to host all the jobs $F_{x,t}$ whose exact machine types are in the tree rooted at $x$ (lines 5-6). If the total cost of the machines of $z$'s child types calculated in this way exceeds 1/3 of the cost of a type-$z$ machine, time instant $t$ is considered cost-effective (line 7). Note that the set of active jobs does not change between two successive job starts/ends. Thus, the cost-effectiveness only needs to be evaluated once for each interval between two successive job starts/ends. As a result, the algorithm runs in polynomial time.

\begin{algorithm}[h]
\KwIn{A set of jobs $\mathcal{J}$.}
\KwOut{A schedule for $\mathcal{J}$.}

\For{$z=|\mathcal{M}|,|\mathcal{M}|-1,\dots,1$}{
	$\mathcal{R}_z\gets{}R_z \setminus (\cup_{i=|\mathcal{M}|,|\mathcal{M}|-1,\ldots,z+1}K_i)$\;
	$\mathcal{R}^h_z\gets\{J\in\mathcal{R}_z:m(J)=z\}$\;
	$K_z\gets{}\mathcal{R}^h_z$\;
	$F_{x,t}\gets\{J\in\mathcal{R}_z(t):m(J)\in{}A_x\}$, for each $x\in{f(z)}$ and $t$\;
	$c_{z,t}\gets\sum_{x\in{}f(z)} \ceil*{\frac{S(F_{x,t})}{g_x}r_x}$, for each $t$\;
	$\mathcal{T}_z\gets{}\sspan(\mathcal{R}^h_z)\cup\{t:c_{z,t}\geq{}\frac{1}{3}r_z\}$\;
	\For{each $J\in{}\mathcal{R}_z\setminus{}\mathcal{R}^h_z$}{
		\If{$I(J)\subset{}\mathcal{T}_z$}
		{add $J$ into $K_z$\;}
	}	
	schedule jobs in $K_z$ onto type-$z$ machines by using the dual coloring algorithm (see \cite{ren2016clairvoyant})\;
}
\caption{$ALG_{offline}$}
\label{alg:offline}
\end{algorithm}

It is easy to infer that $ALG_{offline}$ eventually assigns each job $J$ to either the exact machine type $m(J)$ or one of $m(J)$'s ancestors. After determining the set of jobs $K_z$ for all machine types, we use an existing dual coloring algorithm \cite{ren2016clairvoyant} to schedule the jobs in each $K_z$ onto the machines of the corresponding type (line 13). Dual coloring is a $4$-approximation algorithm for scheduling jobs onto homogeneous machines.

The output of the $ALG_{offline}$ algorithm has the following properties for any time instant $t\in\sspan(\mathcal{J})$.

\begin{property} \label{property:off_head}
Let $k_{off}:= \max \{z:K_z (t)\neq \emptyset\}$ be the highest-indexed machine type used by $ALG_{offline}$ at time $t$, and $k_0:= \max \{m(J):J\in{}\mathcal{J}(t)\}$ be the highest-indexed exact machine type among the active jobs at time $t$. We have $k_{off}\in{}P(k_0)$, i.e., $k_{off}$ is either $k_0$ or an ancestor of $k_0$.
\end{property}

\begin{property} \label{property:off_upperb}
For any machine type $z \in \mathcal{M}$, we have $\sum_{i\in{A_z\setminus\{z\}}}
\ceil*{\frac{S(K_i,t)}{g_i}} \cdot r_i \leq 2\cdot{}r_z $, i.e., the total cost of the machines needed for hosting all the jobs assigned to $z$'s descendants
is bounded by the cost of a constant number of type-$z$ machines.
\end{property}

\begin{property} \label{property:off_lowerb}
For any machine type $z \in \mathcal{M}$ such that (a) $K_i (t)=\emptyset{}$ for each $i\in{}P(z)\setminus \{z\}$ ($i$ is $z$'s ancestor), (b) $K_z (t)\neq{}\emptyset{}$, and (c) $\mathcal{R}_z^h (t)=\emptyset{}$, we have $\sum_{x\in{}f(z)} S(R_x ,t)\cdot \frac{r_x}{g_x} \geq \frac{4}{21}\cdot{}r_z$, i.e., if no job is assigned to $z$'s ancestors and all the jobs assigned to type $z$ have exact machine types that are $z$'s descendants, then assigning all the jobs whose exact machine types are $z$'s descendants to the corresponding $z$'s children would incur a total cost at least a constant fraction of a type-$z$ machine.
\end{property}

We exploit the properties of the cost-per-capacity graph to analyze the $ALG_{offline}$ algorithm. In particular, Proposition \ref{pro:T_union} of the cost-per-capacity graph indicates that all machine types indexed from $1$ to any $k$ is a disjoint union of subtrees rooted at nodes from the set $T(k)$.
In our analysis,
$T(k)$ plays a critical role to bridge the cost of
$ALG_{offline}$
and the optimal cost of BSHM.
The general idea is as follows. For each time instant, we charge the cost of the optimal machine configuration onto only machine types $T(k_{opt})$ within an $O(1)$ factor (where $k_{opt}$ is the highest-indexed machine type used by the optimal configuration) (Section \ref{sec:first_app}). We also charge the cost of the machines used by the $ALG_{offline}$ algorithm onto only machine types $T(k_{off})$ within an $O(1)$ factor (where $k_{off}$ is the highest-indexed machine type used by $ALG_{offline}$). These charging mechanisms significantly reduce the set of machine types we need to consider. Finally, we establish the connections between the costs of $T(k_{opt})$ and $T(k_{off})$ by carefully analyzing different possible relationships between $k_{opt}$ and $k_{off}$ according to the definition of the $ALG_{offline}$ algorithm (Section \ref{sec:offline_analysis}).

\subsection{An $O(1)$ approximation of optimal one-shot scheduling}
\label{sec:first_app}
The optimal machine configuration $w^*$ for one-shot scheduling discussed in Section \ref{sec:optone} is not concrete enough. For the analysis of $ALG_{offline}$, we define an {\em alternative machine configuration} that is an $O(1)$ approximation of the optimal machine configuration.

Given a set of jobs $\mathcal{J}^{1d}$, let $k_0:=\max \{m(J):J\in{}\mathcal{J}^{1d}\}$. We first define a machine configuration $\cn_{z^*}$ given the highest-indexed machine type $z^*$ used, where $z^*$ is either $k_0$ or an ancestor of $k_0$ (i.e., $z^*\in{P(k_0)}$). The configuration $\cn_{z^*}$ uses only machine types $T(z^*)$. By Proposition \ref{pro:T_union}, $\{A_z: z\in{}T(z^*)\}$ is a partitioning of $\{1,2,\ldots{},z^*\}$. Thus, $\mathcal{J}^{1d}$ can be rewritten as $\cup_{i \in T(z^*)} H_i$, where $H_i$ is all the jobs in $\mathcal{J}^{1d}$ whose exact machine types are in the tree rooted at type $i$. In the configuration $\cn_{z^*}$, the number of type-$z^*$ machines is given by $\cn_{z^*}(z^*) = \ceil*{\frac{S(H_{z^*})}{g_{z^*}}}$, which is sufficient to host all the jobs in $H_{z^*}$. A capacity amount of $\cn_{z^*}(z^*)
\cdot g_{z^*} - S(H_{z^*})$ from type-$z^*$ machines is available for hosting other jobs $\mathcal{J}^{1d} \setminus{} H_{z^*}$. We use this available capacity to host the jobs in $H_i$ where $i \in T(z^*)\setminus{}\{z^*\}$ in decreasing order of $i$. The jobs in $H_i$
that cannot fit into the type-$z^*$ machines are put into type-$i$ machines. The number of type-$i$ machines is allowed to be fractional and is just enough to host all these jobs. Let $i^* \in T(z^*)\setminus{}\{z^*\}$ be a boundary type index such that
\begin{equation} \label{eq:boundary_machine_type1}
    \sum_{z \in T(z^*) \land z > i^*} S(H_z) \leq 
    \cn_{z^*}(z^*) \cdot g_{z^*},
\end{equation}
and
\begin{equation} \label{eq:boundary_machine_type2}
    \sum_{z \in T(z^*) \land z \geq i^*} S(H_z) > 
    \cn_{z^*}(z^*) \cdot g_{z^*}.
\end{equation}
Then, the machine configuration $\cn_{z^*}$ is given by:
\begin{equation*}
\begin{cases}
\cn_{z^*}(z^*) := \ceil*{\frac{S(H_{z^*})}{g_{z^*}}}, \\
\cn_{z^*}(i):=
\begin{cases}
0, \quad \text{~if~} i \in T(z^*)\setminus{}\{z^*\} \text{~and~} i > i^*, \\
\frac{\sum_{z \in T(z^*) \land z \geq i} S(H_z) - \cn(z^*) \cdot g_{z^*}}{g_{i}}, \quad \text{~if~} i = i^*, \\
\frac{S(H_{i})}{g_{i}}, \quad \text{~if~} i \in T(z^*)\setminus{}\{z^*\} \text{~and~} i < i^*, \\
0, \quad \text{~if~} i \notin T(z^*). \\
\end{cases}
\end{cases}
\end{equation*}

The machine configuration $\cn_{z^*}$ has the following properties:

\begin{proposition} \label{pro:cn&r'_total}
For any $z_0,z_1\in{}T(z^*)$ such that $z_0\leq{}z_1$, we have $\left| \sum_{z\in{}T(z^*) \land z_0\leq{}z\leq{}z_1} \left( S(H_z) \frac{r_z}{g_z} - \cn_{z^*} (z) r_z \right) \right| < r_{z^*}$.
\end{proposition}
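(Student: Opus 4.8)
The plan is to analyze the telescoping-type sum $\sum_{z\in T(z^*)\land z_0\le z\le z_1}\bigl(S(H_z)\tfrac{r_z}{g_z}-\cn_{z^*}(z)r_z\bigr)$ by splitting the index range according to where it sits relative to the boundary type $i^*$ and the root $z^*$, and then exploiting the explicit formulas for $\cn_{z^*}$. Write $M:=\cn_{z^*}(z^*)\cdot g_{z^*}=\ceil*{S(H_{z^*})/g_{z^*}}\cdot g_{z^*}$ for the total type-$z^*$ capacity. For indices $z\in T(z^*)\setminus\{z^*\}$ with $z<i^*$ we have $\cn_{z^*}(z)=S(H_z)/g_z$, so those terms contribute exactly $0$; for $z>i^*$ (and $z\ne z^*$) we have $\cn_{z^*}(z)=0$, so those terms contribute exactly $S(H_z)r_z/g_z$; the boundary term $z=i^*$ contributes $S(H_{i^*})r_{i^*}/g_{i^*}-\cn_{z^*}(i^*)r_{i^*}$; and the root term $z=z^*$ contributes $S(H_{z^*})r_{z^*}/g_{z^*}-M\cdot r_{z^*}/g_{z^*}$, which is negative and has magnitude less than $r_{z^*}$ since $M-g_{z^*}<S(H_{z^*})\le M$. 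The key structural fact I will use repeatedly is Proposition~\ref{pro:T_ratio}: for $y_1\le y_2$ in $T(k)$, $r_{y_1}/g_{y_1}\le r_{y_2}/g_{y_2}$; equivalently the ratio $r_z/g_z$ is largest at $z^*$ among the relevant indices that are $\le z^*$.

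I would organize the argument into the following cases. \emph{Case A: $z_1<i^*$ (so the whole range lies strictly below the boundary).} Then every term is $0$ and the sum is $0<r_{z^*}$. \emph{Case B: $z_0\le i^*\le z_1$ but $z_1<z^*$ (equivalently $z^*\notin\{z_0,\dots,z_1\}$).} Here only the terms with $i^*\le z\le z_1$ survive, and the sum equals $\sum_{z\in T(z^*)\land i^*\le z\le z_1}S(H_z)r_z/g_z-\cn_{z^*}(i^*)r_{i^*}$. Using the defining inequalities \eqref{eq:boundary_machine_type1} and \eqref{eq:boundary_machine_type2} together with the monotonicity of the ratios from Proposition~\ref{pro:T_ratio}, I will bound $\sum_{z\ge i^*}S(H_z)r_z/g_z$ above by $\bigl(\sum_{z\ge i^*}S(H_z)\bigr)r_{z_1}/g_{z_1}$ and below by a comparable quantity, while $\cn_{z^*}(i^*)r_{i^*}=\bigl(\sum_{z\ge i^*}S(H_z)-M\bigr)r_{i^*}/g_{i^*}$; combining and using $r_{z_1}/g_{z_1}\le r_{z^*}/g_{z^*}$ and $M\ge g_{z^*}$ should pin the difference strictly between $-r_{z^*}$ and $r_{z^*}$. \emph{Case C: $z_1=z^*$ (the range reaches the root).} Now add the negative root contribution $S(H_{z^*})r_{z^*}/g_{z^*}-M r_{z^*}/g_{z^*}$ to the surviving positive terms $\sum_{i^*\le z<z^*, z\in T(z^*)}S(H_z)r_z/g_z$ minus $\cn_{z^*}(i^*)r_{i^*}$. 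The positive terms are all at ratio $\le r_{z^*}/g_{z^*}$, so their sum is at most $\bigl(\sum_{i^*\le z<z^*}S(H_z)\bigr)r_{z^*}/g_{z^*}$; together with the $-M r_{z^*}/g_{z^*}$ from the root and the fact (from \eqref{eq:boundary_machine_type2}) that $\sum_{z\ge i^*}S(H_z)>M$, the whole expression should be squeezed into $(-r_{z^*},r_{z^*})$ — the lower bound coming from the root deficit being less than $r_{z^*}$ in magnitude, the upper bound from the total size above $i^*$ not exceeding $M+g_{i^*}\le M+g_{z^*}$ by the minimality of the boundary index.

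The main obstacle I anticipate is the bookkeeping in Cases B and C: one must simultaneously control an \emph{upper} and a \emph{lower} bound on a sum of terms with differing ratios $r_z/g_z$, and the natural estimates $S(H_z)r_z/g_z$ versus $S(H_z)r_{z_1}/g_{z_1}$ leave a gap that has to be closed using the boundary inequalities \eqref{eq:boundary_machine_type1}–\eqref{eq:boundary_machine_type2} rather than crude monotonicity alone. The cleanest route is probably to first prove the statement for the special subcase $z_1=z^*$ and $z_0=v(z^*)$ (the full range), where the sum is simply $S(\mathcal{J}^{1d})$-type quantities minus the total machine cost and the $<r_{z^*}$ slack is transparent, and then show the partial sums never stray further by a monotonicity/prefix argument: the partial sum as a function of the left endpoint $z_0$ changes by controlled amounts, each individual surviving term being nonnegative and each "missing" capacity block being at most one machine's worth. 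Handling the case $z^*\notin T(z^*)$ does not arise since $z^*\in T(z^*)$ always, but one must be careful that $i^*$ is well-defined only when $H_{z^*}$ does not already exhaust $M$; if it does, $i^*$ may fail to exist and then all non-root terms in the stated range vanish, which is the easy degenerate case to dispatch first.
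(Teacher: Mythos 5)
Your decomposition of the sum according to the explicit formula for $\cn_{z^*}$ is sound: the terms with $z<i^*$ vanish, the terms with $i^*<z<z^*$ equal $S(H_z)r_z/g_z\ge 0$, the $i^*$ term equals $\bigl(S(H_{i^*})-\cn_{z^*}(i^*)g_{i^*}\bigr)r_{i^*}/g_{i^*}\ge 0$ (non-negativity here is exactly inequality (\ref{eq:boundary_machine_type1})), and the root term lies in $(-r_{z^*},0]$; your handling of the degenerate case where $i^*$ does not exist is also correct. The gap is in the step you yourself flag as the main obstacle: the upper bound in Cases B and C. The two specific estimates you propose do not work. First, bounding the positive part by $\bigl(\sum_{z\ge i^*}S(H_z)\bigr)r_{z_1}/g_{z_1}$ while keeping the subtracted quantity $\cn_{z^*}(i^*)r_{i^*}=\bigl(\sum_{z\ge i^*}S(H_z)-M\bigr)r_{i^*}/g_{i^*}$ at the ratio $r_{i^*}/g_{i^*}$ leaves a ratio mismatch in the wrong direction (by Proposition \ref{pro:T_ratio}, $r_{i^*}/g_{i^*}$ is the \emph{smallest} of the ratios involved, and the subtracted quantity can be arbitrarily large), so the difference cannot be pinned below $r_{z^*}$ this way. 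Second, the claim that the total size at or above $i^*$ is at most $M+g_{i^*}$ is false: $\cn_{z^*}(i^*)$ is only bounded by $S(H_{i^*})/g_{i^*}$, not by $1$, so $\sum_{z\ge i^*}S(H_z)-M$ can exceed $g_{i^*}$ by a lot. (Also, your case list omits $i^*<z_0\le z_1<z^*$ and the subcase of Case C with $z_0>i^*$, and the auxiliary claim $M\ge g_{z^*}$ fails when $S(H_{z^*})=0$.)

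The fix, which is the paper's argument, is to keep each summand in the grouped form $\bigl(S(H_z)-\cn_{z^*}(z)g_z\bigr)\tfrac{r_z}{g_z}$ and exploit that the first factor is non-negative for every $z\ne z^*$ and non-positive only at $z=z^*$. Because the factors have a consistent sign termwise, you may replace every ratio $r_z/g_z$ by the maximal ratio $r_{z^*}/g_{z^*}$ to get an upper bound on the sum of the non-root terms, namely $\bigl(\sum_z (S(H_z)-\cn_{z^*}(z)g_z)\bigr)\tfrac{r_{z^*}}{g_{z^*}}$; and the factors over all of $T(z^*)\setminus\{z^*\}$ telescope to exactly $\cn_{z^*}(z^*)g_{z^*}-S(H_{z^*})=\ceil*{S(H_{z^*})/g_{z^*}}g_{z^*}-S(H_{z^*})<g_{z^*}$, so any sub-range contributes strictly less than $g_{z^*}\cdot\tfrac{r_{z^*}}{g_{z^*}}=r_{z^*}$. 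When $z_1=z^*$ the same substitution applied to the (non-positive) root factor gives the reverse inequality and yields both bounds at once. This closes the gap without any comparison at the intermediate ratios $r_{z_1}/g_{z_1}$ or $r_{i^*}/g_{i^*}$.
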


\begin{proposition} \label{pro:cn&r'_head}
For any $z_0\in{}T(z^*)$, we have
$\sum_{z\in{}T(z^*) \land z\geq{}z_0} \cn_{z^*}(z)  r_z
\geq{} \sum_{z\in{}T(z^*) \land z\geq{}z_0} S(H_z) \frac{r_z}{g_z}$. \end{proposition}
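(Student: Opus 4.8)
The plan is to split on the position of $z_0$ relative to the boundary index $i^*$ from \eqref{eq:boundary_machine_type1}--\eqref{eq:boundary_machine_type2}. Write $T(z^*)=\{i_1<i_2<\dots<i_n=z^*\}$. First I would dispose of the degenerate case in which no boundary index $i^*$ exists, which happens precisely when all of $\mathcal{J}^{1d}$ fits into the $\cn_{z^*}(z^*)=\ceil*{S(H_{z^*})/g_{z^*}}$ machines of type $z^*$, so that $\cn_{z^*}(i)=0$ for every $i\in T(z^*)\setminus\{z^*\}$; then the left-hand side equals $\cn_{z^*}(z^*)r_{z^*}$ for every $z_0\in T(z^*)$, and since $r_z/g_z\le r_{z^*}/g_{z^*}$ on $T(z^*)$ by Proposition~\ref{pro:T_ratio}, the right-hand side is at most $\frac{r_{z^*}}{g_{z^*}}\sum_{z\in T(z^*),\,z\ge z_0}S(H_z)\le\frac{r_{z^*}}{g_{z^*}}S(\mathcal{J}^{1d})\le\cn_{z^*}(z^*)r_{z^*}$. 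So from now on assume $i^*$ exists.

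For $z_0>i^*$: since $\cn_{z^*}$ places no machines on any type strictly between $i^*$ and $z^*$, the left-hand side collapses to $\cn_{z^*}(z^*)r_{z^*}$. By Proposition~\ref{pro:T_ratio} the right-hand side is at most $\frac{r_{z^*}}{g_{z^*}}\sum_{z\in T(z^*),\,z\ge z_0}S(H_z)$, and since $\{z\in T(z^*):z\ge z_0\}\subseteq\{z\in T(z^*):z>i^*\}$, inequality~\eqref{eq:boundary_machine_type1} bounds this last sum by $\cn_{z^*}(z^*)\,g_{z^*}$; multiplying through gives the claim. (The subcase $z_0=z^*$ is covered here, using only $\ceil*{x}\ge x$.)

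For $z_0\le i^*$: for every $z\in T(z^*)$ with $z<i^*$ the definition gives $\cn_{z^*}(z)=S(H_z)/g_z$, so $\cn_{z^*}(z)r_z-S(H_z)r_z/g_z=0$; hence the quantity ``left-hand side minus right-hand side'' is the same for $z_0$ as for $i^*$, and it suffices to treat $z_0=i^*$. There I would substitute $\cn_{z^*}(i^*)=\bigl(\sum_{z\in T(z^*),\,z\ge i^*}S(H_z)-\cn_{z^*}(z^*)g_{z^*}\bigr)/g_{i^*}$, split off the term $S(H_{i^*})$ from that sum so that it cancels the $S(H_{i^*})r_{i^*}/g_{i^*}$ coming from the right-hand side, bound $\sum_{z\in T(z^*),\,z>i^*}S(H_z)r_z/g_z\le\frac{r_{z^*}}{g_{z^*}}\sum_{z\in T(z^*),\,z>i^*}S(H_z)$ by Proposition~\ref{pro:T_ratio}, and regroup the remaining $r_{i^*}/g_{i^*}$ and $r_{z^*}/g_{z^*}$ terms. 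The difference then collapses to
\[
\Bigl(\cn_{z^*}(z^*)g_{z^*}-\!\!\sum_{z\in T(z^*),\,z>i^*}\!\!S(H_z)\Bigr)\Bigl(\tfrac{r_{z^*}}{g_{z^*}}-\tfrac{r_{i^*}}{g_{i^*}}\Bigr),
\]
a product of two nonnegative factors: the first by \eqref{eq:boundary_machine_type1}, the second by Proposition~\ref{pro:T_ratio} applied to $i^*<z^*$ in $T(z^*)$. This completes the argument.

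The only genuinely delicate step is the regrouping in the $z_0=i^*$ case: one must track carefully which terms carry the factor $r_{i^*}/g_{i^*}$ and which carry $r_{z^*}/g_{z^*}$, so that after cancelling $S(H_{i^*})r_{i^*}/g_{i^*}$ the leftover terms combine with the common nonnegative factor $\cn_{z^*}(z^*)g_{z^*}-\sum_{z\in T(z^*),\,z>i^*}S(H_z)$ and the monotonicity of the normalized cost rate is invoked in the correct direction. Everything else is either a direct application of \eqref{eq:boundary_machine_type1} and Proposition~\ref{pro:T_ratio} or routine bookkeeping of the degenerate configuration where $i^*$ fails to exist.
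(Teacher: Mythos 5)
Your proof is correct and follows essentially the same route as the paper's: split on $z_0>i^*$ versus $z_0\le i^*$, use Proposition \ref{pro:T_ratio} to replace each $r_z/g_z$ by $r_{z^*}/g_{z^*}$, and invoke inequality (\ref{eq:boundary_machine_type1}) to control the leftover capacity of the type-$z^*$ machines; your reduction of the second case to $z_0=i^*$ and the factored nonnegative residual $\bigl(\cn_{z^*}(z^*)g_{z^*}-\sum_{z\in T(z^*)\land z>i^*}S(H_z)\bigr)\bigl(\tfrac{r_{z^*}}{g_{z^*}}-\tfrac{r_{i^*}}{g_{i^*}}\bigr)$ are just a cleaner bookkeeping of the paper's add-and-subtract computation. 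The only addition is your explicit treatment of the configuration in which no boundary index $i^*$ exists, an edge case the paper leaves implicit.
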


In the machine configuration $\cn_{z^*}$, there may exist a $z^*$'s ancestor type $z^\triangle \in P(z^*)\setminus\{z^*\}$ such that the total cost of the machines in the tree rooted at $z^\triangle$ exceeds the cost of one type-$z^\triangle$ machine, i.e.,
\begin{equation}
    \sum_{z \in{T(z^*) \cap A_{z^\triangle}}} \cn_{z^*}(z) r_z > r_{z^\triangle}. \nonumber
\end{equation}
In this case, we say that the machine configuration $\cn_{z^*}$ is not {\em decent} in that the total cost can be reduced by replacing the machines in the tree rooted at $z^\triangle$ with one type-$z^\triangle$ machine. The {\em alternative machine configuration} is defined as the first decent configuration among $\cn_{z^*}$'s where $z^* \in P(k_0)$, i.e., $\cn_{z^\diamond}$ where $z^\diamond$ is the lowest-indexed type in $P(k_0)$ such that $\cn_{z^\diamond}$ is decent. By definition, the alternative machine configuration $\cn_{z^\diamond}$ has the following property:

\begin{proposition} \label{pro:alternative} In the alternative machine configuration $\cn_{z^\diamond}$, \\
(1) for each $z^\triangle \in P(z^\diamond)\setminus\{z^\diamond\}$, we have  $\sum_{z \in{T(z^\diamond) \cap A_{z^\triangle}}} \cn_{z^\diamond}(z) r_z \leq{} r_{z^\triangle}$; \\
(2) for each $z^*\in \{z\in P(k_0):z<z^\diamond\}$, there exists some $z^\triangle\in P(z^*)\setminus\{z^*\}$ such that $\sum_{z \in{T(z^*) \cap A_{z^\triangle}}} \cn_{z^*}(z) r_z > r_{z^\triangle}$.
\end{proposition}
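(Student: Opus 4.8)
The plan is to read off both statements directly from the definitions of \emph{decent}, \emph{not decent}, and of the alternative machine configuration, the only point needing a short argument being the well-definedness of $z^\diamond$. To that end I would first note that $P(k_0)=\{k_0,p(k_0),p(p(k_0)),\dots\}$ is a strictly increasing sequence of indices, since $p(i)>i$ whenever $p(i)$ exists; as $\mathcal{M}$ is finite, this chain terminates at a node $\rho$ having no parent, so $P(\rho)=\{\rho\}$ and hence $P(\rho)\setminus\{\rho\}=\emptyset$. The condition that makes $\cn_{\rho}$ \emph{not decent} demands the existence of some $z^\triangle\in P(\rho)\setminus\{\rho\}$ with $\sum_{z\in T(\rho)\cap A_{z^\triangle}}\cn_{\rho}(z)\,r_z>r_{z^\triangle}$, which is vacuously impossible over the empty set; thus $\cn_{\rho}$ is decent, the set $\{z^*\in P(k_0):\cn_{z^*}\text{ is decent}\}$ is non-empty, and its lowest-indexed element $z^\diamond$ (with $z^\diamond\le\rho$) is well defined.

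Part (1) is then immediate: by the choice of $z^\diamond$ the configuration $\cn_{z^\diamond}$ is decent, and by the definition of decency this says exactly that no ancestor $z^\triangle\in P(z^\diamond)\setminus\{z^\diamond\}$ satisfies $\sum_{z\in T(z^\diamond)\cap A_{z^\triangle}}\cn_{z^\diamond}(z)\,r_z>r_{z^\triangle}$; equivalently, $\sum_{z\in T(z^\diamond)\cap A_{z^\triangle}}\cn_{z^\diamond}(z)\,r_z\le r_{z^\triangle}$ for every such $z^\triangle$, which is the claim.

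For part (2), fix any $z^*\in P(k_0)$ with $z^*<z^\diamond$. Since $z^\diamond$ is the lowest-indexed element of $P(k_0)$ whose configuration is decent, $\cn_{z^*}$ is not decent; unwinding that definition yields some $z^\triangle\in P(z^*)\setminus\{z^*\}$ with $\sum_{z\in T(z^*)\cap A_{z^\triangle}}\cn_{z^*}(z)\,r_z>r_{z^\triangle}$, as required. Since every step merely unfolds a definition, there is no genuine obstacle here; the only real ingredient is the existence of $z^\diamond$, which rests on the chain $P(k_0)$ being finite with a parentless node at its top, whose configuration is therefore vacuously decent.
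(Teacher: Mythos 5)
Your proposal is correct and matches the paper's treatment: the paper states this proposition follows ``by definition'' of the decent/not-decent dichotomy and the choice of $z^\diamond$ as the lowest-indexed decent configuration in $P(k_0)$, and gives no separate proof. Your extra observation that $z^\diamond$ is well defined because $P(k_0)$ is a finite strictly increasing chain whose parentless top element is vacuously decent is a valid filling-in of a point the paper leaves implicit.
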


Next, we prove that the alternative machine configuration $\cn_{z^\diamond}$ is an $O(1)$ approximation of the optimal machine configuration.

\begin{theorem} \label{thm:cn_app}
For any $\mathcal{J}^{1d}$, let $w^*$ be an optimal machine configuration and $\cn_{z^\diamond}$ be the alternative machine configuration. We have  $\frac{7}{15}\cdot{} \sum_{z\in{}T(z^\diamond)} \cn_{z^\diamond}(z) r_z \leq{} \optone (\mathcal{J}^{1d}) = \sum_{z\in \mathcal{M}} w^*(z) r_z \leq{} \frac{8}{7}\cdot{} \sum_{z\in{}T(z^\diamond)} \cn_{z^\diamond}(z) r_z$. \end{theorem}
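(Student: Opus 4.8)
The equality $\optone(\mathcal{J}^{1d})=\sum_{z\in\mathcal{M}}w^*(z)r_z$ is just the optimality of $w^*$, so the two inequalities are what needs work; write $C:=\sum_{z\in T(z^\diamond)}\cn_{z^\diamond}(z)r_z$ and $\mathrm{cost}(w):=\sum_z w(z)r_z$ for a configuration $w$. For the upper bound I would round $\cn_{z^\diamond}$ up entrywise to an integral configuration $w'$ and control the overhead. The configuration $\cn_{z^\diamond}$ is fractionally feasible for (\ref{opt:1d}) because every job $J$ is placed on machines of type index at least $m(J)$, so all prefix-capacity constraints hold and rounding up preserves them. The only positive non-integral entries of $\cn_{z^\diamond}$ sit at types of $T(z^\diamond)$ of index at most the boundary type $i^*$ (everything higher is $0$ or the integer $\cn_{z^\diamond}(z^\diamond)$), so the rounding overhead is less than $\sum_{z\in T(z^\diamond),\,z\le i^*}r_z$; since these indices are distinct and the rates are powers of $8$, this is less than $\tfrac87 r_{i^*}\le\tfrac87\cdot\tfrac{r_{z^\diamond}}{8}=\tfrac{r_{z^\diamond}}{7}$, and as $\cn_{z^\diamond}(z^\diamond)\ge 1$ (because $H_{z^\diamond}\supseteq H_{k_0}\ne\emptyset$) we have $C\ge r_{z^\diamond}$, whence $\optone(\mathcal{J}^{1d})\le\mathrm{cost}(w')<C+\tfrac17 C=\tfrac87 C$. (If $i^*$ does not exist, $\cn_{z^\diamond}$ is already integral and feasible and $\optone(\mathcal{J}^{1d})\le C$.)

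For the lower bound, let $k_{opt}$ be the highest-indexed type used by $w^*$, so $k_{opt}\in P(k_0)$ by Theorem \ref{thm:optmachine}(1). I would first record two bounds on $\mathrm{cost}(w^*)$: trivially $\mathrm{cost}(w^*)\ge r_{k_{opt}}$; and a workload bound $\mathrm{cost}(w^*)\ge W_{k_{opt}}:=\sum_{z\in T(k_{opt})}S(H_z)\tfrac{r_z}{g_z}$ obtained by LP duality — since $w^*$ uses no type above $k_{opt}$, its restriction to types $\le k_{opt}$ is feasible for the truncation of (\ref{opt:1d}) to $\{1,\dots,k_{opt}\}$, and assigning dual weight $\tfrac{r_{i_q}}{g_{i_q}}-\tfrac{r_{i_{q-1}}}{g_{i_{q-1}}}$ to index $v(i_q)$ for $T(k_{opt})=\{i_1<\dots<i_n\}$ is dual-feasible by Propositions \ref{pro:consecutive} and \ref{pro:T_ratio}, with objective $W_{k_{opt}}$ by Proposition \ref{pro:T_union} and Abel summation. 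I would also note the monotonicity $W_{z'}\ge W_{z''}$ for $z'\le z''$ in $P(k_0)$ (passing from the subtree partition of $\{1,\dots,z'\}$ to the coarser one of $\{1,\dots,z''\}$ only lowers the per-capacity rates charged, using Proposition \ref{pro:T_two}), and from Propositions \ref{pro:cn&r'_total} and \ref{pro:cn&r'_head} that $W_{z^\diamond}\le C<W_{z^\diamond}+r_{z^\diamond}$.

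Then I would split on $k_{opt}$ versus $z^\diamond$. If $k_{opt}=z^\diamond$: $\mathrm{cost}(w^*)\ge\max(W_{z^\diamond},r_{z^\diamond})>\tfrac12 C$. If $k_{opt}<z^\diamond$: then $z^\diamond>k_0$, and for $z^\natural$ the predecessor of $z^\diamond$ in $P(k_0)$ the configuration $\cn_{z^\natural}$ is not decent; Proposition \ref{pro:alternative}(2) gives an ancestor $z^\triangle\ge z^\diamond$ of $z^\natural$, and Proposition \ref{pro:cn&r'_total} together with the power-of-$8$ gap between $r_{z^\natural}$ and $r_{z^\triangle}$ forces $W_{z^\natural}>\tfrac78 r_{z^\triangle}\ge\tfrac78 r_{z^\diamond}$; combined with monotonicity ($k_{opt}\le z^\natural<z^\diamond$) this gives $\mathrm{cost}(w^*)\ge\max(W_{z^\diamond},\tfrac78 r_{z^\diamond})$, which with $C<W_{z^\diamond}+r_{z^\diamond}$ yields $\mathrm{cost}(w^*)>\tfrac{7}{15}C$. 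If $k_{opt}>z^\diamond$: then $k_{opt}$ is an ancestor of $z^\diamond$, so decency of $\cn_{z^\diamond}$ (Proposition \ref{pro:alternative}(1)) with Proposition \ref{pro:cn&r'_head} bounds by $r_{k_{opt}}$ the part of $W_{z^\diamond}$ coming from jobs whose exact type lies in $A_{k_{opt}}$, and since the remaining subtrees coincide in the $z^\diamond$- and $k_{opt}$-partitions (Proposition \ref{pro:T_two}) we get $W_{z^\diamond}\le W_{k_{opt}}+r_{k_{opt}}\le 2\,\mathrm{cost}(w^*)$; with $r_{z^\diamond}\le r_{k_{opt}}/8\le\mathrm{cost}(w^*)/8$ and $C<W_{z^\diamond}+r_{z^\diamond}$ this gives $C<\tfrac{17}{8}\,\mathrm{cost}(w^*)$, that is, $\mathrm{cost}(w^*)>\tfrac{8}{17}C\ge\tfrac{7}{15}C$.

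The hardest part is this last case. It is where one must bound the gap between the rigidly structured $\cn_{z^\diamond}$ — which can be genuinely suboptimal, since it refuses to promote jobs above type $z^\diamond$ — and an optimal $w^*$ that may push jobs all the way up to some $k_{opt}$ much larger than $z^\diamond$; the decency of $\cn_{z^\diamond}$ (the very property defining $z^\diamond$) and the subtree bookkeeping of Propositions \ref{pro:T_two}, \ref{pro:cn&r'_total} and \ref{pro:cn&r'_head} are what make this work, and it is essentially this case that pins down the constant $\tfrac{7}{15}$.
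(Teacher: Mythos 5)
Your proposal is correct and follows essentially the same route as the paper's proof: the same entrywise round-up argument with the power-of-$8$ geometric sum for the $\tfrac{8}{7}$ upper bound, and the same three-way case split on $k_{opt}$ versus $z^\diamond$ for the lower bound, invoking Proposition \ref{pro:alternative}(2) on the predecessor of $z^\diamond$ in $P(k_0)$ when $k_{opt}<z^\diamond$ and Proposition \ref{pro:alternative}(1) when $k_{opt}>z^\diamond$. Your packaging of the workload bound as a single dual-feasible certificate $W_{k_{opt}}$ together with its monotonicity along $P(k_0)$ is just a cleaner reorganization of the paper's Lemma \ref{lem:feasibility} arguments (and lets you avoid the floor/ceiling subcases of the paper's Case 1), not a different method.
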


\begin{proof} [Sketch of proof]

For the left inequality, we are to show that $\sum_{z\in{}T(z^\diamond)}{\cn_{z^\diamond}(z) r_z}
\leq{} \frac{15}{7} \cdot{} \sum_{z\in \mathcal{M}} w^*(z) r_z$.
Let $k_0=\max \{m(J):J\in{}\mathcal{J}^{1d}\}$ be the highest-indexed exact machine type. Denote by $k_{opt} := \max \{z:w^*(z)>0\}$ the highest-indexed machine type used by $w^*$. By the definition of $z^\diamond$ and the choice of $w^*$, both $z^\diamond$ and $k_{opt}$ are in $P(k_0)$. We analyze three cases separately: $z^\diamond = k_{opt}$, $z^\diamond < k_{opt}$, and $z^\diamond > k_{opt}$. For $z^\diamond = k_{opt}$, by Theorem \ref{thm:optmachine}, we study the subcases of $w^*(k_{opt}) = \floor*{\frac{S(H_{k_{opt}})}{g_{k_{opt}}}}$ and $w^*(k_{opt}) = \ceil*{\frac{S(H_{k_{opt}})}{g_{k_{opt}}}}$. The analysis makes use of Proposition \ref{pro:T_ratio}, Proposition \ref{pro:cn&r'_total}, and Proposition \ref{pro:alternative}. The details are given in the appendices.

For the right inequality, consider the machine configuration $w'$ naturally induced by the definition of $\cn_{z^\diamond}$: $w'(z):= \ceil*{\cn_{z^\diamond}(z)}$ for each $z \in \mathcal{M}$.
It is not hard to see that $w'$ is a feasible solution to the optimization problem (\ref{opt:1d}). On the other hand,
\begin{eqnarray}
\sum_{z=\mathcal{M}} w'(z) r_z
&=& \sum_{z\in{}T(z^\diamond)\setminus{}\{z^\diamond\}} w'(z) r_z + \cn_{z^\diamond} (z^\diamond) r_{z^\diamond} \nonumber \\
&\leq{}& \sum_{z\in{}T(z^\diamond)\setminus{}\{z^\diamond\}} (\cn_{z^\diamond}(z) r_z+r_z) + \cn_{z^\diamond} (z^\diamond) r_{z^\diamond} \nonumber \\
&\leq{}& \sum_{z\in{}T(z^\diamond)\setminus{}\{z^\diamond\}} r_z + \sum_{z\in T(z^\diamond)} \cn_{z^\diamond}(z) r_z \nonumber \\
&\leq{}& \frac{1}{7} r_{z^\diamond} + \sum_{z\in T(z^\diamond)} \cn_{z^\diamond}(z) r_z \nonumber \\
&\leq{}& \frac{8}{7}\cdot{} \sum_{z\in T(z^\diamond)} \cn_{z^\diamond}(z) r_z. \nonumber
\end{eqnarray}
Since $w^*$ is an optimal solution to the optimization problem (\ref{opt:1d}), we have $\sum_{z \in \mathcal{M}} w^*(z) r_z \leq{} \sum_{z \in \mathcal{M}} w'(z) r_z\leq{}\frac{8}{7}\cdot{}\sum_{z\in T(z^\diamond)} \cn_{z^\diamond}(z) r_z$.
\end{proof}

\subsection{$ALG_{offline}$ achieves $O(1)$ approximation}
\label{sec:offline_analysis}
Now we show that $ALG_{offline}$ is an $O(1)$-approximation algorithm.
Recall that $ALG_{offline}$ partitions all the jobs $\mathcal{J}$ into $K_z$'s for $z \in \mathcal{M}$.
The jobs in each $K_z$ are scheduled onto type-$z$ machines by the dual coloring algorithm. Take any time instant $t\in{}\sspan(\mathcal{J})$. The dual coloring algorithm \cite{ren2016clairvoyant} guarantees that the total cost rate of type-$z$ machines used at time $t$ is bounded by $4\cdot{}\ceil{S(K_z,t)/g_z} r_z$, where $S(K_z,t)$ is the total size of the active jobs in $K_z$ at time $t$.
Let $\cn_{z^\diamond}$ be the alternative machine configuration for the set of active jobs $\mathcal{J}(t)$ at time $t$. The following theorem shows that the total cost rate of all the machines used by $ALG_{offline}$ at time $t$ is bounded by $O(1)$ times the cost of the alternative machine configuration.

\begin{theorem} \label{thm:off_app}
$\sum_{z \in \mathcal{M}} \ceil*{S(K_z,t)/g_z} r_z \leq{} 21\cdot{}  \sum_{z\in{}T(z^\diamond)} \cn_{z^\diamond}(z) r_z$.
\end{theorem}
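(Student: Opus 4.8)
The plan is to bound the left side $\Psi:=\sum_{z\in\mathcal{M}}\ceil*{S(K_z,t)/g_z}r_z$ against $\Phi:=\sum_{z\in T(z^\diamond)}\cn_{z^\diamond}(z)r_z$. First I would concentrate $\Psi$ onto the types $T(k_{off})$: since $K_z(t)=\emptyset$ for every $z>k_{off}$ and, by Proposition~\ref{pro:T_union}, $\{A_y:y\in T(k_{off})\}$ partitions $\{1,\dots,k_{off}\}$, one may write $\Psi=\sum_{y\in T(k_{off})}\sum_{z\in A_y}\ceil*{S(K_z,t)/g_z}r_z$, and applying Property~\ref{property:off_upperb} inside each subtree $A_y$ gives $\Psi\le\Psi'+2\sum_{y\in T(k_{off})}r_y$ with $\Psi':=\sum_{y\in T(k_{off})}\ceil*{S(K_y,t)/g_y}r_y$; as the $r_z$ are distinct powers of $8$, $\sum_{y\in T(k_{off})}r_y\le\frac{8}{7}r_{k_{off}}$. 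So it suffices to bound $\Psi'$ and $r_{k_{off}}$ by $O(1)\cdot\Phi$. The workhorses will be: $S(H_z)=S(R_z,t)$; the fact that rounding $S(K_y,t)/g_y$ up adds at most one unit, so $\ceil*{S(K_y,t)/g_y}r_y\le S(K_y,t)\frac{r_y}{g_y}+r_y$; that $r/g$ is non-increasing along parent edges, so $\frac{r_n}{g_n}\le\frac{r_w}{g_w}$ for $w\in A_n$; that every element of $T(z^\diamond)\setminus\{z^\diamond\}$ has index $<z^\diamond$, hence $\sum_{y\in T(z^\diamond)\setminus\{z^\diamond\}}r_y\le\frac{1}{7}r_{z^\diamond}$ and $r_{z^\diamond}\le\cn_{z^\diamond}(z^\diamond)r_{z^\diamond}\le\Phi$; and Proposition~\ref{pro:cn&r'_head} in the form $\sum_{z\in T(z^\diamond)}S(H_z)\frac{r_z}{g_z}\le\Phi$.

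Next I would split into cases by the position of $k_{off}$ relative to $z^\diamond$: by Property~\ref{property:off_head} and the definition of $z^\diamond$, both lie on the chain $P(k_0)$, with $k_0:=\max\{m(J):J\in\mathcal{J}(t)\}$, so either $k_{off}=z^\diamond$, $k_{off}<z^\diamond$, or $k_{off}>z^\diamond$. For $k_{off}=z^\diamond$ we have $T(k_{off})=T(z^\diamond)$, and I would bound the root term by $3\ceil*{S(K_{z^\diamond},t)/g_{z^\diamond}}r_{z^\diamond}\le 3\cn_{z^\diamond}(z^\diamond)r_{z^\diamond}\le 3\Phi$ (using $K_{z^\diamond}(t)\neq\emptyset$, $K_{z^\diamond}\subseteq R_{z^\diamond}$ and $\cn_{z^\diamond}(z^\diamond)=\ceil*{S(H_{z^\diamond})/g_{z^\diamond}}$) and the remaining terms by the rounding inequality, Proposition~\ref{pro:cn&r'_head} and the geometric estimate. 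For $k_{off}<z^\diamond$, writing the path from $k_{off}$ up to $z^\diamond$ as $k_{off}=u_0,u_1,\dots,u_\ell=z^\diamond$, one checks $T(k_{off})\subseteq\bigl(\{k_{off}\}\cup\bigcup_{0\le i<\ell}y(u_i)\bigr)\cup\bigl(T(z^\diamond)\setminus\{z^\diamond\}\bigr)$ with every type in the first bracket lying in $A_{z^\diamond}\setminus\{z^\diamond\}$; then Property~\ref{property:off_upperb} at $z^\diamond$ bounds the first bracket's contribution to $\Psi'$ by $2r_{z^\diamond}$ and the second by $\Phi+\frac{1}{7}r_{z^\diamond}$, and since also $r_{k_{off}}\le\frac{1}{8}r_{z^\diamond}\le\Phi$, both of these cases land comfortably below $21\Phi$.

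The hard case, and the one I expect to carry the real work, is $k_{off}>z^\diamond$. Here $k_0\le z^\diamond<k_{off}$, so no active job has exact type $k_{off}$, i.e. $\mathcal{R}^h_{k_{off}}(t)=\emptyset$; combined with $K_{k_{off}}(t)\neq\emptyset$ and $K_i(t)=\emptyset$ for $i\in P(k_{off})\setminus\{k_{off}\}$, Property~\ref{property:off_lowerb} applies at $z=k_{off}$ and gives $\sum_{x\in f(k_{off})}S(R_x,t)\frac{r_x}{g_x}\ge\frac{4}{21}r_{k_{off}}$. The pivotal sub-step is to show that this left side is itself at most $\Phi$, which then yields $r_{k_{off}}\le\frac{21}{4}\Phi$. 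I would do this by letting $x^\diamond$ be the child of $k_{off}$ with $z^\diamond\in A_{x^\diamond}$ (so $x^\diamond\in P(z^\diamond)$ and $z^\diamond\le x^\diamond<k_{off}$): children $x>x^\diamond$ satisfy $R_x(t)=\emptyset$ because the indices in $A_x$ all exceed $k_0$; children $x<x^\diamond$ are younger siblings of $x^\diamond$ and hence lie in $T(z^\diamond)$; and for $x^\diamond$ itself, since active jobs have exact type $\le z^\diamond$, Propositions~\ref{pro:T_union} and~\ref{pro:T_two} give $H_{x^\diamond}=\bigcup_{w\in T(z^\diamond)\cap A_{x^\diamond}}H_w$, so by the ratio fact $S(R_{x^\diamond},t)\frac{r_{x^\diamond}}{g_{x^\diamond}}\le\sum_{w\in T(z^\diamond)\cap A_{x^\diamond}}S(H_w)\frac{r_w}{g_w}$. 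Summing over $f(k_{off})$ — the index sets $\{x\in f(k_{off}):x<x^\diamond\}$ and $T(z^\diamond)\cap A_{x^\diamond}$ being disjoint subsets of $T(z^\diamond)$ — Proposition~\ref{pro:cn&r'_head} bounds the total by $\Phi$. The same child-subtree picture, with $r_{k_{off}}/g_{k_{off}}<r_x/g_x$ for $x\in f(k_{off})$ and $K_{k_{off}}(t)\subseteq\bigcup_{x\in f(k_{off})}R_x(t)$, also gives $S(K_{k_{off}},t)\frac{r_{k_{off}}}{g_{k_{off}}}\le\Phi$, so the root term of $\Psi'$ is $\le S(K_{k_{off}},t)\frac{r_{k_{off}}}{g_{k_{off}}}+r_{k_{off}}\le\Phi+\frac{21}{4}\Phi$; the remaining terms of $\Psi'$ run over $T(k_{off})\setminus\{k_{off}\}\subseteq T(z^\diamond)\setminus\{z^\diamond\}$ and contribute $\le\Phi+\frac{1}{7}r_{z^\diamond}$; and finally $\Psi\le\Psi'+2\sum_{y\in T(k_{off})}r_y$ with $\sum_{y\in T(k_{off})}r_y\le r_{k_{off}}+\frac{1}{7}r_{z^\diamond}$ closes the case at $21\Phi$.

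So the main obstacle is this third case: locating the children-subtrees of $k_{off}$ inside the partition $\{A_w:w\in T(z^\diamond)\}$ of $\{1,\dots,z^\diamond\}$ (using Propositions~\ref{pro:consecutive}, \ref{pro:T_union} and~\ref{pro:T_two}) so as to convert the purely combinatorial lower bound in Property~\ref{property:off_lowerb} into the numerical estimate $r_{k_{off}}=O(\Phi)$, while keeping the various index sets disjoint so that no unit of cost is double-charged; the constant $21$ then propagates straight from the $\frac{4}{21}$ in Property~\ref{property:off_lowerb}.
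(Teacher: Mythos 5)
Your proposal is correct and follows essentially the same route as the paper's proof: reduce to the root terms over $T(k_{off})$ via Property~\ref{property:off_upperb}, compare $S(R_z,t)\frac{r_z}{g_z}$ sums to $\sum_{z\in T(z^\diamond)}\cn_{z^\diamond}(z)r_z$ via Proposition~\ref{pro:cn&r'_head}, and in the critical case $k_{off}>z^\diamond$ convert Property~\ref{property:off_lowerb} into $r_{k_{off}}\le\frac{21}{4}\sum_{z\in T(z^\diamond)}\cn_{z^\diamond}(z)r_z$ by locating the children subtrees of $k_{off}$ inside the partition induced by $T(z^\diamond)$. The only noteworthy deviation is that you handle the root term $\ceil*{S(K_{k_{off}},t)/g_{k_{off}}}r_{k_{off}}$ by the rounding inequality plus a direct bound on $S(K_{k_{off}},t)\frac{r_{k_{off}}}{g_{k_{off}}}$, whereas the paper invokes Proposition~\ref{pro:alternative}(1) to show $S(R_{k_{off}},t)<g_{k_{off}}$; both yield constants within the stated bound of $21$.
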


\begin{proof}
Let $k_{off}:=\max \{z:K_z(t)\neq{}\emptyset{}\}$ be the highest-indexed machine type used by $ALG_{offline}$ at time $t$. Let $k_0:=\max \{m(J):J\in\mathcal{J}(t)\}$ be the highest-indexed exact machine type among the active jobs at time $t$. Note that both $k_{off}$ and $z^\diamond$ are in $P(k_0)$ by Property \ref{property:off_head} and the definition of the alternative machine configuration for $\mathcal{J}(t)$.

Case 1: $z^\diamond\geq{}k_{off}$.

Note that $\{1,2,\ldots{},k_{off}\}\subset{}\{1,2,\ldots{},z^\diamond\} = \cup_{z\in{}T(z^\diamond)} A_z$, where the equality is by Proposition $\ref{pro:T_union}$. Also note that $R_{z^\diamond} (t)\neq{}\emptyset{}$ since $k_0\in{}A_{z^\diamond}$, where $R_z = \{J \in \mathcal{J}: m(J) \in A_z\}$ is the set of jobs whose exact machine types are in $A_z$.

Step 1:
For each $z\in{}T(z^\diamond)$, if $R_z(t) \neq \emptyset$,
we have
\begin{equation} \nonumber
\begin{aligned}
&\quad \sum_{i\in{}A_z} \ceil*{S(K_i,t)/g_i}\cdot r_i \\
&= \ceil*{S(K_z,t)/g_z}\cdot r_z + \sum_{i\in{}A_z\setminus\{z\}} \ceil*{S(K_i,t)/g_i}\cdot r_i \\
&\leq{} (\ceil*{S(K_z,t)/g_z} + 2) \cdot r_z \quad \text{(by~Property~\ref{property:off_upperb})} \\
&\leq{} 3\cdot{} \ceil*{S(R_z,t)/g_z}\cdot r_z. \quad \text{(since~} K_z(t)\subset{}R_z(t) \text{)}\\
\end{aligned}
\end{equation}
If $R_z(t) = \emptyset$, by Property \ref{property:off_head}, we have $K_i(t) = R_i(t) = \emptyset$ for each $i \in A_z$. Thus, it also holds that
\begin{equation} \nonumber
\sum_{i\in{}A_z} \ceil*{S(K_i,t)/g_i}\cdot r_i
\leq{} 3\cdot{} \ceil*{S(R_z,t)/g_z}\cdot r_z. \end{equation}
Therefore,
\begin{eqnarray}
\sum_{z=1,2,\ldots{},k_{off}} \ceil*{S(K_z,t)/g_z}\cdot r_z
&=& \sum_{z\in{}T(z^\diamond)} \sum_{i\in{}A_z} \ceil*{S(K_i,t)/g_i}\cdot r_i \nonumber \\
&\leq{}& 3\cdot{} \sum_{z\in{}T(z^\diamond)} \ceil*{S(R_z,t)/g_z}\cdot r_z. \label{eq:off_1_part1}
\end{eqnarray}

Step 2:
By the definition of the alternative machine configuration for $\mathcal{J}(t)$, $\cn_{z^\diamond} (z^\diamond) = \ceil*{S(R_{z^\diamond},t)/g_{z^\diamond}}$. Since the cost rate of each machine type is a power of $8$, we have
$\sum_{z\in{}T(z^\diamond)\setminus\{z^\diamond\}} r_z \leq{} \frac{1}{7} r_{z^\diamond}$.
\begin{eqnarray}
& & \sum_{z\in{}T(z^\diamond)} \ceil*{S(R_z,t)/g_z}\cdot r_z \nonumber \\
&\leq{}& \ceil*{S(R_{z^\diamond},t)/g_{z^\diamond}}\cdot r_{z^\diamond}+ \sum_{z\in{}T(z^\diamond)\setminus\{z^\diamond\}} \bigg(r_z + S(R_z,t)\cdot \frac{r_z}{g_z}\bigg) \nonumber \\
&\leq{}& \cn_{z^\diamond} (z^\diamond)\cdot r_{z^\diamond} +\frac{1}{7}\cdot r_{z^\diamond} + \sum_{z\in{}T(z^\diamond)\setminus\{z^\diamond\}} S(R_z,t)  \cdot \frac{r_z}{g_z} \nonumber \\
&\leq{}& \cn_{z^\diamond} (z^\diamond)\cdot r_{z^\diamond} +\frac{1}{7}\cdot r_{z^\diamond} + \sum_{z\in{}T(z^\diamond)} S(R_z,t)  \cdot \frac{r_z}{g_z} \nonumber \\
&\leq{} & \frac{8}{7}\cdot{} \cn_{z^\diamond} (z^\diamond)\cdot r_{z^\diamond} + \sum_{z\in T(z^\diamond)} \cn_{z^\diamond}(z) \cdot r_z \quad \text{(by~Proposition~\ref{pro:cn&r'_head})} \nonumber \\
&\leq{}& \frac{15}{7} \cdot{} \sum_{z\in T(z^\diamond)} \cn_{z^\diamond}(z) \cdot r_z.
\nonumber
\end{eqnarray}

In summary of steps 1 and 2,
$\sum_{z=1,2,\ldots{},k_{off}} \ceil*{S(K_z,t)/g_z} r_z \leq{} 3\cdot{}\sum_{z\in{}T(z^\diamond)} \ceil*{S(R_z,t)/g_z} r_z \leq{} \frac{45}{7}\cdot{} \sum_{z\in T(z^\diamond)} \cn_{z^\diamond}(z) r_z$.

Case 2: $k_{off} > z^\diamond$.

Step 1: By similar arguments to equation (\ref{eq:off_1_part1}), we have
\begin{equation} \label{eq:off_2_part1}
\begin{aligned}
&\quad \sum_{z=1,2,\ldots{},k_{off}} \ceil*{S(K_z,t)/g_z}\cdot r_z\\
&\leq{} 3\cdot{}\sum_{z\in{}T(k_{off})} \ceil*{S(R_z,t)/g_z}\cdot r_z \\
&\leq{} 3\cdot{} \sum_{z\in{}T(k_{off})\setminus{}\{k_{off}\}} \left(S(R_z,t)\cdot \frac{r_z}{g_z} + r_z\right) + 3\cdot
\ceil*{\frac{S(R_{k_{off}},t)}{g_{k_{off}}}}\cdot r_{k_{off}}\\
&\leq{} 3\cdot{} \sum_{z\in{}T(k_{off} )\setminus{}\{k_{off}\}} S(R_z,t)\cdot \frac{r_z}{g_z} + \frac{24}{7}\cdot{}
\ceil*{\frac{S(R_{k_{off}},t)}{g_{k_{off}}}}\cdot r_{k_{off}}.\\
\end{aligned}
\end{equation}

Step 2: Since $k_{off}>z^\diamond$ and both $k_{off}$ and $z^\diamond$ are in $P(k_0)$, $T(k_{off})\setminus{}\{k_{off}\}\subset T(z^\diamond)$ holds by Proposition \ref{pro:T_two}. We have
\begin{eqnarray}
& & \sum_{z\in{}T(k_{off})\setminus{}\{k_{off}\}} S(R_z,t)\cdot \frac{r_z}{g_z} \nonumber \\
&\leq{}& \sum_{z\in{}T(z^\diamond)} S(R_z,t)\cdot{}\frac{r_z}{g_z} \nonumber \\
&\leq{}& \sum_{z\in{}T(z^\diamond)} \cn_{z^\diamond} (z) \cdot{} r_z. \quad \text{(by~Proposition~\ref{pro:cn&r'_head})} \label{eq:off_2_part2}
\end{eqnarray}

Step 3: By claim (1) of Proposition \ref{pro:alternative} and the condition ${k_{off} > z^\diamond}$, $\sum_{z\in{}T(z^\diamond)\cap{}A_{k_{off}}} \cn_{z^\diamond}(z) r_z \leq{} r_{k_{off}}$.
By Propositions \ref{pro:T_two} and \ref{pro:cn&r'_head}, we have $\sum_{z\in{}T(z^\diamond)\cap{}A_{k_{off}}} S(R_z,t)\cdot{} \frac{r_z}{g_z} \leq{} \sum_{z\in{}T(z^\diamond)\cap{}A_{k_{off}}} \cn_{z^\diamond}(z) r_z$.
Furthermore, $S(R_{k_{off}},t)\cdot{}\frac{r_{k_{off}}}{g_{k_{off}}} < \sum_{z\in{}T(z^\diamond)\cap{}A_{k_{off}}} S(R_z,t)\cdot{} \frac{r_z}{g_z}$ because $R_{k_{off}}(t) = \cup_{z\in{}T(z^\diamond)\cap{}A_{k_{off}}} R_z (t)$ and $r_{k_{off}}/g_{k_{off}} < r_z/g_z$ for each $z\in{}T(z^\diamond)\cap{}A_{k_{off}}$.
After combining all the pieces stated above, we have $S(R_{k_{off}},t)\cdot{}\frac{r_{k_{off}}}{g_{k_{off}}} < r_{k_{off}}$. Therefore,
\begin{equation} \label{eq:off_2_part3}
S(R_{k_{off}},t)<g_{k_{off}}.
\end{equation}

Step 4: Now we give an upper bound to $r_{k_{off}}$. Since ${k_{off} > z^\diamond}$ and $z^\diamond \in P(k_0)$, it follows that $k_{off} > k_0$, which implies that $\{J\in{}R_{k_{off}}(t):m(J)=k_{off}\}=\emptyset{}$. By Property \ref{property:off_lowerb}, $\frac{4}{21}\cdot{}r_{k_{off}} \leq{} \sum_{x\in{}f(k_{off})} S(R_x,t)\cdot \frac{r_x}{g_x}$. Furthermore,
\begin{equation*}
\begin{aligned}
\sum_{x\in{}f(k_{off})} S(R_x,t)\cdot \frac{r_x}{g_x}
&\leq{} \sum_{z\in{}T(z^\diamond) \cap{} A_{k_{off}}} S(R_z,t)\cdot \frac{r_z}{g_z}\\
&\leq{} \sum_{z\in{}T(z^\diamond) \cap{} A_{k_{off}}} \cn_{z^\diamond}(z)\cdot r_z\\
&\leq{} \sum_{z\in T(z^\diamond)} \cn_{z^\diamond}(z)\cdot r_z,\\
\end{aligned}
\end{equation*}
where the first inequality is because for each $z\in{}T(v^\diamond)\cap{}A_{k_{off}}$, $z\in{}A_x$ and hence $R_z\subset{}R_x$ for some $x\in{}f(k_{off})$, and the second inequality is due to Proposition \ref{pro:cn&r'_head}. Therefore, we have
\begin{equation} \label{eq:off_2_part4}
r_{k_{off}}\leq{} \frac{21}{4}\cdot{} \sum_{z\in T(z^\diamond)} \cn_{z^\diamond}(z)\cdot r_z.
\end{equation}

In summary of the four steps above,
\begin{equation*}
\begin{aligned}
&\quad \sum_{z=1,2,\ldots{},k_{off}} \ceil*{S(K_z,t)/g_z}\cdot r_z\\
&\leq{} 3\cdot{} \sum_{z\in{}T(k_{off})\setminus{}\{k_{off}\}} S(R_z,t)\cdot \frac{r_z}{g_z} + \frac{24}{7} \cdot{} r_{k_{off}}\\
&\quad \text{~(by~equations~(\ref{eq:off_2_part1}),~(\ref{eq:off_2_part3}))}\\
&\leq{} 3\cdot{} \sum_{z\in T(z^\diamond)} \cn_{z^\diamond}(z)\cdot r_z + \frac{24}{7}\cdot{}\frac{21}{4}\cdot{} \sum_{z\in T(z^\diamond)} \cn_{z^\diamond}(z)\cdot r_z \\
&\quad \text{~(by~equations~(\ref{eq:off_2_part2}),~(\ref{eq:off_2_part4}))}\\
&= 21\cdot{} \sum_{z\in T(z^\diamond)} \cn_{z^\diamond}(z)\cdot r_z.\\
\end{aligned}
\end{equation*}
\end{proof}

By Theorem \ref{thm:cn_app}, we have the following corollary.

\begin{corollary} \label{cor:off_app}
$\sum_{z \in \mathcal{M}} \ceil{S(K_z,t)/g_z} r_z \leq{} 45 \cdot{}  \optone (\mathcal{J}(t))$.
\end{corollary}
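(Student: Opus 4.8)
The plan is to obtain Corollary~\ref{cor:off_app} by directly composing the two bounds just established, Theorem~\ref{thm:off_app} and Theorem~\ref{thm:cn_app}, both applied to the same one-shot instance $\mathcal{J}(t)$ and to the same alternative machine configuration $\cn_{z^\diamond}$ constructed for $\mathcal{J}(t)$ at time $t$. In other words, the quantity $\sum_{z\in T(z^\diamond)} \cn_{z^\diamond}(z)\, r_z$ serves as the common intermediary: Theorem~\ref{thm:off_app} upper-bounds the ``idealized'' cost $\sum_{z\in\mathcal{M}} \ceil{S(K_z,t)/g_z}\, r_z$ of the $ALG_{offline}$ assignment by $21$ times it, and Theorem~\ref{thm:cn_app} bounds it from above by $\tfrac{15}{7}$ times $\optone(\mathcal{J}(t))$.

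Concretely, first I would invoke the left inequality of Theorem~\ref{thm:cn_app} with $\mathcal{J}^{1d} := \mathcal{J}(t)$; since there $\optone(\mathcal{J}(t)) = \sum_{z\in\mathcal{M}} w^*(z)\, r_z$ for an optimal one-shot configuration $w^*$, this yields $\sum_{z\in T(z^\diamond)} \cn_{z^\diamond}(z)\, r_z \le \tfrac{15}{7}\cdot \optone(\mathcal{J}(t))$. Then I would substitute this into the right-hand side of Theorem~\ref{thm:off_app}, giving $\sum_{z\in\mathcal{M}} \ceil{S(K_z,t)/g_z}\, r_z \le 21\cdot \tfrac{15}{7}\cdot \optone(\mathcal{J}(t))$; since $21\cdot\tfrac{15}{7}=45$, this is precisely the claimed inequality.

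There is essentially no technical obstacle here — the argument is a one-line chaining of constants. The only point that needs verification is a consistency check rather than an estimate: the symbol $\cn_{z^\diamond}$ appearing in Theorem~\ref{thm:off_app} and in Theorem~\ref{thm:cn_app} must denote the very same object, namely the alternative machine configuration built for the active jobs $\mathcal{J}(t)$ (with $z^\diamond$ the associated lowest-indexed decent type in $P(k_0)$). This is indeed how $\cn_{z^\diamond}$ was fixed in the paragraph preceding Theorem~\ref{thm:off_app}, so the two inequalities share the same middle term and compose without loss. (Looking ahead, integrating this pointwise bound over $t$, combining it with the $4$-approximation guarantee of the dual coloring step and with the lower bound~(\ref{eq:opt2toopt1}), is what will ultimately deliver the $O(1)$-approximation ratio of $ALG_{offline}$ against $\opttwo(\mathcal{J})$, but that is performed after this corollary.)
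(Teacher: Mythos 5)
Your proposal is correct and is exactly the paper's argument: the corollary follows by chaining Theorem~\ref{thm:off_app} with the left inequality of Theorem~\ref{thm:cn_app} applied to $\mathcal{J}^{1d}=\mathcal{J}(t)$, using $\sum_{z\in T(z^\diamond)}\cn_{z^\diamond}(z)\,r_z$ as the common intermediary and $21\cdot\tfrac{15}{7}=45$. Your consistency check that both theorems refer to the same alternative machine configuration for $\mathcal{J}(t)$ is the right (and only) thing to verify.
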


It follows that the cost of $ALG_{offline}$ satisfies
\begin{equation*}
\begin{aligned}
ALG_{offline}(\mathcal{J}) &\leq{} \int_{t\in{}\sspan(\mathcal{J})} \left(\sum_{z \in \mathcal{M}}
4\cdot{}\ceil{S(K_z,t)/g_z} r_z \right) \,\mathrm{d}t\\
&\leq{} 180 \cdot{}\int_{t\in{}\sspan(\mathcal{J})} \optone(\mathcal{J}(t)) \,\mathrm{d}t\\
&\leq{} 180 \cdot{} \opttwo(\mathcal{J}). \quad \text{(by~equation~(\ref{eq:opt2toopt1}))}
\end{aligned}
\end{equation*}
Therefore, $ALG_{offline}$ is an $O(1)$-approximation algorithm.

\section{The Online Setting}

\subsection{The online algorithm $ALG_{online}$}

We now discuss the online BSHM problem. We say that a machine is {\em opened} when it receives the first job to process. When all the active jobs end on an open machine, the machine is {\em closed}. In the online setting, jobs are released when they are to start execution.
For simplicity, we assume that jobs are released one at a time.
Algorithm \ref{alg:online} shows our online algorithm $ALG_{online}$ for each new job $J$ released. The algorithm iteratively considers the exact machine type $m(J)$ and its ancestor types for processing $J$ (lines 1 and 9). When a machine type $z$ is considered, if there are one or more type-$z$ machines that are open and have available capacity to host job $J$, $J$ is scheduled onto the machine which was opened earliest among these machines (this is known as the First Fit rule) (lines 3-5). If not, we check whether a new type-$z$ machine should be opened. If opening a new type-$z$ machine does not cause the total cost of the open machines for the types in the tree rooted at each ancestor type $z_a$ (except type $z_a$ itself) to exceed that of one type-$z_a$ machine, a new type-$z$ machine is opened to host job $J$ (lines 6-8). Otherwise, we proceed to consider the parent type $p(z)$ unless $z$ has no parent in the cost-per-capacity graph (line 9).

\begin{algorithm} [t]
\KwIn{A new job $J$ released at time $I(J)^-$}
\KwOut{A machine for processing $J$}
$z\gets{} m(J)$\;
\While{true}
{
	\If{there exist type-$z$ machines open at time $I(J)^-$ with available capacity at least $s(J)$} 
	    {
        among these machines, \textbf{return} the machine which was opened earliest\;}
	\If{p(z) \textrm{does not exist or} $\forall{}z_a\in{}P(z)\setminus{}\{z\}$, $\sum_{x\in{}A_{z_a}\setminus{}\{z_a\}} n_x r_x < r_{z_a}-r_{z}$ where $n_x$ is the number of type-$x$ machines open at time $I(J)^-$}
	    {
	    open and \textbf{return} a \textrm{new} type-$z$ machine\;}
	$z\gets{}p(z)$\;
}
\caption{$ALG_{online}$}
\label{alg:online}
\end{algorithm}

We define some notations for $ALG_{online}$. For each machine type $z\in\mathcal{M}$, let $K_z$ denote the set of jobs scheduled onto machine type $z$. For each machine type $z\in\mathcal{M}$ and each time instant $t\in{}\sspan(\mathcal{J})$, let $N(z,t)$ denote the number of type-$z$ machines being open at time $t$. Let $k(t) := \max\{z:K_z(t)\neq{}\emptyset{}\}=\max \{z:N(z,t)>0\}$ denote the highest-indexed machine type used at time $t$.

To analyze the $ALG_{online}$ algorithm, we create a set of artificial jobs to fill up the unused machine capacities of open machines, in order to establish the relation between the cost of $ALG_{online}$ and the cost of the optimal machine configuration (see Section \ref{sec:artificial}). Recall from Section \ref{sec:first_app} that the alternative machine configuration is an $O(1)$ approximation of the optimal machine configuration. For each time instant, we also invent a mechanism to charge the cost of the alternative machine configuration onto individual jobs within an $O(1)$ factor (see Section \ref{sec:modifiedapprox}). This charging mechanism provides a nice ``monotonic'' property (adding new jobs can only increase the costs charged onto existing jobs, see Theorem \ref{thm:tilder_1}). Based on this property, we show that the cost due to the artificial jobs is bounded by a factor $O(\mu)$ of the cost due to the original jobs (Theorem \ref{thm:hard}), which leads to the $O(\mu)$ competitive ratio of the $ALG_{online}$ algorithm.

\subsection{A set of artificial jobs $\mathcal{R}$}
\label{sec:artificial}

We start by creating some artificial jobs to fill up the machine capacities of open machines by $ALG_{online}$. For each job $J\in{}\mathcal{J}$, we create three artificial jobs
$F_1 (J)$, $F_2 (J)$ and $F_3 (J)$. They have the same sizes as $J$, i.e., $s(F_1 (J))=s(F_2 (J))=s(F_3 (J))=s(J)$. Their active intervals are defined as follows: $I(F_1 (J))=I(J)$, i.e., $F_1 (J)$ has the same active interval as $J$; $I(F_2 (J))=[I(J)^+,I(J)^++\mu)\cap{}\sspan(\mathcal{J})$, i.e., $F_2 (J)$ extends $J$'s active interval by a period $\mu$; $I(F_3 (J))=[I(J)^+,I(J)^++2\mu)\cap{}\sspan(\mathcal{J})$, i.e., $F_3 (J)$ extends $J$'s active interval by a period $2\mu$. Let $\mathcal{R}=\{F_1 (J), F_2 (J), F_3 (J) : J \in \mathcal{J}\}$ denote all the artificial jobs created. In the following, we show that at each time instant $t$, the active jobs $\mathcal{R}(t)$ satisfies some properties.

Lemma \ref{thm:R_case2} says that for each machine type $z\in{}T(k(t))$, if there are multiple type-$z$ machines open at time $t$, the active jobs in these machines together with some artificial jobs active at time $t$
can fill up the capacities of these machines except one.

\begin{lemma} \label{thm:R_case2}
For each $z\in{}T(k(t))$ such that $N(z,t)>1$, we have $S(K_z,t) + S(\{J\in{}\mathcal{R}(t):m(J)\in{}A_z\}) > (N(z,t)-1) g_z$.
\end{lemma}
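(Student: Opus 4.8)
The plan is to fix a machine type $z \in T(k(t))$ with $N(z,t) > 1$ and to exhibit enough size of jobs active at $t$ --- drawn from $K_z$ and from the artificial set $\mathcal{R}$ --- whose exact machine types lie in $A_z$, to occupy all but one of the $N(z,t)$ open type-$z$ machines. The natural way to do this is to inspect, among the type-$z$ machines open at $t$, the one that was opened \emph{latest}; call the job that triggered its opening $J^\star$, opened at time $\tau := I(J^\star)^- \le t$. Since $J^\star$ was scheduled onto a \emph{new} type-$z$ machine by $ALG_{online}$ rather than onto an earlier one, the First Fit condition in line 3 failed for $J^\star$ at time $\tau$: every type-$z$ machine open at $\tau$ (which are exactly the $N(z,t)-1$ other machines still open at $t$, plus possibly some that have since closed) had available capacity less than $s(J^\star)$. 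Hence at time $\tau$ each of those $N(z,t)-1$ machines carried active jobs of total size more than $g_z - s(J^\star)$.

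Next I would push this occupancy forward from $\tau$ to $t$ using the artificial jobs. The jobs active on those machines at time $\tau$ either are still active at $t$ --- and then $F_1$ of each such job is also active at $t$ and has the same size --- or have ended at some time in $(\tau, t]$; since every job has length at most $\mu$, such a job ended within $\mu$ of $\tau$, and because $t - \tau \le$ \dots here I must be careful: $t$ could be much later than $\tau$. This is where $F_2$ and $F_3$ (extending a job's interval by $\mu$ and $2\mu$ respectively) come in, and where the three-fold artificial construction is essential. The intended argument is: a job $J$ on one of these machines that is active at $\tau$ but not at $t$ ended at some $t' \in (\tau, t]$; if $t' \ge t - 2\mu$ then one of $F_2(J), F_3(J)$ bridges the gap and is active at $t$; and the case $t' < t - 2\mu$ should be impossible for \emph{all} jobs on the machine simultaneously, because the machine would then have been empty at some point in $(\tau, t)$ and thus closed, contradicting $N(z,t) > 1$ counting it. Summing over the $N(z,t)-1$ latest-but-one machines gives $S(K_z,t) + S(\{J \in \mathcal{R}(t): m(J) \in A_z\}) > (N(z,t)-1)(g_z - s(J^\star)) $, and since $s(J^\star) \le g_z$ with $m(J^\star) \in A_z$, I would fold $J^\star$'s own contribution (or an artificial copy of it) back in to absorb the $-s(J^\star)$ term and reach the clean bound $(N(z,t)-1) g_z$. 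One must also check all the job sizes counted have exact machine type in $A_z$: jobs placed by $ALG_{online}$ onto a type-$z$ machine have $m(J) \le z$ but not necessarily $m(J) \in A_z$ --- however $z \in T(k(t))$ and the tree structure (Propositions~\ref{pro:T_union}, \ref{pro:T_two}) should let me restrict attention to the subtree $A_z$ correctly, possibly by using that $K_z$ jobs routed to $z$ via the while-loop have their exact type in $A_z$ by the ancestor-walk in lines 1 and 9.

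The main obstacle I anticipate is precisely the bookkeeping of the time gap between $\tau$ and $t$: making rigorous the claim that if more than a $2\mu$-length window inside $(\tau,t)$ saw every relevant job expire, the machine must have closed, and conversely that $F_2$ and $F_3$ together with $F_1$ cover every surviving-or-recently-expired job. A secondary subtlety is that I have been cavalier about \emph{which} $N(z,t)-1$ machines to exclude --- it should be "all but the latest-opened", but I should double-check that the latest-opened one is the right one to drop and that the First-Fit "earliest machine" rule in line 4 does not disturb this (it should only help, since it tends to keep older machines full). I would also need to confirm that the artificial jobs' intervals are clipped to $\sspan(\mathcal{J})$ in a way that does not break the bridging argument near the right end of the timeline; since $t \in \sspan(\mathcal{J})$ this should be fine. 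The rest is summation and the substitution $s(J^\star) \le g_z$, which are routine.
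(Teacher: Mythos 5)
Your core idea---using the failure of the First Fit test to certify that open type-$z$ machines are nearly full---is the right one, but your anchoring point breaks the argument. You certify fullness of the other $N(z,t)-1$ machines at $\tau=I(J^\star)^-$, the opening time of the \emph{latest} machine, and $\tau$ can be arbitrarily far before $t$. Your proposed rescue (that if every job on such a machine ended before $t-2\mu$, the machine would have been empty and hence closed) does not hold: the machine can stay open through a stream of later-arriving jobs, and those later jobs carry no size guarantee, so the load $>g_z-s(J^\star)$ observed at $\tau$ has no surviving witness, real or artificial, at $t$ once $t>\tau+3\mu$ (every job active at $\tau$ ends by $\tau+\mu$, so even $F_3$ expires by $\tau+3\mu$). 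There is a second, independent flaw in the arithmetic: your per-machine bound uses the \emph{same} job $J^\star$, so summing gives only $(N(z,t)-1)(g_z-s(J^\star))$, and folding $J^\star$ and its artificial copies back in adds at most a constant number of $s(J^\star)$'s, not the $(N(z,t)-1)\,s(J^\star)$ needed to reach $(N(z,t)-1)g_z$ when $N(z,t)$ is large.

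The paper's proof repairs both defects by choosing the witnesses per machine and close to $t$: order the machines open at $t$ as $m_1,\dots,m_n$ by opening time, pick for each $i\ge 2$ a job $J_i$ on $m_i$ that is \emph{active at $t$}, and apply the First Fit failure to the pair $(J_i,m_{i-1})$ at time $I(J_i)^-$; since $m_{i-1}$ was already open then and $J_i$ was not placed on it, $s(J_i)+S(\mathcal{K}_z^{i-1})>g_z$, where $\mathcal{K}_z^{i-1}$ is the set of jobs on $m_{i-1}$ at $I(J_i)^-$. Because $J_i$ is active at $t$, each $J\in\mathcal{K}_z^{i-1}$ satisfies $t-I(J)^+<\len(J_i)\le\mu$, so at time $t$ it is covered either by itself (in $K_z(t)$) or by $F_2(J)$, while $J_i$ is covered by $F_1(J_i)$; summing the $n-1$ pairwise inequalities, with distinct representatives for each occurrence, gives exactly $(N(z,t)-1)g_z$ (and $F_3$ is not even needed here---it is reserved for Lemma \ref{thm:R_case1}). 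Your side concern about exact machine types is not an issue: $ALG_{online}$ walks from $m(J)$ upward through ancestors only, so every job on a type-$z$ machine has $m(J)\in A_z$.
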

\begin{proof}

As illustrated in Figure \ref{fig:R_diagram_case2}, suppose $n = N(z,t)>1$ type-$z$ machines being open at time $t$ were opened in the order of $m_1$, $m_2$, $\ldots$, $m_n$. We pick an active job $J_i$ (black rectangle) in each machine $m_i$ at time $t$. For each $i>1$, when $J_i$ was scheduled onto $m_i$ at time $I(J_i)^-$, machine $m_{i-1}$ was also open at that time. Let $\mathcal{K}_{z}^{i-1}$ denote the set of active jobs in machine $m_{i-1}$ at time $I(J_i)^-$ (black rectangles). By the First Fit scheduling rule, we must have $s(J_i) + S(\mathcal{K}_{z}^{i-1}) > g_z$ for each $i=2,\ldots{},n$. Note that all the jobs $J_i$ and $\mathcal{K}_{z}^{i-1}$ have their exact machine types in the tree $A_z$. Each job $J_i$ has an artificial job $F_1(J_i)$ in $\mathcal{R}$ active at $t$ (grey rectangles). In addition, each job $J \in \mathcal{K}_{z}^{i-1}$ has an artificial job $F_2(J)$ in $\mathcal{R}$ which extends $J$ by a period $\mu$ (rectangles in back slash pattern). Either $J$ or $F_2(J)$ is active at $t$, since $I(J)^- \leq I(J_i)^- \leq t$ and $t - I(J)^+ < \len(J_i) \leq \mu$. Therefore, the total size of the active jobs in $K_z \cup \mathcal{R}$ at time $t$ (having their exact machine types in $A_z$) is at least $\sum_{i=2}^n (s(J_i) + S(\mathcal{K}_{z}^{i-1})) > (n-1) g_z$.
\end{proof}

\begin{figure}[t]
  \centering
  \includegraphics[width=7.8cm]{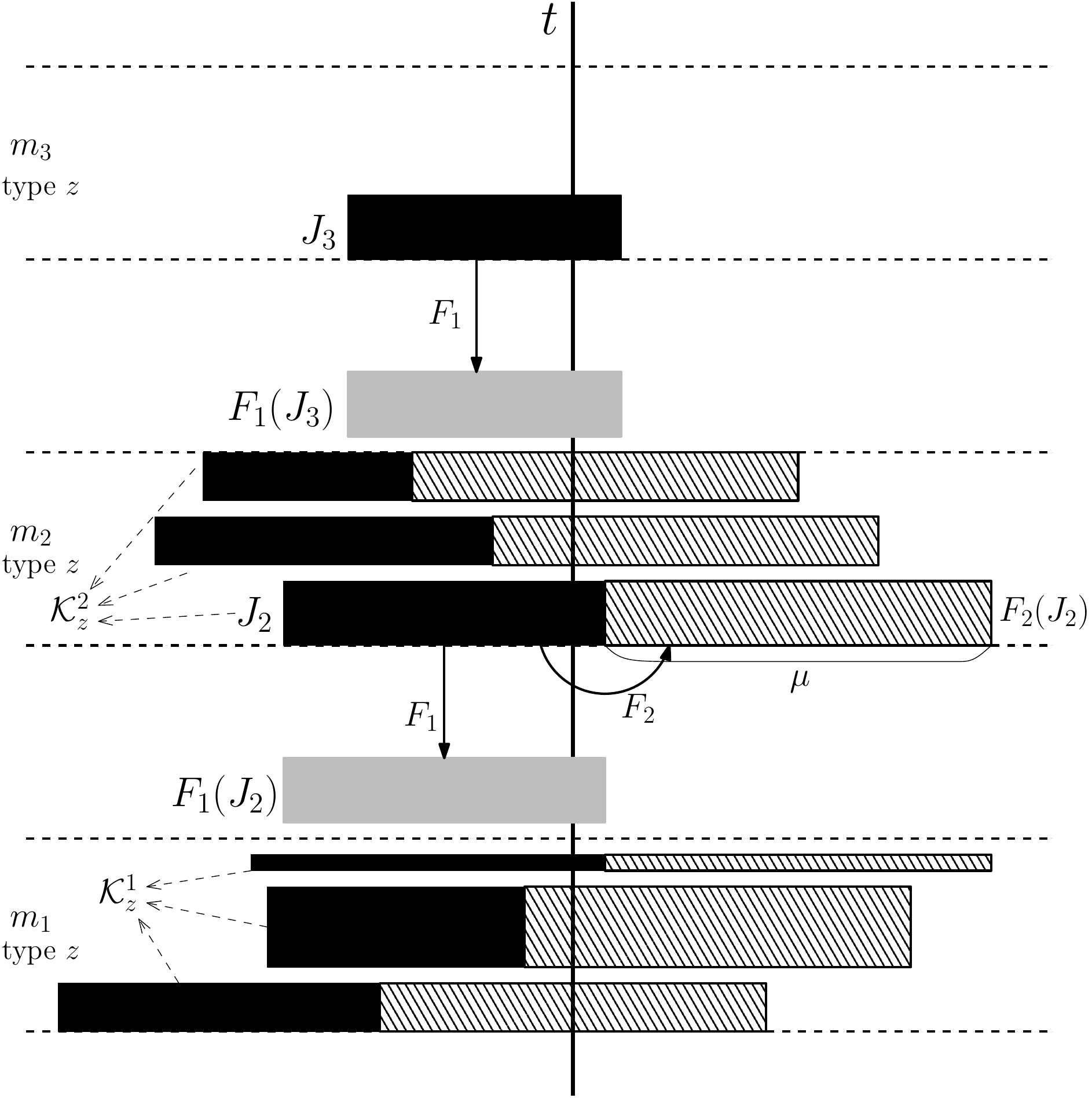}
  \caption{An example of Lemma \ref{thm:R_case2}}
  \label{fig:R_diagram_case2}
\end{figure}

Lemma \ref{thm:R_case1} says that if there is only one $k(t)$-type machine open at time $t$ and all the active jobs in this machine can be placed in some lower-indexed machine type than $k(t)$, take any active job $\hat{J}$ in this machine, then for each $k(t)$'s child type $z$ with multiple type-$z$ machines open at time $I(\hat{J})^-$, the active jobs in these type-$z$ machines together with some artificial jobs active at time $t$
can fill up the capacities of these type-$z$ machines except one.

\begin{figure}[h]
  \centering
  \includegraphics[width=7.8cm]{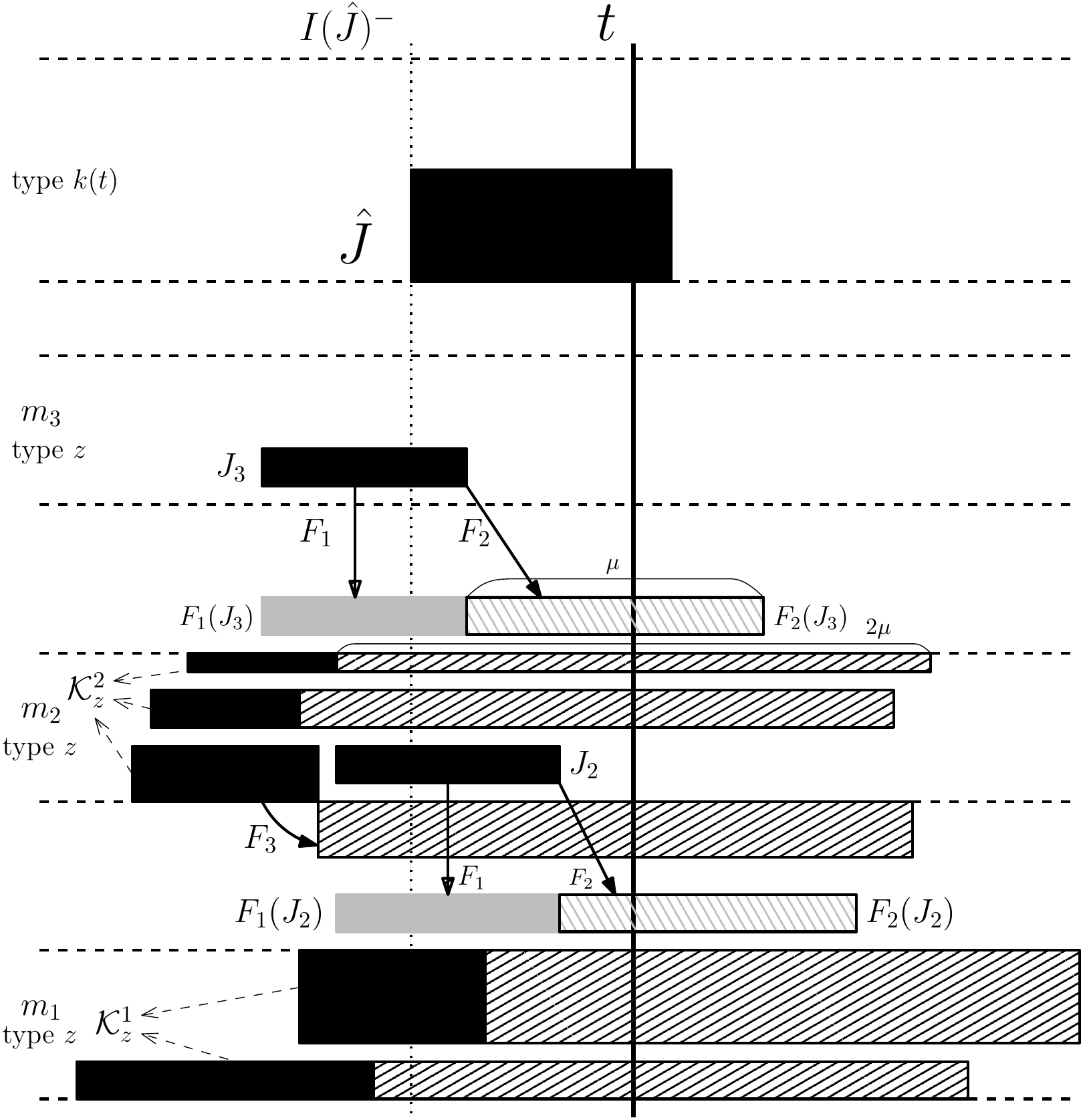}
  \caption{An example of Lemma \ref{thm:R_case1}}
  \label{fig:R_diagram_case1}
\end{figure}

\begin{lemma} \label{thm:R_case1}
If $N(k(t),t)=1$ and $m(J)<k(t)$ for each $J\in{}K_{k(t)} (t)$, taking any job $\hat{J} \in{} K_{k(t)} (t)$, we have $S(K_z,t) + S(\{J\in{}\mathcal{R}(t):m(J)\in{}A_z\}) > (N(z,I(\hat{J} )^- ) - 1) g_z$ for each $z\in{}f(k(t))$ such that $N(z,I(\hat{J})^-)>1$.
\end{lemma}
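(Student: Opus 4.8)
The plan is to mirror the proof of Lemma~\ref{thm:R_case2} but anchored at the time instant $I(\hat{J})^-$ rather than at $t$. Fix $z \in f(k(t))$ with $n := N(z, I(\hat{J})^-) > 1$, and let $m_1, m_2, \ldots, m_n$ be type-$z$ machines open at time $I(\hat{J})^-$ listed in the order they were opened. For each $i$, pick an active job $J_i$ in machine $m_i$ at time $I(\hat{J})^-$, and for each $i \geq 2$ let $\mathcal{K}_z^{i-1}$ denote the set of jobs active in $m_{i-1}$ at time $I(J_i)^-$. Exactly as before, the First Fit rule (invoked when $J_i$ was placed on $m_i$ while $m_{i-1}$ was already open and $z$ was being considered) gives $s(J_i) + S(\mathcal{K}_z^{i-1}) > g_z$ for $i = 2, \ldots, n$, and all these jobs have exact machine types in $A_z$. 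So $\sum_{i=2}^n \bigl(s(J_i) + S(\mathcal{K}_z^{i-1})\bigr) > (n-1)g_z$.

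The crux is then to show that each of these jobs contributes (via itself or one of its artificial copies) a job in $K_z(t) \cup \{J \in \mathcal{R}(t) : m(J) \in A_z\}$ — i.e., something active at the later instant $t$, not merely at $I(\hat{J})^-$. The key quantitative fact is a bound on $t - I(\hat{J})^-$. Since $\hat{J} \in K_{k(t)}(t)$, we have $I(\hat{J})^- \leq t < I(\hat{J})^+$, so $t - I(\hat{J})^- < \len(\hat{J}) \leq \mu$. Now consider a job $J_i$: we have $I(J_i)^- \leq I(\hat{J})^- \leq t$. If $J_i$ is still active at $t$, then $F_1(J_i) \in \mathcal{R}(t)$ works. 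If not, $I(J_i)^+ \leq t$, and $t - I(J_i)^+ \leq t - I(\hat{J})^- < \mu$ (using $I(J_i)^+ \geq I(\hat{J})^-$ since $J_i$ is active at $I(\hat{J})^-$), so $F_2(J_i)$, which extends $J_i$ by $\mu$, is active at $t$; moreover $m(J_i) \in A_z$. For a job $J \in \mathcal{K}_z^{i-1}$, we have $I(J)^- \leq I(J_i)^- \leq t$; if $J$ is active at $t$ it is counted in $K_z(t)$ directly, and otherwise $t - I(J)^+ \leq t - I(J_i)^-$. Here I need $t - I(J_i)^- < 2\mu$ so that $F_3(J)$ (the $2\mu$-extension) is active at $t$: indeed $t - I(J_i)^- = (t - I(\hat{J})^-) + (I(\hat{J})^- - I(J_i)^-) < \mu + \len(J_i) \leq 2\mu$, using $I(\hat{J})^- - I(J_i)^- \leq I(J_i)^+ - I(J_i)^- = \len(J_i) \leq \mu$ because $J_i$ is active at $I(\hat{J})^-$ hence $I(J_i)^+ > I(\hat{J})^-$.

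Summing up, every job among $\{J_i\}_{i=2}^n \cup \bigcup_{i=2}^n \mathcal{K}_z^{i-1}$ has its size accounted for by a distinct active job at time $t$ lying in $K_z(t)$ or in $\{J \in \mathcal{R}(t) : m(J) \in A_z\}$ (distinctness holds because $F_1, F_2, F_3$ are created per original job and the $J_i$ are picked from distinct machines, with the $\mathcal{K}_z^{i-1}$ sets handled analogously to Lemma~\ref{thm:R_case2}). Hence $S(K_z,t) + S(\{J \in \mathcal{R}(t) : m(J) \in A_z\}) \geq \sum_{i=2}^n \bigl(s(J_i) + S(\mathcal{K}_z^{i-1})\bigr) > (n-1)g_z = (N(z, I(\hat{J})^-) - 1)g_z$, as required.

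The main obstacle is carefully controlling the time gaps: the jobs $\mathcal{K}_z^{i-1}$ are active at $I(J_i)^-$, which can be strictly earlier than $I(\hat{J})^-$, which is itself strictly earlier than $t$, so their end times may precede $t$ by up to almost $2\mu$ — this is precisely why the third artificial copy $F_3(J)$ with a $2\mu$ extension is needed here, whereas Lemma~\ref{thm:R_case2} only needed $F_1$ and $F_2$. One must verify that the two job-length budgets ($\len(\hat{J}) \leq \mu$ for the $t$-to-$I(\hat{J})^-$ gap and $\len(J_i) \leq \mu$ for the $I(\hat{J})^-$-to-$I(J_i)^-$ gap) combine to stay within the $2\mu$ window, and that the argument correctly uses that $J_i$ and the jobs of $\mathcal{K}_z^{i-1}$ are genuinely active (not just started) at the relevant reference times so that the endpoint inequalities go through.
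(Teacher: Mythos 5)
Your proof is correct and follows essentially the same route as the paper's: anchor the First Fit argument at time $I(\hat{J})^-$, extract $J_i$ and $\mathcal{K}_z^{i-1}$ exactly as in Lemma \ref{thm:R_case2}, cover each $J_i$ by $F_1(J_i)$ or $F_2(J_i)$ using $t-I(J_i)^+<\len(\hat{J})\leq\mu$, and cover each $J\in\mathcal{K}_z^{i-1}$ by itself or $F_3(J)$ using $t-I(J)^+<\len(J_i)+\len(\hat{J})\leq 2\mu$. The time-gap bookkeeping and the role of the $2\mu$-extension match the paper's argument exactly.
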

\begin{proof}

The proof is similar to Lemma \ref{thm:R_case2}. As illustrated in Figure \ref{fig:R_diagram_case1}, take any active job $\hat{J}$ (black rectangle) in the only open type-$k(t)$ machine at time $t$, and we consider all the type-$z$ machines being open at time $I(\hat{J})^-$. Suppose $n = N(z,I(\hat{J})^-)>1$ type-$z$ machines being open at time $I(\hat{J})^-$ were opened in the order of $m_1$, $m_2$, $\ldots$, $m_n$. We pick an active job $J_i$ (black rectangle) in each machine $m_i$ at time $I(\hat{J})^-$. For each $i>1$, when $J_i$ was scheduled into $m_i$ at time $I(J_i)^-$, machine $m_{i-1}$ was also open at that time. Let $\mathcal{K}_{z}^{i-1}$ denote the set of active jobs in machine $m_{i-1}$ at time $I(J_i)^-$ (black rectangles). By the First Fit scheduling rule, we must have $s(J_i) + S(\mathcal{K}_{z}^{i-1}) > g_z$ for each $i=2,\ldots{},n$. Note that all the jobs $J_i$ and $\mathcal{K}_{z}^{i-1}$ have their exact machine types in the tree $A_z$. Each job $J_i$ has an artificial job $F_1(J_i)$ in $\mathcal{R}$ (grey rectangles) and an artificial job $F_2(J_i)$ in $\mathcal{R}$ which extends $J_i$ by a period $\mu$ (rectangles in back slash pattern). One of these two artificial jobs must be active at time $t$, since $I(J_i)^- \leq I(\hat{J})^- \leq t$ and $t - I(J_i)^+ < \len(\hat{J}) \leq \mu$. In addition, each job $J' \in \mathcal{K}_{z}^{i-1}$ has an artificial job $F_3(J')$ in $\mathcal{R}$ which extends $J'$ by a period $2\mu$ (rectangles in slash pattern). Either $J'$ or $F_3(J')$ is active at time $t$, since $I(J')^- \leq I(J_i)^- \leq t$ and $t - I(J')^+ < \len(J_i) + \len(\hat{J}) \leq 2\mu$. Therefore, the total size of the active jobs in $K_z \cup \mathcal{R}$ at time $t$ (having their exact machine types in $A_z$) is at least $\sum_{i=2}^n (s(J_i) + S(\mathcal{K}_{z}^{i-1})) > (n-1) g_z$.
\end{proof}

Lemmas \ref{thm:R_case2} and \ref{thm:R_case1} indicate that the artificial jobs $\mathcal{R}$ can fill up the capacity of (most) machines used by $ALG_{online}$ at any time $t$. Based on this fact,
we can prove that the total cost of the machines used by $ALG_{online}$ at any time $t$ is bounded by $O(1)$ times the cost of optimal one-shot scheduling for the active jobs in $\mathcal{J}\cup{}\mathcal{R}$ at $t$.

\begin{theorem} \label{thm:N_leq_opt}
At each time instant $t$, we have  $\sum_{z=1,\ldots{},m} N(z,t) r_z \leq{} 5\cdot{} \optone(\mathcal{J}(t)\cup{}\mathcal{R}(t))$.
\end{theorem}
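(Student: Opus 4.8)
The plan is to fix a time instant $t$, let $k := k(t)$ be the highest-indexed machine type in use at $t$, and bound $\sum_{z} N(z,t) r_z$ by decomposing the sum over the trees rooted at the nodes of $T(k)$ (recall that $\{A_z : z \in T(k)\}$ partitions $\{1,\dots,k\}$ by Proposition \ref{pro:T_union}). For the root type $k$ itself, the opening rule of $ALG_{online}$ (line 6) guarantees that whenever a type-$z$ machine was opened for $z$ a proper descendant of $k$, the total cost of open machines in $A_k \setminus \{k\}$ stayed below $r_k$; hence at time $t$ we have $\sum_{z \in A_k \setminus \{k\}} N(z,t) r_z < r_k$, which (together with the power-of-$8$ assumption) bounds the cost of everything strictly below $k$ by a small constant times $r_k$. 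So the whole burden reduces to charging $r_k \cdot N(k,t)$ (in fact just $r_k$, once we separate out the "extra" type-$k$ machines) against $\optone(\mathcal{J}(t)\cup\mathcal{R}(t))$.

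Next I would split into the two cases matching Lemmas \ref{thm:R_case2} and \ref{thm:R_case1}. \textbf{Case A:} $N(k,t) > 1$, or $N(k,t)=1$ but some active job in the type-$k$ machine has exact machine type equal to $k$. In the first sub-case, Lemma \ref{thm:R_case2} applied with $z = k$ shows the active jobs of $K_k \cup \mathcal{R}$ with exact type in $A_k$ have total size exceeding $(N(k,t)-1)g_k$; since these jobs all require a machine of capacity $\geq g_k$ in any one-shot schedule, $\optone(\mathcal{J}(t)\cup\mathcal{R}(t)) \geq N(k,t) \cdot$ (the minimum cost to host a size just above $(N(k,t)-1)g_k$ worth of such jobs) — more carefully, one needs $\geq \lceil S/g_k \rceil r_k \geq N(k,t)\, r_k$-type bound, but really I only need $\optone \geq r_k$ to absorb the root, plus a separate comparison of the "extra" $N(k,t)-1$ machines against one-shot cost. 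In the second sub-case, the active jobs with exact type $k$ already force $\optone \geq r_k$ directly. \textbf{Case B:} $N(k,t)=1$ and every active job $J \in K_k(t)$ has $m(J) < k$. Here I pick any $\hat J \in K_k(t)$ and apply Lemma \ref{thm:R_case1} at each child type $z \in f(k)$: summing over the children, the jobs of $K \cup \mathcal{R}$ active at $t$ fill all but one of the $N(z, I(\hat J)^-)$ machines of each child type. But the reason $\hat J$ was not opened on a child type $z$ (and instead forced $ALG_{online}$ up to $k$) is that opening a new machine at $z$ would have violated the cost cap against $r_k$ — i.e. $\sum_{x \in A_k \setminus \{k\}} n_x r_x \geq r_k - r_{m(\hat J)\text{-level}}$ at time $I(\hat J)^-$, from which, combined with the $F_3$-extended artificial jobs that keep those machines' loads alive at $t$, one extracts $\optone(\mathcal{J}(t)\cup\mathcal{R}(t)) \geq \tfrac{1}{c} r_k$ for an explicit constant $c$.

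Finally I would assemble the pieces: in every case $\optone(\mathcal{J}(t)\cup\mathcal{R}(t)) \geq \tfrac{1}{c} r_k$ for a constant $c$, and the full cost $\sum_z N(z,t) r_z = N(k,t) r_k + \sum_{z \in A_k \setminus \{k\}} N(z,t) r_z \leq N(k,t) r_k + \tfrac{1}{7} r_k$; since the $N(k,t)-1$ extra type-$k$ machines are themselves accounted for by the filled-capacity argument (each carries more than $g_k$ worth of jobs needing capacity $\geq g_k$, so $\optone \geq N(k,t) r_k$ in Case A, while $N(k,t)=1$ in Case B), the bound $\sum_z N(z,t) r_z \leq 5\,\optone(\mathcal{J}(t)\cup\mathcal{R}(t))$ follows after tracking the constants. \textbf{The main obstacle} is Case B: one must argue carefully that the cost-cap violation which pushed $ALG_{online}$ from the child level up to $k$, a statement about machine counts \emph{at time $I(\hat J)^-$}, still yields a lower bound on the one-shot cost \emph{at the later time $t$} — this is precisely where the two-stage artificial jobs $F_2$ and $F_3$ (extensions by $\mu$ and $2\mu$) are needed, and getting the interval-length bookkeeping ($t - I(\cdot)^+ < \len + \len \leq 2\mu$) exactly right is the delicate part.
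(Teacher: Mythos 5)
Your skeleton matches the paper's at the top level: use the opening-rule invariant $\sum_{i\in A_z\setminus\{z\}}N(i,t)r_i<r_z$ (the paper's Lemma \ref{lem:online_t}) to compress everything below the root into a $\tfrac{2}{7}r_{k(t)}$-type term, and use the filled-capacity Lemmas \ref{thm:R_case2} and \ref{thm:R_case1} to lower-bound $\optone(\mathcal{J}(t)\cup\mathcal{R}(t))$. But the proposal has a genuine gap in its lower-bound mechanism. Jobs with $m(J)\in A_k$ do \emph{not} "all require a machine of capacity $\geq g_k$" — they only require capacity $g_{m(J)}\geq g_{v(k)}$ — so neither $\optone\geq\lceil S/g_k\rceil r_k$ nor $\optone\geq (N(k,t)-1)r_k$ follows from the size bound alone. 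The paper instead fixes an optimal configuration $w^*$ (with the properties of Theorem \ref{thm:optmachine}), lets $k_{opt}$ be its highest used type, explicitly packs $\mathcal{J}(t)\cup\mathcal{R}(t)$ into $w^*$, and charges each job $J$ at least $s(J)\,r_z/g_z$ where $z$ is the node of $T(k_{opt})$ whose subtree contains $m(J)$ (Remark \ref{rem:optone}). This per-unit charge is only as large as you need when $z$ sits correctly relative to $k_{opt}$, which is why the paper's case split is on $k_{opt}\geq k(t)$ versus $k_{opt}<k(t)$ rather than on $N(k,t)$: if $k_{opt}>k(t)$, your decomposition over $T(k(t))$ charges the root subtree only at rate $r_{k_{opt}}/g_{k_{opt}}<r_{k(t)}/g_{k(t)}$, and the paper then simply absorbs the whole of $A_{k_{opt}}$ by $r_{k_{opt}}\leq\optone$ plus Lemma \ref{lem:online_t} instead of by a filled-capacity argument at type $k(t)$.

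The second gap is the one you yourself flag: your Case B is where most of the proof lives, and the plan for it is not yet an argument. The paper must first show that $k_{opt}<k(t)$ \emph{forces} $N(k(t),t)=1$ with every job in $K_{k(t)}(t)$ of smaller exact type (a contradiction combining Lemma \ref{thm:R_case2} with property (2) of Theorem \ref{thm:optmachine}); it must then split on \emph{which} ancestor's cost cap was violated when $\hat J$ was pushed up — the violated constraint may involve an ancestor $k^\triangle$ strictly above $k(t)$, and extracting $\Omega(1)\cdot r_{k(t)}$ from it requires observing that $r_{k^\triangle}-\sum_{z\geq k(t)}N(z,I(\hat J)^-)r_z$ is a positive integer multiple of $r_{k(t)}$, which is exactly where the power-of-$8$ assumption is used; and only then do Lemmas \ref{thm:R_case2} and \ref{thm:R_case1} convert machine counts at $I(\hat J)^-$ into job volume alive at $t$. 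Your interval bookkeeping for $F_2,F_3$ is correct in spirit, but without the pieces above the constant $5$ cannot be assembled, so the proposal does not yet constitute a proof.
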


\subsection{A sufficient condition}

Let $\mathcal{F}_i=\{F_i (J) : J \in \mathcal{J}\}$ for each $i = 1, 2, 3$. By definition, $\mathcal{J}\cup{}\mathcal{R}=(\mathcal{J}\cup{}\mathcal{F}_3) \cup{} (\mathcal{F}_1 \cup{} \mathcal{F}_2)$. For any time instant $t$, the combination of any optimal machine configuration for the active jobs in $\mathcal{J}\cup{}\mathcal{F}_3$ and any optimal machine configuration for the active jobs in $\mathcal{F}_1  \cup{} \mathcal{F}_2$ is a feasible machine configuration for the active jobs in $\mathcal{J}\cup{}\mathcal{R}$. By optimality,
\begin{eqnarray}
& & \optone (\mathcal{J}(t)\cup{}\mathcal{R}(t)) \nonumber \\
&\leq{}& \optone (\mathcal{J}(t)\cup{}\mathcal{F}_3(t))+\optone (\mathcal{F}_1(t)\cup{}\mathcal{F}_2(t)). \label{eq:split}
\end{eqnarray}
Actually, in order to prove that $ALG_{online}$ is an $O(\mu)$-competitive algorithm, it suffices to show the following theorem.
Define a function $F_d$ with $d\geq{}\mu$ which maps each job $J$ in $\mathcal{J}$ to a new job $F_d (J)$ defined as: $s(F_d (J))=s(J)$ and $I(F_d (J))=[I(J)^+,I(J)^++d)\cap{}\sspan(\mathcal{J})$, i.e., the new job $F_d (J)$ has the same size as $J$ and extends $J$'s active interval by a period $d$.

\begin{theorem} \label{thm:hard}
Let $\mathcal{H}=\{F_d (J) : J \in \mathcal{J}\}$ with $d\geq{}\mu$. We have
$\int_{t\in{}\sspan(\mathcal{J})} \optone (\mathcal{J}(t)\cup{}\mathcal{H}(t))\,\mathrm{d}t \leq{} O(d)\cdot{}\int_{t\in{}\sspan(\mathcal{J})} \optone(\mathcal{J}(t))\,\mathrm{d}t$.
\end{theorem}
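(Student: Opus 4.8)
The plan is to bound $\optone(\mathcal{J}(t)\cup\mathcal{H}(t))$ at each time instant $t$ by relating it to $\optone(\mathcal{J}(t'))$ at earlier time instants $t'$. The key observation is that every job $F_d(J)\in\mathcal{H}$ active at $t$ has its size equal to some original job $J$ that ended at $I(J)^+ \in (t-d, t]$. So if we let $\mathcal{J}_{[a,b)}$ denote the jobs whose end times fall in $[a,b)$, then $\mathcal{H}(t)$ consists of copies (with respect to size) of the jobs in $\mathcal{J}_{(t-d,t]}^{end}$ (jobs ending in the window $(t-d,t]$). Since one-shot scheduling depends only on job sizes, and $\optone$ is subadditive over disjoint subsets of jobs (the union of feasible configurations is feasible), we get $\optone(\mathcal{J}(t)\cup\mathcal{H}(t)) \leq \optone(\mathcal{J}(t)) + \optone(\mathcal{H}(t)) \leq \optone(\mathcal{J}(t)) + \sum \optone(\text{sizes of jobs ending in }(t-d,t])$. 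The real work is to relate $\optone$ applied to ``jobs ending in a time window'' back to the integral of $\optone(\mathcal{J}(t'))$ over that window.

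The main step is therefore: for a short window $W = [\tau, \tau+1)$ of length $1$ (the minimum job length), bound $\optone(\{\text{sizes of jobs ending in }W\})$ by $O(1)\cdot\int_{t'\in W}\optone(\mathcal{J}(t'))\,\mathrm{d}t'$, or more precisely by $O(1)\cdot\optone(\mathcal{J}(t'))$ for a suitable $t'$ near $W$. The key fact is that any job $J$ ending in $W=[\tau,\tau+1)$ has length $\geq 1$, so it is active throughout the sub-interval $[\tau-\epsilon, \tau)$ just before $W$ — hence all such jobs are simultaneously active at, say, time $\tau - \epsilon$ (taking $\epsilon\to 0$, at time $\tau^-$). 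Wait — jobs ending in $[\tau,\tau+1)$ need not all be active at a common instant, since one could end at $\tau$ and another start at $\tau+1/2$. But a job ending at time $s\in[\tau,\tau+1)$ and having length $\geq 1$ started at $s-\len(J)\leq \tau - (1 - (s-\tau)) \leq \tau$; so it is active on all of $[s-\len(J), s) \supseteq$ ... hmm, this only guarantees activity near its own start. The cleaner route: partition the window $(t-d,t]$ of length $d$ into $\lceil 2d\rceil$ sub-windows of length $1/2$. Any job of length $\geq 1$ ending in a sub-window $[\tau,\tau+1/2)$ is active at time $\tau$ (it started at $\leq \tau+1/2-1 < \tau$ and ends at $\geq \tau$). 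So all jobs ending in that sub-window lie in $\mathcal{J}(\tau)$, giving $\optone(\{\text{sizes of jobs ending in }[\tau,\tau+1/2)\}) \leq \optone(\mathcal{J}(\tau)) \leq O(1)\cdot\frac{1}{\delta}\int_{\tau-\delta}^{\tau}\optone(\mathcal{J}(t'))\,\mathrm{d}t'$ for small $\delta$, using that $\optone(\mathcal{J}(\cdot))$ is constant on short enough intervals and that $\mathcal{J}(\tau)\supseteq\mathcal{J}(t')$ roughly... actually I should integrate over an interval where $\mathcal{J}(t')\supseteq$ the relevant job set. Since a job ending at $s\geq\tau$ with length $\geq 1$ is active on $[\tau - 1/2, \tau)$ entirely (it started $<\tau$ and ends $\geq\tau$), all these jobs are active throughout $[\tau-1/2,\tau)$, so $\optone(\mathcal{J}(t'))\geq \optone(\{\text{their sizes}\})$ for every $t'\in[\tau-1/2,\tau)$, giving $\optone(\{\text{their sizes}\}) \leq 2\int_{\tau-1/2}^{\tau}\optone(\mathcal{J}(t'))\,\mathrm{d}t'$.

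Putting it together: for each $t$, index the $\lceil 2d\rceil$ length-$1/2$ sub-windows of $(t-d,t]$ by their left endpoints $\tau_1,\dots,\tau_N$ with $N=O(d)$; each contributes at most $2\int_{\tau_j-1/2}^{\tau_j}\optone(\mathcal{J})\,\mathrm{d}t'$ to $\optone(\mathcal{H}(t))$, so $\optone(\mathcal{J}(t)\cup\mathcal{H}(t)) \leq \optone(\mathcal{J}(t)) + 2\sum_{j=1}^N \int_{\tau_j-1/2}^{\tau_j}\optone(\mathcal{J}(t'))\,\mathrm{d}t'$. Now integrate over $t\in\sspan(\mathcal{J})$: the first term integrates to $\int\optone(\mathcal{J}(t))\,\mathrm{d}t$. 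For the second term, I would swap the order of integration — a fixed instant $t'$ appears in the sub-window-interval $[\tau_j-1/2,\tau_j)$ for values of $\tau_j$ within distance $O(d)$ of $t'$, and as $t$ ranges over $\sspan(\mathcal{J})$ the sub-window endpoints $\tau_j$ that bring $t'$ into play form a set of measure $O(d)$ in $t$; thus each $\optone(\mathcal{J}(t'))$ gets weight $O(d)$ in the double integral. This yields $\int\optone(\mathcal{J}(t)\cup\mathcal{H}(t))\,\mathrm{d}t \leq O(d)\int\optone(\mathcal{J}(t'))\,\mathrm{d}t'$, as claimed. The main obstacle is making the Fubini/change-of-order bookkeeping precise — in particular verifying that the total ``multiplicity'' with which any instant $t'$ is charged across all outer instants $t$ and all $N$ sub-windows is indeed $O(d)$ and not $O(d^2)$; the point is that for fixed $t'$, the set of pairs $(t, j)$ with $t'\in[\tau_j(t)-1/2,\tau_j(t))$ corresponds to $t$ ranging over an interval of length $\leq d + O(1)$ (since $t'$ must lie within $d$ of $t$), with exactly $O(1)$ valid $j$ per $t$, so the measure is $O(d)$. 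A secondary technical point is handling the intersection with $\sspan(\mathcal{J})$ in the definition of $I(F_d(J))$, which only shrinks active intervals and hence can only help.
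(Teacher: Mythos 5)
Your proof is correct, and it takes a genuinely different and more elementary route than the paper's. The paper first converts $\optone$ into a per-job charging scheme $\tilde{r}$ (via Theorems \ref{thm:cn_app} and \ref{thm:tilder_2}); because the charge on a job depends on the whole job set, it must establish the monotonicity property of Theorem \ref{thm:tilder_1} to get a subadditivity-type inequality (Claim 2 in the paper's proof), and it then transfers the cost of the ending jobs back onto their original active intervals via a time-dilation change of variables mapping $[I(J)^+, I(J)^++d)$ linearly onto $I(J)$, paying a factor $d/\len(J)\leq d$. You instead work directly with $\optone$, using only two facts immediate from the integer program (\ref{opt:1d}): subadditivity over disjoint job sets (the sum of feasible configurations is feasible, which the paper itself invokes in (\ref{eq:split})) and monotonicity under set inclusion. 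Your tiling of $(t-d,t]$ by length-$\frac12$ sub-windows is the right fix for the difficulty you flag (jobs ending in a length-$1$ window need not share a common active instant): since every job has length at least $1$, all jobs ending in a sub-window $[\tau,\tau+\frac12)$ are active throughout $[\tau-\frac12,\tau)$, so their one-shot cost is at most $2\int_{\tau-1/2}^{\tau}\optone(\mathcal{J}(t'))\,\mathrm{d}t'$; these charging intervals are disjoint across sub-windows, so summing over $j$ charges each outer instant $t$ to at most $2\int_{t-d-1/2}^{t}\optone(\mathcal{J}(t'))\,\mathrm{d}t'$, and the Fubini swap gives each $t'$ total weight at most $2(d+\frac12)$ — multiplicity $O(d)$, not $O(d^2)$, exactly as you argue, since for fixed $t'$ the admissible $t$ form an interval of length $d+O(1)$ with a unique sub-window index each. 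The remaining bookkeeping (half-open conventions, the intersection with $\sspan(\mathcal{J})$, which as you note only shrinks $\mathcal{H}(t)$, and the fact that the charging intervals lie in $\sspan(\mathcal{J})$ whenever nonempty) is routine. What your approach buys is substantial: Theorem \ref{thm:hard} no longer depends on the modified machine configuration of Section \ref{sec:modifiedapprox} at all — Theorems \ref{thm:tilder_1} and \ref{thm:tilder_2} and their lengthy appendix proofs exist solely to support this theorem — and your constant (about $2d+2$) is comparable to, arguably better than, the paper's once the $O(1)$ losses from Theorems \ref{thm:cn_app} and \ref{thm:tilder_2} are accounted for.
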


Applying Theorem \ref{thm:hard} by letting $d=\mu$ and $2\mu$, the cost of $ALG_{online}$ satisfies
\begin{eqnarray}
\lefteqn{ALG_{online}(\mathcal{J}) = \int_{t\in{}\sspan(\mathcal{J})} \sum_{i\in \mathcal{M}} N(i,t)\cdot r_i \,\mathrm{d}t} \nonumber \\
&\leq{}& O(1)\cdot{} \int_{t\in\sspan(\mathcal{J})} \optone (\mathcal{J}(t)\cup{}\mathcal{R}(t)) \,\mathrm{d}t \quad \text{(by~Theorem~\ref{thm:N_leq_opt})} \nonumber \\
&\leq{}& O(1)\cdot{} \int_{t\in\sspan(\mathcal{J})} \Big(\optone (\mathcal{J}(t)\cup{}\mathcal{F}_3(t)) \nonumber \\
& &\quad \quad + \optone (\mathcal{F}_1(t) \cup{} \mathcal{F}_2(t))\Big) \,\mathrm{d}t \quad \text{(by~equation~(\ref{eq:split}))} \nonumber \\
&\leq{}& O(\mu)\cdot{}\int_{t\in\sspan(\mathcal{J})} \optone (\mathcal{J}(t)) \,\mathrm{d}t \quad \text{(by~Theorem~\ref{thm:hard})} \nonumber \\
&\leq{}& O(\mu)\cdot{} \opttwo (\mathcal{J}). \quad \text{(by~equation~(\ref{eq:opt2toopt1}))} \nonumber
\end{eqnarray}

\subsection{A modified $O(1)$ approximation of optimal one-shot scheduling}
\label{sec:modifiedapprox}

Recall that in Section \ref{sec:first_app}, we defined an alternative machine configuration $\cn_{z^\diamond}$ for one-shot scheduling of a set of jobs $\mathcal{J}^{1d}$, where the highest-indexed machine type $z^\diamond$ used is derived from $\mathcal{J}^{1d}$. In order to prove Theorem \ref{thm:hard}, we shall charge the machine cost of a machine configuration onto individual jobs in $\mathcal{J}^{1d}$ and have a desired ``monotonic'' property that the cost charged on each job is non-increasing as the job set $\mathcal{J}^{1d}$ expands (and hence the machine configuration changes with $\mathcal{J}^{1d}$). To achieve this ``monotonic'' property, the alternative machine configuration $\cn_{z^\diamond}$ defined in Section \ref{sec:first_app} is not adequate. We modify it as follows.

Given a set of jobs $\mathcal{J}^{1d}$, the modified machine configuration
uses the same highest-indexed machine type $z^\diamond$ as the alternative machine configuration $\cn_{z^\diamond}$. For each machine type $i\in \mathcal{M}$, let $H_i = \{J\in\mathcal{J}^{1d}:m(J)\in{}A_i \}$ denote the set of jobs whose exact machine types are in the tree rooted at type $i$. Then, $\{H_i: i \in T(z^\diamond)\}$ is a partitioning of $\mathcal{J}^{1d}$.

For each $i\in{}T(z^\diamond)\setminus{}\{z^\diamond\}$, the jobs in $H_i$ are always accommodated by type-$i$ machines in the modified machine configuration. Hence, we need
$\frac{S(H_i)}{g_i}$ type-$i$ machines with a total cost of $S(H_i)\frac{r_i}{g_i}$. Each job is charged a cost proportional to each size, i.e., each job $J \in H_i$ is charged a cost of $\tilde{r}(\mathcal{J}^{1d})(J) = s(J) \frac{r_i}{g_i}$, so that their total cost matches $S(H_i)\frac{r_i}{g_i}$. Note that we include the job set $\mathcal{J}^{1d}$ in the notation $\tilde{r}(\mathcal{J}^{1d})(J)$ to indicate that the machine configuration and hence the cost charged to each job are dependent on $\mathcal{J}^{1d}$.

For the jobs in $H_{z^\diamond}$, if their total size at least the capacity of one type-$z^\diamond$ machine, i.e., $S(H_{z^\diamond}) \geq g_{z^\diamond}$, all of them are accommodated by type-$z^\diamond$ machines in the modified machine configuration. Hence, we need
$\frac{S(H_{z^\diamond})}{g_{z^\diamond}}$ type-$z^\diamond$ machines with a total cost of $S(H_{z^\diamond})\frac{r_{z^\diamond}}{g_{z^\diamond}}$. Again, each job is charged a cost proportional to each size, i.e., each job $J \in H_{z^\diamond}$ is charged a cost of $\tilde{r}(\mathcal{J}^{1d})(J) = s(J) \frac{r_{z^\diamond}}{g_{z^\diamond}}$, so that their total cost matches $S(H_{z^\diamond})\frac{r_{z^\diamond}}{g_{z^\diamond}}$.

If $S(H_{z^\diamond}) < g_{z^\diamond}$, we aim to use one type-$z^\diamond$ machine to accommodate all the jobs in $H_{z^\diamond}$
with a cost of $r_{z^\diamond}$. The cost is charged onto the jobs in $H_{z^\diamond}$ as follows. Note that
the jobs $H_{z^\diamond}$ can be further partitioned into $H_{z^\diamond}^h := \{J\in\mathcal{J}^{1d}:m(J) = z^\diamond\}$ and $\{H_i: i \in f(z^\diamond)\}$ where $f(z^\diamond)$ is the set of $z^\diamond$'s child types in the cost-per-capacity graph. Let $c:= S(H_{z^\diamond}^h)\frac{r_{z^\diamond}}{g_{z^\diamond}} + \sum_{i\in{}f(z^\diamond)} S(H_i)\frac{r_i}{g_i}$ be the cost of using type-$z^\diamond$ machines to accommodate $H_{z^\diamond}^h$ and type-$i$ machines to accommodate each $H_i$ $(i \in f(z^\diamond))$. If $c > r_{z^\diamond}$, each job $J \in H_{z^\diamond}^h$ is charged a cost of $\tilde{r}(\mathcal{J}^{1d})(J) = s(J) \frac{r_{z^\diamond}}{g_{z^\diamond}}$, and each job $J \in H_i$ is charged a cost of $\tilde{r}(\mathcal{J}^{1d})(J) = s(J) (\frac{r_i}{g_i} - (\frac{r_i}{g_i} - \frac{r_{z^\diamond}}{g_{z^\diamond}}) \cdot{} \alpha^*)$, where $\alpha^*\in{}\left(0,1\right)$ is given by $\alpha^* = (c - r_{z^\diamond}) / \sum_{i \in f(z^\diamond)}\sum_{J \in H_i} s(J)(\frac{r_i}{g_i} - \frac{r_{z^\diamond}}{g_{z^\diamond}})$ to ensure that the total cost charged is $r_{z^\diamond}$. Since $\alpha^*\in{}\left(0,1\right)$, we have $\tilde{r}(\mathcal{J}^{1d})(J)\in{}\left(\frac{r_{z^\diamond}}{g_{z^\diamond}},\frac{r_i}{g_i}\right)$ for each $J \in H_i$.
Otherwise, if $c \leq r_{z^\diamond}$, each job $J \in H_{z^\diamond}^h$ is charged a cost of $\tilde{r}(\mathcal{J}^{1d})(J) = s(J) \frac{r_{z^\diamond }}{g_{z^\diamond}}\cdot{}(1+\beta^*)$ if $H_{z^\diamond}^h \neq \emptyset$, and each job $J \in H_i$ is charged a cost of $\tilde{r}(\mathcal{J}^{1d})(J) = s(J) \frac{r_i}{g_i}$, where $\beta^* \geq 0$ is given by $\beta^* = (r_{z^\diamond} - c) / \sum_{J\in H_{z^\diamond}^h}s(J) \frac{r_{z^\diamond}}{g_{z^\diamond}}$ to ensure that the total cost charged is $r_{z^\diamond}$.

With the costs charged on individual jobs, the total cost of the modified machine configuration is given by $\sum_{J\in{}\mathcal{J}^{1d}} \tilde{r}(\mathcal{J}^{1d})(J)$.
The charging is deliberately designed in the above way to achieve the ``monotonic'' property below. Note that a major challenge in guaranteeing the ``monotonic'' property is that the highest-indexed machine type $z^\diamond$ used is derived from $\mathcal{J}^{1d}$ and it may change as $\mathcal{J}^{1d}$ expands.

\begin{theorem} \label{thm:tilder_1}
For any two sets of jobs $\mathcal{X}\subset{}\mathcal{Y}$, we have $\tilde{r}(\mathcal{X})(J) \geq{} \tilde{r}(\mathcal{Y})(J)$, for each job $J\in{}\mathcal{X}$.
\end{theorem}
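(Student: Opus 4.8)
The plan is to argue by a careful case analysis on how the highest-indexed machine types $z_X^\diamond$ and $z_Y^\diamond$ (used by the modified configurations for $\mathcal{X}$ and $\mathcal{Y}$ respectively) relate to each other, and on which of the three charging regimes (the $S(H_{z^\diamond})\geq g_{z^\diamond}$ case, the $c>r_{z^\diamond}$ case, and the $c\le r_{z^\diamond}$ case) applies for each of $\mathcal{X}$ and $\mathcal{Y}$. First I would record the monotonicity facts that follow directly from $\mathcal{X}\subset\mathcal{Y}$: for every machine type $i$ the job sets satisfy $H_i^\mathcal{X}\subset H_i^\mathcal{Y}$, hence $S(H_i^\mathcal{X})\le S(H_i^\mathcal{Y})$; also $k_0^\mathcal{X}\le k_0^\mathcal{Y}$, and by Proposition~\ref{pro:alternative}(2) together with the ``first decent'' definition of $z^\diamond$ one should be able to show $z_X^\diamond \le z_Y^\diamond$ (if $\cn_{z}$ is decent for $\mathcal{Y}$ at some $z$, it is decent for $\mathcal{X}$, since shrinking the job set only shrinks every $S(H_z)$ and hence every subtree cost — this needs to be checked against the definition of ``decent''). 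I would also note $z_X^\diamond, z_Y^\diamond \in P(k_0^\mathcal{Y})$ so they are comparable along a root-to-node path.

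The core of the argument is then: fix a job $J\in\mathcal{X}$ and let $i$ be its exact-machine subtree root in $T(z_X^\diamond)$, i.e. $J\in H_i^\mathcal{X}$. The value $\tilde r(\mathcal{X})(J)$ is $s(J)$ times one of the ``per-capacity rates'' in the menu $\{\,r_i/g_i \text{ for some } i,\ \ r_{z^\diamond}/g_{z^\diamond},\ \ \text{something in between},\ \ (r_{z^\diamond}/g_{z^\diamond})(1+\beta^*)\,\}$. I would show that enlarging from $\mathcal{X}$ to $\mathcal{Y}$ can only move this rate downward. The key monotone quantities to track are: (a) which subtree root $i'\in T(z_Y^\diamond)$ now contains $J$ — since $z_Y^\diamond\ge z_X^\diamond$ and the trees nest (Proposition~\ref{pro:T_union}, Proposition~\ref{pro:T_two}), the new root $i'$ satisfies $i'\le i$ or $i'=z_Y^\diamond$ along the path, and by Proposition~\ref{pro:T_ratio} a lower- or equal-indexed root in the same $T(\cdot)$ has a smaller or equal $r/g$; (b) the auxiliary cost $c$ for the relevant $z^\diamond$, which is a sum of $S(H)\cdot r/g$ terms and hence nondecreasing in the job set, so the ``$c>r_{z^\diamond}$'' regime is ``stable'' under enlargement while the ``$c\le r_{z^\diamond}$'' regime is the one that can switch; (c) the interpolation parameters $\alpha^*\in(0,1)$ and $\beta^*\ge 0$, where larger $c$ means larger $\alpha^*$ (pushing the rate of a child job down toward $r_{z^\diamond}/g_{z^\diamond}$) and smaller $\beta^*$ (pushing the rate of an $H^h_{z^\diamond}$ job down toward $r_{z^\diamond}/g_{z^\diamond}$). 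In every regime the charged rate is sandwiched between $r_{z^\diamond}/g_{z^\diamond}$ and the exact-subtree rate $r_i/g_i$, and both endpoints behave monotonically, so I would organize the proof as: (i) the subtree-root rate $r_i/g_i$ seen by $J$ does not increase; (ii) within a fixed regime, the interpolation only shifts the rate toward the (not-increasing) lower endpoint $r_{z^\diamond}/g_{z^\diamond}$; (iii) regime transitions, which always go in the direction ``$c\le r_{z^\diamond}$'' $\to$ ``$c>r_{z^\diamond}$'' or toward the bulk case $S(H_{z^\diamond})\ge g_{z^\diamond}$, strictly help.

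The main obstacle, and where I expect to spend most of the effort, is the bookkeeping when $z^\diamond$ itself changes, i.e. $z_X^\diamond < z_Y^\diamond$. Then the job $J$ that was in $H^h_{z_X^\diamond}$ or in some child subtree of $z_X^\diamond$ under $\mathcal{X}$ becomes, under $\mathcal{Y}$, an interior job of the strictly larger tree rooted at $z_Y^\diamond$; one must verify that its rate under $\mathcal{Y}$ — which could be the flat rate $r_{z_Y^\diamond}/g_{z_Y^\diamond}$, or $s(J)$ times the rate of whichever $T(z_Y^\diamond)$-subtree root now contains it — is still at most its rate under $\mathcal{X}$. Here I would lean on Proposition~\ref{pro:T_ratio} applied inside $T(z_Y^\diamond)$, on the relation $z_X^\diamond = v(\text{next element})-1$ type facts from Proposition~\ref{pro:T_union}/Proposition~\ref{pro:T_two} that pin down exactly which new subtree contains $J$, and on the fact that $z_X^\diamond$ was \emph{not} decent for $\mathcal{X}$ only if forced — more precisely, I would use that $\cn_{z_X^\diamond}$ \emph{is} decent for $\mathcal{X}$ (it is the chosen one) to get the inequality $r_{z_X^\diamond}/g_{z_X^\diamond} \ge$ (the aggregate rate charged below it), and combine with Proposition~\ref{pro:cn&r'_total}/Proposition~\ref{pro:cn&r'_head} to bound the new rate. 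I expect the cleanest route is to prove a single uniform statement — ``$\tilde r(\mathcal{J}^{1d})(J)\le s(J)\cdot \big(r_{m(J)}/g_{m(J)}\big)$ always, and $\tilde r(\mathcal{J}^{1d})(J)\ge s(J)\cdot\big(r_{z^\diamond}/g_{z^\diamond}\big)$ always, with the exact position determined monotonically by the two scalars $c$ and $S(H_{z^\diamond})$'' — and then derive Theorem~\ref{thm:tilder_1} by checking these three scalars ($r_i/g_i$ for $J$'s subtree root, $c$, and $S(H_{z^\diamond})/g_{z^\diamond}$) all move the charged rate downward as the job set grows. The detailed verification of every regime-pair is routine but lengthy, and I would relegate it to the appendix as the paper indicates.
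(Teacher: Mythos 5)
Your overall strategy is the one the paper uses: show the top type $z^\diamond$ can only move up as the job set grows, reduce to comparing charged rates job by job, and exploit the fact that every rate lies between $r_{z^\diamond}/g_{z^\diamond}$ and the rate of the job's subtree root. But the two monotonicity claims you defer as routine are precisely the technical cruxes, and one of them is asserted via reasoning that does not hold as stated. Specifically, ``larger $c$ means larger $\alpha^*$'' is not a valid deduction: $\alpha^*=(c-r_{z^\diamond})/\sum_{i\in f(z^\diamond)}\sum_{J\in H_i}s(J)\bigl(\tfrac{r_i}{g_i}-\tfrac{r_{z^\diamond}}{g_{z^\diamond}}\bigr)$, and enlarging the job set increases the denominator along with the numerator, so the ratio could a priori move either way. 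The paper instead observes that the total cost charged to the child-subtree jobs is a decreasing linear function $l(\alpha)$ of the interpolation parameter, pinned to the value $r_{z^\diamond}-h$; since $l_{\mathcal{X}}\le l_{\mathcal{Y}}$ pointwise and $h_{\mathcal{X}}\le h_{\mathcal{Y}}$, one gets $l_{\mathcal{X}}(\alpha^*_{\mathcal{Y}})\le l_{\mathcal{Y}}(\alpha^*_{\mathcal{Y}})=r_{z^\diamond}-h_{\mathcal{Y}}\le r_{z^\diamond}-h_{\mathcal{X}}=l_{\mathcal{X}}(\alpha^*_{\mathcal{X}})$, hence $\alpha^*_{\mathcal{X}}\le\alpha^*_{\mathcal{Y}}$. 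Without this (or an equivalent) argument, step (ii) of your plan is unsupported in the regime where both $\mathcal{X}$ and $\mathcal{Y}$ satisfy $c>r_{z^\diamond}$.

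The second gap is the claim that decency is inherited when the job set shrinks ``since shrinking the job set only shrinks every subtree cost.'' The configurations being compared are $\cn^{\mathcal{X}}_{h_1}$ and $\cn^{\mathcal{Y}}_{k_1}$, rooted at possibly different types, and $\cn_{z^*}(z^*)=\ceil*{S(H_{z^*})/g_{z^*}}$ can jump by one when a single job is added, so the subtree costs are not simply monotone in $S(H_z)$. The paper isolates this as a separate suffix-sum monotonicity lemma (Lemma \ref{lem:compare_0}), proved by adding one job at a time and splitting on whether the ceiling increments, and only then deduces $z^\diamond_{\mathcal{X}}\le z^\diamond_{\mathcal{Y}}$ via Proposition \ref{pro:alternative}. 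Two smaller corrections: $z^\diamond_{\mathcal{X}}$ need not lie in $P(k_0^{\mathcal{Y}})$, so the two top types are related through the partition $\{A_z:z\in T(z^\diamond_{\mathcal{Y}})\}$ of Proposition \ref{pro:T_union} rather than along a common root path (the new subtree root containing $J$ is an ancestor of the old one, hence has a higher index but a lower $r/g$); and your proposed uniform bound $\tilde r(\mathcal{J}^{1d})(J)\le s(J)\,r_{m(J)}/g_{m(J)}$ is false for jobs in $H^h_{z^\diamond}$ when $\beta^*>0$, so the ``cleanest route'' you suggest at the end would not go through as stated.
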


In addition, the modified machine configuration is an $O(1)$ approximation of the alternative machine configuration. Hence, it remains an $O(1)$ approximation of optimal one-shot scheduling.

\begin{theorem} \label{thm:tilder_2}
$\frac{1}{2}\cdot{}\sum_{J\in{}\mathcal{J}^{1d}} \tilde{r}(\mathcal{J}^{1d})(J) \leq{} \sum_{z\in T(z^\diamond)} \cn_{z^\diamond} (z) r_z \leq{} \frac{15}{7}\cdot{} \sum_{J\in{}\mathcal{J}^{1d}} \tilde{r}(\mathcal{J}^{1d})(J)$.
\end{theorem}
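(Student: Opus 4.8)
The plan is to squeeze both the middle and the right‑hand quantities between the common quantity $\Sigma := \sum_{z\in T(z^\diamond)}S(H_z)\,r_z/g_z$ and $\Sigma + r_{z^\diamond}$, the slack $r_{z^\diamond}$ coming from the one type‑$z^\diamond$ machine that must be paid for in full. Write $P := \sum_{z\in T(z^\diamond)}\cn_{z^\diamond}(z)\,r_z$ for the cost of the alternative configuration and $Q := \sum_{J\in\mathcal{J}^{1d}}\tilde{r}(\mathcal{J}^{1d})(J)$ for the cost of the modified configuration; the goal is $\tfrac12 Q \le P \le \tfrac{15}{7}Q$.

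First I would bound $P$. Proposition~\ref{pro:cn&r'_head} with $z_0 = \min T(z^\diamond)$ gives $P \ge \Sigma$, and Proposition~\ref{pro:cn&r'_total} with $z_0 = \min T(z^\diamond)$ and $z_1 = z^\diamond$ gives $|P - \Sigma| < r_{z^\diamond}$. Combined with the trivial bound $\cn_{z^\diamond}(z^\diamond) = \ceil*{S(H_{z^\diamond})/g_{z^\diamond}} \ge 1$ (assuming $\mathcal{J}^{1d}\neq\emptyset$; otherwise the statement is vacuous), this yields $\max\{\Sigma,\, r_{z^\diamond}\} \le P < \Sigma + r_{z^\diamond}$.

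Next I would evaluate $Q$ directly from the definition of the modified configuration. Writing $X := \sum_{z\in T(z^\diamond)\setminus\{z^\diamond\}}S(H_z)\,r_z/g_z$, so that $\Sigma = X + S(H_{z^\diamond})\,r_{z^\diamond}/g_{z^\diamond}$, there are two cases. If $S(H_{z^\diamond}) \ge g_{z^\diamond}$, every job of $H_{z^\diamond}$ is charged $s(J)\,r_{z^\diamond}/g_{z^\diamond}$, so $Q = X + S(H_{z^\diamond})\,r_{z^\diamond}/g_{z^\diamond} = \Sigma$ and also $Q \ge S(H_{z^\diamond})\,r_{z^\diamond}/g_{z^\diamond} \ge r_{z^\diamond}$. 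If $S(H_{z^\diamond}) < g_{z^\diamond}$, a short telescoping of the definitions of $\alpha^*$ (when $c > r_{z^\diamond}$) and of $\beta^*$ (when $c \le r_{z^\diamond}$ and $H_{z^\diamond}^h\neq\emptyset$) shows that the jobs of $H_{z^\diamond}$ are charged exactly $r_{z^\diamond}$ in aggregate, so $Q = X + r_{z^\diamond}$; since $S(H_{z^\diamond}) < g_{z^\diamond}$ this gives $\Sigma \le Q \le \Sigma + r_{z^\diamond}$ and $Q \ge r_{z^\diamond}$. So in both cases $\max\{\Sigma,\, r_{z^\diamond}\} \le Q \le \Sigma + r_{z^\diamond}$, and the two inequalities of the theorem now follow by elementary arithmetic: $Q \le \Sigma + r_{z^\diamond} \le P + P = 2P$, hence $P \ge \tfrac12 Q$; and $P < \Sigma + r_{z^\diamond} \le Q + Q = 2Q \le \tfrac{15}{7}Q$ (using $2 \le \tfrac{15}{7}$), hence $P \le \tfrac{15}{7}Q$.

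The step I expect to be the main obstacle is the bookkeeping in the case $S(H_{z^\diamond}) < g_{z^\diamond}$: one has to check carefully that the $\alpha^*$- and $\beta^*$-charging really telescopes to exactly $r_{z^\diamond}$, and in particular to handle the degenerate sub-case $H_{z^\diamond}^h = \emptyset$ (which forces $z^\diamond > k_0$, i.e.\ $z^\diamond$ is a strict ancestor of the highest exact type) together with $c \le r_{z^\diamond}$, where the $\beta^*$-charging is vacuous and the argument $Q\ge r_{z^\diamond}$ above fails. For this sub-case one must either use the characterization of $z^\diamond$ in Proposition~\ref{pro:alternative}(2) --- chasing how non-decency of $\cn_{k_0}$ (via Propositions~\ref{pro:T_ratio}, \ref{pro:T_two}, \ref{pro:cn&r'_total}) forces enough total job size to put us back in the case $S(H_{z^\diamond}) \ge g_{z^\diamond}$ --- to rule it out, or else re-derive $Q = X + \sum_{i\in f(z^\diamond)}S(H_i)\,r_i/g_i$ and bound $P$ in that configuration directly so that the ``cheap-jobs'' scenario cannot break the upper bound $P \le \tfrac{15}{7}Q$.
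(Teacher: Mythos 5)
Your sandwich argument $\max\{\Sigma, r_{z^\diamond}\}\le P<\Sigma+r_{z^\diamond}$ and $\max\{\Sigma, r_{z^\diamond}\}\le Q\le \Sigma+r_{z^\diamond}$ (with $\Sigma=\sum_{z\in T(z^\diamond)}S(H_z)r_z/g_z$) is correct and is essentially a cleaner, unified packaging of what the paper does case by case via Propositions \ref{pro:cn&r'_total} and \ref{pro:cn&r'_head}; in particular the left inequality $\tfrac12 Q\le P$ goes through in full (even in the degenerate sub-case, since there $Q=X+c_1\le X+c_2+r_{z^\diamond}=\Sigma+r_{z^\diamond}$ because $c_1\le r_{z^\diamond}$). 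The problem is that you have not actually proved the right inequality: the sub-case you flag at the end ($S(H_{z^\diamond})<g_{z^\diamond}$, $H_{z^\diamond}^h=\emptyset$, $c\le r_{z^\diamond}$, i.e.\ the paper's Case 1.2) is not an edge case to be dispatched in passing --- it is precisely the case that forces the constant $\tfrac{15}{7}$ rather than $2$, and your two proposed fixes do not close it. The first fix (ruling the case out by showing $S(H_{z^\diamond})\ge g_{z^\diamond}$) is impossible: when $z^\diamond$ is a strict ancestor of $k_0$ one can perfectly well have $S(H_{z^\diamond})<g_{z^\diamond}$ together with $c\le r_{z^\diamond}$, and then $Q=X+c_1$ with no charge of $r_{z^\diamond}$ made to anyone, so $Q\ge r_{z^\diamond}$ genuinely fails. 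The second fix is stated only as a direction, not an argument.

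What is actually needed there is a quantitative lower bound $Q>\tfrac78 r_{z^\diamond}$, obtained from the \emph{minimality} of $z^\diamond$: letting $z'\in P(k_0)$ be the child of $z^\diamond$ on the path $P(k_0)$, Proposition \ref{pro:alternative}(2) says $\cn_{z'}$ is not decent, so $\sum_{z\in T(z')\cap A_{z^\triangle}}\cn_{z'}(z)r_z>r_{z^\triangle}\ge r_{z^\diamond}$ for some $z^\triangle\in P(z')\setminus\{z'\}$; combining with Proposition \ref{pro:cn&r'_total} (which converts $\cn_{z'}$-cost back to $\sum S(H_z)r_z/g_z$ at a loss of at most $r_{z'}\le\tfrac18 r_{z^\diamond}$) gives $Q=\sum_{z\in T(z')}S(H_z)r_z/g_z>r_{z^\diamond}-r_{z'}\ge\tfrac78 r_{z^\diamond}$, whence $P<\Sigma+r_{z^\diamond}\le Q+\tfrac87 Q=\tfrac{15}{7}Q$. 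Without this step your proof establishes $P\le\tfrac{15}{7}Q$ only outside that sub-case, so as written it has a genuine gap exactly where the theorem's constant is determined.
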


\subsection{Proof of Theorem \ref{thm:hard}}
Now, we are ready to finish the proof of Theorem \ref{thm:hard} which is the last piece to be discussed.

\begin{proof}

By Theorem \ref{thm:cn_app} and Theorem \ref{thm:tilder_2}, it suffices to show that \begin{eqnarray}
& & \int_{t\in\sspan(\mathcal{J})} \sum_{J\in{}\mathcal{J}(t)\cup{}\mathcal{H}(t)} \tilde{r}(\mathcal{J}(t)\cup{}\mathcal{H}(t))(J) \,\mathrm{d}t \nonumber \\
&\leq{}& O(d)\cdot{}\int_{t\in\sspan(\mathcal{J})} \sum_{J\in{}\mathcal{J}(t)}  \tilde{r}(\mathcal{J}(t))(J) \,\mathrm{d}t. \label{eq:tilder3}
\end{eqnarray}

For each job $J\in{}\mathcal{J}$ and each $t\in{}I(J)$, define $\tilde{r}_1 (J,t)=\tilde{r}(\mathcal{J}(t))(J)$.
For each $t\in\sspan(\mathcal{J})$, let $\mathcal{G}_{t}:=\{J\in\mathcal{J}:t-d<I(J)^+ \leq t\}$ be the set of jobs ending in the period $(t-d,t]$. By definition, $\mathcal{H}(t) = \{F_d (J) : J \in \mathcal{G}_{t}\}$, i.e., $\mathcal{G}_{t}$ is exactly all the jobs whose $d$ extension (to right) covers time $t$.
For each job $J\in\mathcal{J}$, define
$\tilde{r}_2 (J,t):=\tilde{r}_1 \left(J,\frac{t-I(J)^+}{d} \len(J) + I(J)^- \right)$ for each $t\in{}[I(J)^+,I(J)^++d)$. We have
\begin{eqnarray}
\lefteqn{\int_{t\in\sspan(\mathcal{J})} \left(\sum_{J\in\mathcal{G}_t} \tilde{r}_2 (J,t)\right) \,\mathrm{d}t} \nonumber \\
&= & \sum_{J\in\mathcal{J}} \left(\int_{t\in[I(J)^+,I(J)^++d)\cap{}\sspan(\mathcal{J})} \tilde{r}_2(J,t) \,\mathrm{d}t \right) \nonumber \\
&\leq{}& \sum_{J\in\mathcal{J}} \left(\int_{t\in[I(J)^+,I(J)^++d)} \tilde{r}_2 (J,t) \,\mathrm{d}t \right) \nonumber \\
&= & \sum_{J\in\mathcal{J}} \left(\int_{t\in[I(J)^+,I(J)^++d)} \tilde{r}_1 (J,\frac{t-I(J)^+}{d} \len(J) + I(J)^- ) \,\mathrm{d}t \right) \nonumber \\
&= & \sum_{J\in\mathcal{J}} \left(\int_{\tau\in{}I(J)} \tilde{r}_1(J,\tau) \, \frac{\mathrm{d}t}{\mathrm{d}\tau} \,\mathrm{d}\tau \right) \quad \text{(where~} \tau= \frac{t-I(J)^+}{d} \len(J)+I(J)^-\text{)} \nonumber \\
&= & \sum_{J\in{}\mathcal{J}} \left(\int_{\tau\in{}I(J)} \tilde{r}_1 (J,\tau) \frac{d}{\len(J)} \,\mathrm{d}\tau \right) \nonumber \\
&\leq{}& d\cdot{} \sum_{J\in\mathcal{J}} \left(\int_{\tau\in{}I(J)} \tilde{r}_1 (J,\tau) \,\mathrm{d}\tau \right) \quad \text{(since~} 1 \leq \len(J) \leq \mu \leq d\text{)} \nonumber \\
&= & d\cdot{} \int_{t\in\sspan(\mathcal{J})} \sum_{J\in\mathcal{J}(t)} \tilde{r}(\mathcal{J}(t))(J) \,\mathrm{d}t, \label{eq:tilder2}
\end{eqnarray}
where we have used a swap of two summations in the first equality and the last equality.

\textbf{Claim~1}: For each $t\in[t_0-d,t_0)$, we have $\mathcal{J}(t)\subset{}\mathcal{J}(t_0)\cup{}\mathcal{G}_{t_0}$.

\textbf{Proof~of~Claim~1}.
Each job $J\in \mathcal{J}(t)$ is active at $t$. If $J$ is also active at $t_0$, then $J\in \mathcal{J}(t_0)$. If $J$ is not active at $t_0$, it must end in the period $(t,t_0]$, i.e., $t < I(J)^+ \leq t_0$. It follows from $t\in[t_0-d,t_0)$ that $t_0-d < I(J)^+ \leq t_0$. Hence, $J\in \mathcal{G}_{t_0}$.
\textbf{End~of~Claim~1}

\textbf{Claim~2}: For each $t_0\in\sspan(\mathcal{J})$, we have $\sum_{J\in\mathcal{J}(t_0)} \tilde{r}(\mathcal{J}(t_0))(J) + \sum_{J\in\mathcal{G}_{t_0}} \tilde{r}_2 (J,t_0) \geq{} \sum_{J\in{} \mathcal{J}(t_0) \cup \mathcal{H}(t_0)} \tilde{r}(\mathcal{J}(t_0) \cup \mathcal{H}(t_0))(J)$.

\textbf{Proof~of~Claim~2}. By Theorem \ref{thm:tilder_1}, $\mathcal{J}(t_0)\subset{}\mathcal{J}(t_0) \cup \mathcal{G}_{t_0}$ implies that for each job $J\in{}\mathcal{J}(t_0)$, $\tilde{r}(\mathcal{J}(t_0))(J) \geq{} \tilde{r}(\mathcal{J}(t_0) \cup \mathcal{G}_{t_0})(J)$. Recall that by definition, $\mathcal{H}(t_0) = \{F_d (J) : J \in \mathcal{G}_{t_0}\}$, which implies that $F_d$ is actually a 1-1 correspondence between $\mathcal{G}_{t_0}$ and $\mathcal{H}(t_0)$ such that $s(J)=s(F_d(J))$ for each job $J\in \mathcal{G}_{t_0}$. Therefore,
\begin{equation*}
\tilde{r}(\mathcal{J}(t_0) \cup \mathcal{H}(t_0)) (J) = \tilde{r}(\mathcal{J}(t_0) \cup \mathcal{G}_{t_0})(J) \text{~for~each~}J\in\mathcal{J}(t_0),
\end{equation*}
and
\begin{equation*}
\tilde{r}(\mathcal{J}(t_0) \cup \mathcal{H}(t_0))(F_d(J)) = \tilde{r}(\mathcal{J}(t_0) \cup \mathcal{G}_{t_0})(J) \text{~for~each~}J\in\mathcal{G}_{t_0}.
\end{equation*}

Thus, it remains and suffices to show that for each job $J\in\mathcal{G}_{t_0}$, $\tilde{r}_2 (J,t_0) \geq{} \tilde{r}(\mathcal{J}(t_0) \cup \mathcal{G}_{t_0})(J)$.

Take any job $J\in{}\mathcal{G}_{t_0}$. Let $t_1:=\frac{t_0-I(J)^+}{d} \len(J)+I(J)^- $.
We are to show that $t_0-d \leq{} t_1< t_0$.
The left inequality is equivalent to $\frac{t_0-d-I(J)^-}{\len(J)} \leq{} \frac{t_0-I(J)^+}{d} = \frac{t_0-d-I(J)^- + d-\len(J)}{\len(J)+d-\len(J)}$.
Since $d-\len(J) \geq d-\mu \geq 0$, it suffices to show that $\frac{t_0-d-I(J)^-}{\len(J)} \leq{} 1$.
This is easy to get: $J\in{}\mathcal{G}_{t_0} \Rightarrow{} t_0-d < I(J)^+ \leq t_0 \Rightarrow{} t_0-d < I(J)^-+\len(J) \Rightarrow{} t_0-d-I(J)^- < \len(J)$.
For the right inequality, $t_0-d < I(J)^+$ implies that $\frac{t_0-I(J)^+}{d} \len(J)+ I(J)^- < \len(J)+I(J)^- = I(J)^+ \leq{} t_0$.
Therefore, $t_1\in{} [t_0-d,t_0)$.

By Claim 1 and Theorem \ref{thm:tilder_1}, $\mathcal{J}(t_1)\subset{}\mathcal{J}(t_0) \cup \mathcal{G}_{t_0}$ implies that $\tilde{r}_2 (J,t_0) = \tilde{r}_1 (J,t_1 ) = \tilde{r}(\mathcal{J}(t_1))(J) \geq{} \tilde{r}(\mathcal{J}(t_0) \cup \mathcal{G}_{t_0})(J)$. \textbf{End~of~Claim~2}

Eventually,
\begin{equation*}
\begin{aligned}
&\quad  \int_{t\in{}\sspan(\mathcal{J})} \sum_{J\in{} \mathcal{J}(t) \cup \mathcal{H}(t)} \tilde{r} (\mathcal{J}(t) \cup \mathcal{H}(t))(J) \,\mathrm{d}t\\
&\leq{} \int_{t\in{}\sspan(\mathcal{J})} \left(\sum_{J\in\mathcal{J}(t)} \tilde{r} (\mathcal{J}(t))(J) + \sum_{J\in{}\mathcal{G}_t} \tilde{r}_2 (J,t)\right) \,\mathrm{d}t\\
&\leq{} (d+1)\cdot{} \int_{t\in{}\sspan(\mathcal{J})} \sum_{J\in{}\mathcal{J}(t)} \tilde{r} (\mathcal{J}(t))(J) \,\mathrm{d}t,\\
\end{aligned}
\end{equation*}
where the first inequality is due to Claim 2, and the second is by equation (\ref{eq:tilder2}). Hence, equation (\ref{eq:tilder3}) is proven.
\end{proof}

\section*{Acknowledgments}
This work is supported by Singapore Ministry of Education Academic Research Fund Tier 1 under Grant 2019-T1-002-042.

\newpage
\appendix

{\noindent \LARGE \bf APPENDICES}

\section{Preliminaries}

\subsection{Cost-per-capacity graph} \label{sec:pros_graph}

\begin{proof} [Proof of proposition \ref{pro:forest}]
By definition,  $r_{p(i)}/g_{p(i)} < r_i/g_i$ for each pair of nodes $i$ and $p(i)$. The whole graph is clearly acyclic. This implies that for any two nodes in the same component, there is an unique path connecting them. Take $C[i]$ the component containing node $i$. We want to show that $C[i]$ is a rooted tree. For node $i$, by the finiteness of the graph, the node $i^*:=\max P(i)$. We want to show that $C[i]$ is a tree rooted at $i^*$. Take any node $z\in{}C[i]$, since $z$ and $i^*$ are in the same component, suppose $(z,z_1,z_2,\ldots{},z_l,i^*)$ is the unique path linking $z$ and $i^*$. Since $p(i^*)$ does not exist, $z_l$ must be directed to $i^*$, i.e., $p(z_l )=i^*$. Consequently, $z_{l-1}$ must be directed to $z_l$ because of the uniqueness of $p(z_l)$. After repeating this argument along the path from $i^*$ to $z$, finally we will have $z$ is directed to $z_1$, $z_1$ is directed to $z_2$, $\ldots{}$, $z_{l-1}$ is directed to $z_l$ and $z_l$ is directed to $i^*$. Clearly this means $i^*\in P(z)$. Since $z$ is taken arbitrarily from $C[i]$, $C[i]$ is indeed a tree rooted at node $i^*$. Therefore, the whole graph is a forest.
\end{proof}

\begin{proof} [Proof of proposition \ref{pro:consecutive}]

Step 1: Consider the set of nodes $\{z: P(z)=\emptyset{}\}:=\{i_1,i_2,\ldots{},i_n\}$ with $i_1<i_2<\ldots{}<i_n$. We show that $A_{i_q}$ is consecutive for each $q=1,2,\ldots{},n$.

Observe that $i_q=\max A_{i_q}$ for each $q=1,2,\ldots,n$. By definition, $v(i_q)=\min A_{i_q}$. By proposition \ref{pro:forest}, $\{A_{i_q}: q=1,2,\ldots,n\}$ is a partitioning of $\mathcal{M}:=\{1,2,\ldots,|\mathcal{M}|\}$ which is consecutive by definition.
If we can show that for each $2\leq{}q\leq{}n$, $v(i_q) > i_{q-1}$, then we have proven that $A_{i_q}$ is consecutive for each $q$.
Suppose the contrary, i.e. there exists some $q\in\{2,\ldots{},n\}$ such that $i_{q-1} > v(i_q)$ which implies that $v(i_1) < i_q$.
We have $v(i_q) < i_{q-1} < i_q$ but $r_{v(i_q)} /g_{v(i_q)} > r_{i_q}/g_{i_q} \geq r_{i_{q-1}}/g_{i_{q-1}}$, where $r_{v(i_q)} /g_{v(i_q)} > r_{i_q}/g_{i_q}$ is because $i_q\in P(v(i_q))\setminus \{v(i_q)\}$ and $r_{i_q}/g_{i_q} \geq r_{i_{q-1}}/g_{i_{q-1}}$ is because otherwise $P(i_{q-1})$ must not be empty.
This situation contradicts to the fact that $i_q$ is an ancestor of $v(i_q)$.

Step 2: Suppose that $h$ is any non-negative integer. Assume that for each $z\in \{z: |P(z)|=h\}$, $A_z$ is consecutive. We show that for each $z\in \{z: |P(z)|=h+1\}$, $A_z$ is consecutive.

Take any $i^*\in \{z: |P(z)|=h+1\}$. By the given assumption, $A_{p(i^*)}$ is consecutive. Suppose that $f(p(i^*))=\{z_1,z_2,\ldots,z_x=i^*,\ldots,z_a\}$ with $z_1<z_2<\ldots{}<z_a$. To show that $A_{i^*}$ is consecutive, similarly as step 1, we show that for each $2\leq{}q\leq{}a$, $z_{q-1} < v(i_q)$. Again, similarly as step 1, suppose the contrary, i.e., there exists some $q\in\{2,\ldots{},a\}$ such that $z_{q-1} > v(z_q)$. We have $v(z_q) < z_{q-1} < z_q$ but $r_{v(z_q)} / g_{v(z_q)}  > r_{z_q}/g_{z_q} \geq{} r_{z_{q-1}}/g_{z_{q-1}}$. This contradicts to the fact that $z_q$ is an ancestor of $v(z_q)$. Therefore, we have shown that $A_{i^*}$ is consecutive.

Step 3: By the finiteness of graph, the combination of the results of step1 and step2 have proved that, $A_k$ must be consecutive, for each $k\in \mathcal{M}$ eventually.
\end{proof}

\begin{lemma} \label{lem:maxT}
For any $k\in \mathcal{M}$, $k=\max T(k)$.
\end{lemma}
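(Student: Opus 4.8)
The plan is to prove the slightly stronger statement that every element of $T(k)$ is at most $k$; since $k \in T(k)$ by definition, this immediately yields $\max T(k) = k$. Unfolding the definition $T(k) = \{k\} \cup \bigcup_{z \in P(k)} y(z)$, what remains is to show: for every $z \in P(k)$ and every $z' \in y(z)$, one has $z' < k$.

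So I would fix such a pair $z, z'$ and argue as follows. Since $z \in P(k)$, the node $k$ belongs to the subtree $A_z$ (indeed $z \in P(k)$ is exactly the statement $k \in A_z$); by Proposition~\ref{pro:consecutive}, $A_z = \{v(z), v(z)+1, \ldots, z\}$ is an interval of consecutive indices, so in particular $v(z) \le k$. Next, $z' \in y(z)$ means $z'$ and $z$ are distinct siblings in the cost-per-capacity graph — either both children of $p(z)$, or, in the case that $p(z)$ does not exist, both roots of their respective trees. Because the graph is a forest (Proposition~\ref{pro:forest}), the subtrees $A_{z'}$ and $A_z$ are then disjoint, and each of them is an interval of consecutive indices by Proposition~\ref{pro:consecutive}. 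Finally, the root of any subtree is its largest index (the parent map strictly increases indices, so every proper descendant of $z'$ has index below $z'$), hence $\max A_{z'} = z'$ and $\max A_z = z$.

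Putting these together: $A_{z'}$ and $A_z$ are disjoint integer intervals with $\max A_{z'} = z' < z = \max A_z$, which forces $A_{z'}$ to lie entirely below $A_z$, i.e. $z' = \max A_{z'} < \min A_z = v(z) \le k$. This gives $z' < k$, and since $z$ and $z'$ were arbitrary, every element of $T(k)$ other than $k$ is strictly less than $k$, completing the proof.

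The argument is essentially bookkeeping built on Proposition~\ref{pro:consecutive}, so I do not expect a serious obstacle. The only points that require a moment's attention are (i) reading $y(z)$ correctly when $z$ has no parent, so that ``sibling'' is understood to cover the case of two distinct roots and the disjointness of $A_{z'}$ and $A_z$ still holds, and (ii) the elementary observation that two disjoint intervals of integers are linearly ordered, so that the one with the larger maximum index is exactly the one lying to the right.
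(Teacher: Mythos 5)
Your proof is correct and follows essentially the same route as the paper's: both reduce to showing that any $i\in y(z)$ with $z\in P(k)$ satisfies $i<v(z)\leq k$, using Proposition \ref{pro:consecutive} on the consecutive index sets $A_z$. Your version merely spells out the disjoint-interval bookkeeping that the paper leaves implicit.
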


\begin{proof} [Proof of lemma \ref{lem:maxT}]
Take any $i\in{}T(k)\setminus{}\{k\}$. By definition, $T(k)=\{k\}\cup\left(\cup_{z\in P(k)} y(z) \right)$. Therefore, $i\in{}y(z)$ for some $z\in P(k)$. Proposition \ref{pro:consecutive} says that $i< v(z) \leq k$ since $k\in A_z$.
\end{proof}

\begin{proof} [Proof of proposition \ref{pro:T_ratio}]
We separate the discussion into two cases.

Case 1: $k\notin\{y_1,y_2\}$.

Since $y_1,y_2\in{}T(k)=\{k\}\cup \left(\cup_{z\in P(k)} y(z) \right)$ and $k\notin\{y_1,y_2\}$,  we have $y_1\in y(z_1), y_2\in y(z_2)$, where $z_1,z_2\in P(k)$. We show that $z_1 \geq z_2$ must hold. Otherwise, i.e., if $z_1 < z_2$, we have $y_1\in y(z_1)\subset A_{z_2}$ and $y_2\in y(z_2)$. By proposition \ref{pro:consecutive}, $y_2 < y_1$ which contradicts to the condition $y_1\leq{}y_2$. Therefore, $z_1 \geq z_2$. Immediately, $y_2\in A_{p(y_1)}\setminus\{p(y_1)\}$.

Next, we show that $r_{y_1}/g_{y_1} > r_{y_2}/g_{y_2}$ cannot hold. Otherwise, i.e., $r_{y_1}/g_{y_1} > r_{y_2}/g_{y_2}$ and $y_1 < y_2<p(y_1)$, where $y_2 < p(y_1)$ is because $y_2\in A_{p(y_1)}\setminus\{p(y_1)\}$. In this case, by the definition of the cost per capacity graph, $y_1$ would not be directed to $p(y_1)$. Therefore, we have shown that $r_{y_1}/g_{y_1} \leq{} r_{y_2}/g_{y_2}$.

Case 2: one of $y_1$ and $y_2$ is $k$.

$y_2$ must be $k$ by lemma \ref{lem:maxT}, and it suffices to assume that $y_1\in y(z)$ for some $z\in P(k)$. By the same argument in case 1, $r_{y_1}/g_{y_1} > r_{y_2}/g_{y_2}$ cannot hold, because otherwise $y_1$ would not be directed to $p(y_1)$.

\end{proof}

\begin{lemma} \label{lem:V}
Take any $k\in \mathcal{M}$ and let $V_{k}:=\left(\cup_{z\in P(k)} y(z) \cup e(z) \right) \cup \{k\}$. We have $\left( P(k)\setminus \{k\} \right)  \cup \left(\cup{}_{z\in{}V_{k}} A_z \right)= \mathcal{M}$.
\end{lemma}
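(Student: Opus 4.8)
The plan is to establish the nontrivial inclusion $\mathcal{M}\subseteq \left(P(k)\setminus\{k\}\right)\cup\bigcup_{z\in V_k}A_z$; the reverse inclusion is immediate, since $P(k)$, every $y(z)$ and $e(z)$, and every node set $A_z$ are contained in $\mathcal{M}$. I would treat the two ranges $\{1,\dots,k\}$ and $\{k+1,\dots,|\mathcal{M}|\}$ separately.

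For the low range, note that $V_k\supseteq T(k)$, because $T(k)=\{k\}\cup\bigcup_{z\in P(k)}y(z)$ while $V_k=\{k\}\cup\bigcup_{z\in P(k)}\left(y(z)\cup e(z)\right)$. By Proposition~\ref{pro:T_union}, $\{A_z:z\in T(k)\}$ partitions $\{1,\dots,k\}$, so $\{1,\dots,k\}=\bigcup_{z\in T(k)}A_z\subseteq\bigcup_{z\in V_k}A_z$ and the whole low range is covered. Now fix $x>k$. If $x$ is an ancestor of $k$, i.e.\ $k\in P(x)$, then $x\in P(k)\setminus\{k\}$ and we are done.

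It remains to handle an $x>k$ that is not an ancestor of $k$. The idea is to find the node at which the subtree containing $x$ branches off the path from $k$ up to its root. It is convenient to adjoin a virtual root that is the common parent of all roots of the cost-per-capacity graph — this is precisely the convention already in force when $y(\cdot)$ and $e(\cdot)$ are applied to a root, so that, e.g., $T(k)$ also collects the smaller-indexed roots — so that $P(k)$ together with the virtual root forms a single root-path. Let $z$ be the deepest node on this path that is still an ancestor of $x$, i.e.\ the lowest common ancestor of $x$ and $k$. Since $x>k=\max A_k$ by Proposition~\ref{pro:consecutive}, we have $x\notin A_k$, hence $z\neq k$, so $z$ is a strict ancestor of $k$; consequently the child $z'$ of $z$ lying on the path toward $k$ belongs to $P(k)$. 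Let $w$ be the child of $z$ on the path toward $x$, so $x\in A_w$. If $w=z'$ then $z'$ would be a common ancestor of $x$ and $k$ lying strictly below $z$, contradicting the choice of $z$; hence $w\neq z'$, so $w$ is a sibling of $z'$, i.e.\ $w\in y(z')\cup e(z')\subseteq V_k$, and therefore $x\in A_w\subseteq\bigcup_{z\in V_k}A_z$, as required.

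There is essentially no computation involved; the only point requiring care is the forest structure of the cost-per-capacity graph. One must be careful that distinct roots are regarded as mutual siblings — the same reading under which $T(k)$ is well defined — after which the branch-point argument applies uniformly, the virtual-root case simply expressing that any node lying in a tree other than $k$'s is reached through a sibling root of $k$'s root.
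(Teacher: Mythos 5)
Your proof is correct in substance but takes a more direct route than the paper's. The paper argues by induction along the path $P(k)$ from the root of $k$'s tree down to $k$: the base case is a root $k_0$, where $V_{k_0}$ consists of all roots and Proposition~\ref{pro:forest} gives $\bigcup_{z\in V_{k_0}}A_z=\mathcal{M}$, and the inductive step shows that passing from a node $k_1\in P(k)$ to its child $k_0$ on the path can only enlarge the set $\left(P(\cdot)\setminus\{\cdot\}\right)\cup\bigl(\bigcup_{z\in V_{\cdot}}A_z\bigr)$. You instead classify each $x\in\mathcal{M}$ directly by where the root-path of $x$ branches off $P(k)$ (under the same virtual-root, roots-as-mutual-siblings convention that the paper itself relies on), and the branch-point step --- $x$ lies in $A_w$ for some sibling $w$ of a node $z'\in P(k)$ unless $x\in A_k\cup\bigl(P(k)\setminus\{k\}\bigr)$ --- is sound and arguably cleaner than the paper's monotone induction.

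One caveat: your treatment of the range $\{1,\dots,k\}$ invokes Proposition~\ref{pro:T_union}, but in the paper that proposition is itself derived from Lemma~\ref{lem:V} (its proof begins by citing the lemma), so as written your argument is circular within the paper's development. The repair costs nothing: $A_k\subseteq\bigcup_{z\in V_k}A_z$ simply because $k\in V_k$, and your branch-point argument applies verbatim to any $x\notin A_k\cup\bigl(P(k)\setminus\{k\}\bigr)$ whether $x<k$ or $x>k$, since the only thing used to conclude that the meeting point $z$ differs from $k$ is $x\notin A_k$. Dropping the appeal to Proposition~\ref{pro:T_union} therefore yields a complete, non-circular proof.
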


\begin{proof} [Proof of lemma \ref{lem:V}]

Let $k_0$ taken from $P(k)$ arbitrarily. We show the lemma by running through each $k_0$ in $P(k)$ in the top-down manner. When $P(k_0)=\emptyset$, by proposition \ref{pro:forest}, $\cup_{z\in{}V_{k_0}} A_z = \mathcal{M}$.
Now, take $k_1$ arbitrarily from $P(k)\setminus \{k\}$, and let $k_0\in P(k)$ such that $p(k_0)=k_1$. It suffices to show that $\left( P(k_1)\setminus \{k_1\} \right)  \cup \left(\cup{}_{z\in{}V_{k_1}} A_z \right) \subset \left( P(k_0)\setminus \{k_0\} \right)  \cup \left(\cup{}_{z\in{}V_{k_0}} A_z \right)$ holds.

Let $i\in \left( P(k_1)\setminus \{k_1\} \right)  \cup \left(\cup{}_{z\in{}V_{k_1}} A_z \right)$. Since $P(k_1)\setminus \{k_1\}\subset P(k_0)\setminus \{k_0\}$ and $P(k_1)\subset P(k_0)$, it suffices to look at when $i\in A_{k_1}$. Then, either $i=k_1$ or $i\in A_z$ for some $z\in y(k_0)\cup e(k_0) \cup \{k_0\}$. Therefore, we have shown that $i\in \left( P(k_0)\setminus \{k_0\} \right)  \cup \left(\cup{}_{z\in{}V_{k_0}} A_z \right)$.
\end{proof}

\begin{proof} [Proof of proposition \ref{pro:T_union}]
Lemma \ref{lem:maxT} says that $k=\max T(k)$, while for each $z\in{}T(k)$, $\max A_z=z$.
Therefore, $\cup_{z\in{}T(k)} A_z\subset{}\{1,2,\ldots{},k\}$. For the other direction, take any $i\in \mathcal{M}\setminus{} \cup{}_{z\in{}T(k)} A_z$. It suffices to show that $i>k$. Note that by definition $T(k)=\{k\} \cup \left( \cup_{z\in P(k)} y(z) \right)$.
By lemma \ref{lem:V}, either $i\in P(k)\setminus \{k\}$ or $i\in{}A_z$ for some $z\in{} \left(\cup_{z\in P(k)} e(z) \right) \cup{} \{k\}$. In the former case, clearly $i>k$. In the latter case, suppose that $i\in{}A_z$ for some $z\in{}e(z_0)$ where $z_0\in P(k)$. By proposition \ref{pro:consecutive}, since $k\in{}A_{z_0}$, $k\leq{} z_0 < v(z) \leq i$. Therefore, we have proven that $\cup_{z\in{}T(k)} A_z=\{1,2,\ldots{},k\}$.

For the second part of the proposition, suppose $T(k)=\{i_1,i_2,\ldots{},i_n\}$ with $i_1<i_2<\ldots{}<i_n$. It is not hard to see that by definition of $T(k)$, $A_{i_{q_1}}\cap{}A_{i_{q_2}}=\emptyset{}$ for any two distinct $q_1$ and $q_2$. By proposition \ref{pro:consecutive}, each $A_{i_q}$ is consecutive. Therefore, $i_q<v(i_{q+1})$ must hold for each $q=1,2,\ldots{},n-1$.
Since we have proven that $\cup_{q=1,2,\ldots{},n} A_{i_q}=\{1,2,\ldots{},k\}$ which is a consecutive set, $i_q+1=v(i_{q+1})$ must hold for each $q=1,2,\ldots{},n-1$.
\end{proof}

\begin{proof} [Proof of proposition \ref{pro:T_two}]
$T(k_0)\cap{}A_{k_1}$ and $T(k_0)\setminus{}A_{k_1}$ is a partitioning of $T(k_0)$.
By definition of $T(k_0)$, since $k_1\in P(k_0)$, it is not hard to see that $T(k_0)\cap{}A_{k_1}=\{k_0\}\cup\left(\cup_{z\in{}P(k_0)\setminus{}P(k_1)} y(z)\right)$.
Consequently, $T(k_0)\setminus{}A_{k_1}=\cup_{z\in{}P(k_1)} y(z)$.
For the second equality, it suffices to show that $z_1<z_2$ for any $z_1\in \cup_{z\in{}P(k_1)} y(z)$ and $z_2\in \{k_0\}\cup\left(\cup_{z\in{}P(k_0)\setminus{}P(k_1)} y(z)\right)$.
It suffices to assume that $z_2\neq{}k$ and $k_1 > k_0$.
By proposition \ref{pro:consecutive}, $z_1<v(k_1)\leq{}z_2$. Therefore, we are done.

\end{proof}

\subsection{One-shot job scheduling} \label{sec:thm_opt_machine}

Let $k_0:=\max \{m(J): J\in \mathcal{J}^{1d}\}$. For simplicity, for each machine type $z\in \mathcal{M}$, Let $H_{\geq z}:=\{J\in \mathcal{J}^{1d}: m(J)\geq z\}$ denote the set of jobs whose exact indexed machine type is greater than or equal to $z$. Let $\mathcal{O}$ be the set of all the optimal machine configurations.

\begin{lemma} \label{lem:optmachine_step1}
There exists an optimal machine configuration $w^*\in{}\mathcal{O}$ such that $\max \{z: w^* (z)>0\}\in P(k_0)$.
\end{lemma}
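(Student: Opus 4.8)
The plan is to start from an arbitrary optimal machine configuration $w \in \mathcal{O}$ and, if it violates the claim, repeatedly perform a local "swap" that decreases the highest-indexed machine type in use without increasing the cost, until the resulting configuration satisfies $\max\{z : w(z) > 0\} \in P(k_0)$. Let $k := \max\{z : w(z) > 0\}$ and suppose $k \notin P(k_0)$. Since any feasible configuration must include at least one machine of type $\geq k_0$, we have $k > k_0$. The key structural fact I would invoke is Proposition \ref{pro:T_union} (applied at index $k$, or rather a variant): $k \notin P(k_0)$ together with $k > k_0$ means that $k_0$ lies in $A_{p(k)}$ or, more precisely, $k$ has a sibling-or-ancestor relation to $k_0$ in the forest such that $k_0 \in \{1,\dots,k-1\}$ but $k_0 \notin A_k$. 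Because $k$ is not an ancestor of $k_0$, all jobs of $H_{\geq k}$ actually satisfy $m(J) = k$ is impossible unless... — more carefully, I want to argue that $H_{\geq k} = \emptyset$ or that the type-$k$ machines' capacity is not "forced" by the jobs.

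The main technical content is this: because $k \notin P(k_0)$ and $k > k_0$, no job $J$ has $m(J) \geq k$ — indeed $m(J) \leq k_0 < k$ for all $J$, wait that's not quite it either since there could be jobs with $m(J)$ strictly between... no: if some job had $m(J) > k_0$ we'd contradict the definition of $k_0$. So $H_{\geq k} = H_{\geq k_0+1} = \emptyset$ whenever $k > k_0$. Hence the first constraint of (\ref{opt:1d}) for index $k$ reads $0 \leq \sum_{z \geq k} w(z) g_z$, which is vacuous. Now I would remove one type-$k$ machine and, if needed, add machines of a cheaper type to restore feasibility for the constraints at indices $\leq k_0$. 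Here is where I need to be careful: removing a type-$k$ machine removes capacity $g_k$ that may have been used to host jobs with small $m(J)$. The cleanest route is: let $k' := p(k)$ if it exists; by the cost-per-capacity graph definition $r_{k'}/g_{k'} < r_k/g_k$, and by Proposition \ref{pro:consecutive} $k' > k$, but then $w(k') $ could be positive — however $k$ was the max index in use, so $w(k') = 0$, giving a contradiction-free replacement: replace the type-$k$ machine by $\lceil g_k / g_{k'} \rceil$ type-$k'$ machines? That raises the index, wrong direction.

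So instead I would replace the type-$k$ machine by machines of the child type or of type $k_0$: add $\lceil g_k / g_{k_0} \rceil$ type-$k_0$ machines (or more simply enough type-$k_0$ machines to cover $g_k$ units of capacity). The cost change is $\lceil g_k/g_{k_0}\rceil r_{k_0} - r_k$; this need not be negative in general, so a pure capacity argument fails and I must use optimality differently. The right argument: since $H_{\geq k} = \emptyset$, the type-$k$ machine, if present, is only helping host jobs whose exact type is $\leq k_0$; but feasibility of the constraints at all indices $i \leq k_0$ only depends on $\sum_{z \geq i} w(z) g_z$. Decreasing $w(k)$ by one and leaving everything else fixed might break feasibility at some index $i \le k_0$; if it does, I know $\sum_{z \ge i} w(z) g_z$ was tight-ish. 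I would then argue by a minimality/exchange argument over all optimal solutions: pick $w \in \mathcal{O}$ minimizing $\max\{z : w(z) > 0\}$, suppose this max is $k \notin P(k_0)$, derive $H_{\geq k} = \emptyset$, and show the type-$k$ machines can be traded for the same capacity in type-$k_0$ (or the appropriate child) machines at no greater cost because... — and this is the main obstacle — I will need the power-of-$8$ assumption plus the forest structure to guarantee $\lceil g_k/g_{k_0}\rceil r_{k_0} \le r_k$, or more plausibly I should route the replacement through $v(k)$ or through the child of $k$ containing the bulk of its jobs. I expect the crux of the full proof to be exactly this cost-accounting for the replacement machines, using Proposition \ref{pro:T_ratio}/\ref{pro:consecutive} to control $r/g$ ratios and the integrality/power-of-$8$ structure to bound the ceilings; everything else (feasibility bookkeeping, the downward induction on $k$) is routine.
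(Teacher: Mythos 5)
Your plan correctly reduces the problem to the right situation (if $k:=\max\{z:w(z)>0\}\notin P(k_0)$ then $k>k_0$ and the top capacity constraint at level $k$ is vacuous), but the step you yourself flag as ``the main obstacle'' --- how to trade the type-$k$ machines away without increasing cost or breaking feasibility at levels $\leq k_0$ --- is exactly the content of the lemma, and none of the candidate replacements you float resolves it. Replacing a type-$k$ machine by $\lceil g_k/g_{k_0}\rceil$ type-$k_0$ machines fails because that ceiling-based cost bound is simply false in general, and routing through $v(k)$ or ``the child of $k$ containing the bulk of its jobs'' is not the right target either. So as written the proposal has a genuine gap at its crux.

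The paper's resolution, which you are missing, is to choose the replacement type by the forest structure rather than by capacity matching: every machine type $i$ not in $\{1,\dots,k_0-1\}\cup P(k_0)$ lies in $A_e$ for some elder sibling $e\in e(z)$ of some $z\in P(k_0)$ (Lemma \ref{lem:V}), and for such $i$ one has $z<i$ with $r_z/g_z\leq r_i/g_i$ (the same ordering as Proposition \ref{pro:T_ratio}, since $z\in y(e)\subset T(i)$). Hence converting the type-$i$ machines into type-$z$ machines \emph{at equal total cost} --- i.e.\ $\frac{r_i}{r_z}w(i)$ new type-$z$ machines, an integer by the power-of-$8$ assumption, so no ceilings appear --- yields capacity $w(i)r_i\cdot\frac{g_z}{r_z}\geq w(i)g_i$. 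Because $z\geq k_0$, this capacity still counts toward every constraint at a level $l\leq k_0$, and all constraints at levels $l>k_0$ are vacuous, so the transformed configuration is feasible, has identical cost, and uses no type outside $\{1,\dots,k_0-1\}\cup P(k_0)$; feasibility at level $k_0$ then forces its maximum used type into $P(k_0)$. This is a single global cost-preserving rewrite, so no extremal choice of $w$ or downward induction on $k$ is needed. If you want to salvage your iterative version, the fix is the same: the type-$k$ machines must be converted into machines of the specific node $z\in P(k_0)$ whose elder-sibling subtree contains $k$, with the count chosen to preserve cost exactly, and the capacity argument then goes through via the $r/g$ ordering rather than via $\lceil g_k/g_{k_0}\rceil$.
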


\begin{proof} [Proof of lemma \ref{lem:optmachine_step1}]
Take any $w_1\in{}\mathcal{O}$. Construct another machine configuration $w_2:=G_1 (w_1)$ by the following definition of function $G_1$. (see (\ref{eq:G_1}))

\begin{equation} \label{eq:G_1}
w_2(z):=
\begin {cases}
w_1(z), \text{~if~} z=1,2,\ldots,k_0-1, \\
w_1(z) + \sum_{e\in e(z)} \sum_{i\in A_e} \frac{r_i}{r_z} w_1(i), \text{~if~} z\in P(k_0),\\
0, \text{~otherwise}. \\
\end {cases}
\end{equation}
By lemma \ref{lem:V} and proposition \ref{pro:T_union}, the changes from $w_1$ to $w_2$ are just that the costs from the machine types $\mathcal{M}\setminus \left(\{1,2,\ldots,k_0-1\}\cup P(k_0)\right)$ have been shifted to the machine types $P(k_0)$, and the two machine configurations $w_1$ and $w_2$ have the same cost.

It remains to show that $w_2$ is a feasible solution to (\ref{opt:1d}). By the definition of $k_0$, for each $l>k_0$, $S(H_{\geq{}l})=0$. Consider the following inequalities:
\begin{equation*}
\begin{aligned}
S(H_{\geq{}k_0} )
&\leq{} \sum_{z=k_0}^{|\mathcal{M}|} w_1 (z)g_z = \sum_{z=k_0}^{|\mathcal{M}|} w_1 (z) r_z \frac{g_z}{r_z}\\
&\leq{}_{(*)} \sum_{z=k_0}^{|\mathcal{M}|} w_2 (z) r_z \frac{g_z}{r_z} = \sum_{z=k_0}^{|\mathcal{M}|} w_2 (z) g_z. \\
\end{aligned}
\end{equation*}
The first inequality is due to the feasibility of $w_1$. For the inequality indexed by $(*)$, observe that for each $z\in P(k_0)$,
$w_1(z) r_z + \sum_{e\in e(z)} \sum_{i\in A_e} w_1(i) r_i = w_2(z) r_z + \sum_{e\in e(z)} \sum_{i\in A_e} w_2(i) r_i $, while by proposition \ref{pro:T_ratio}, $g_{z}/r_{z} \geq{} g_i/r_i$, for each $i\in{}A_e$ for $e\in{}e(z)$.
Therefore, $w_1(z) r_z \frac{g_z}{r_z} + \sum_{e\in e(z)} \sum_{i\in A_e} w_1(i) r_i \frac{g_i}{r_i}
\leq{} w_1(z) r_z \frac{g_z}{r_z} + \sum_{e\in e(z)} \sum_{i\in A_e} w_1(i) r_i \frac{g_z}{r_z} = w_2(z) r_z \frac{g_z}{r_z}$. The inequality $\leq_{(*)}$ follows.

For each $l=1,2,\ldots{},k_0-1$, we have $S(H_{\geq{}l}) \leq \sum_{z=l}^{|\mathcal{M}|} w_1 (z) g_z \leq \sum_{z=l}^{|\mathcal{M}|} w_2 (z) g_z$, where the first inequality is due to the feasibility of $w_1$, and the second is because of $\leq_{(*)}$ and $w_2(z) = w_1(z)$ for each $z=1,2,\ldots,k_0-1$. Eventually, we have shown that $w_2$ is feasible and hence optimal. Clearly, $\max \{z: w_2(z)>0\}\in P(k_0)$.

\end{proof}

\begin{lemma} \label{lem:optmachine_step2}
There exists an optimal machine configuration $w^*\in{}\mathcal{O}$ such that $\max \{z: w^* (z)>0\}\in P(k_0)$ and $\sum_{z\in{}A_{z_0}\setminus \{z_0\}} w^*(z) r_z < r_{z_0}$ for each $z_0\in \mathcal{M}$.
\end{lemma}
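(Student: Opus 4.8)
The plan is to start from the optimal configuration guaranteed by Lemma~\ref{lem:optmachine_step1} and repeatedly \emph{consolidate} any subtree whose internal machine cost is at least the cost of its root, replacing part of that subtree by a single root-type machine. Each consolidation will be cost-neutral, feasibility-preserving and will strictly decrease the total number of machines, so the process must terminate at a configuration with the two desired properties.

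Concretely, I would fix $w_1\in\mathcal{O}$ with $k_{opt}:=\max\{z:w_1(z)>0\}\in P(k_0)$ (Lemma~\ref{lem:optmachine_step1}). If some $z_0\in\mathcal{M}$ has $\sum_{z\in A_{z_0}\setminus\{z_0\}}w_1(z)r_z\ge r_{z_0}$, the first step is a change-making argument: by Proposition~\ref{pro:consecutive}, $A_{z_0}\setminus\{z_0\}=\{v(z_0),\dots,z_0-1\}$, and every such type $z$ has $r_z$ a power of $8$ strictly below $r_{z_0}$, hence $r_z\le r_{z_0}/8$; since the total cost of $w_1$'s machines of these types is at least $r_{z_0}$, one can select a sub-multiset $\mathcal S$ of those machines whose costs sum to \emph{exactly} $r_{z_0}$ (process the denominations from largest to smallest, taking as many machines of each as fit without overshooting; the residual target always stays a positive multiple of the next, smaller denomination and never exceeds the cost still available, so the greedy closes exactly). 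Let $w_2$ be obtained from $w_1$ by deleting $\mathcal S$ and adding one type-$z_0$ machine.

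The second step verifies the three properties of this move. Cost: we remove $r_{z_0}$ and add $r_{z_0}$, so $w_2$ has optimal cost once feasibility is checked. Feasibility: a constraint of (\ref{opt:1d}) with index $i>z_0$ is untouched because every machine we changed has type $\le z_0<i$; for $i\le z_0$, the new type-$z_0$ machine adds $g_{z_0}$ to $\sum_{z\ge i}w_2(z)g_z$, whereas the capacity removed from that constraint is at most $\sum_{m\in\mathcal S}g_{\mathrm{type}(m)}<g_{z_0}$, since each type $z$ occurring in $\mathcal S$ is a proper descendant of $z_0$, so $r_{z_0}/g_{z_0}<r_z/g_z$, i.e. $g_z<g_{z_0}r_z/r_{z_0}$, and $\sum_{m\in\mathcal S}r_{\mathrm{type}(m)}=r_{z_0}$; hence every capacity bound can only increase. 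Preservation of $\max\{z:w(z)>0\}\in P(k_0)$: adding a type-$z_0$ machine can raise this maximum only when $z_0>k_{opt}$, but in that case the violating sum forces some used type $z\le k_{opt}<z_0$ to lie in the consecutive interval $A_{z_0}=\{v(z_0),\dots,z_0\}$, which then also places $k_{opt}$ in $A_{z_0}$, so $z_0\in P(k_{opt})\subseteq P(k_0)$ and $w_2$ still satisfies the first property.

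Finally, since $\mathcal S$ contains at least $8$ machines (each of cost $\le r_{z_0}/8$, summing to $r_{z_0}$) while we add only one, the total machine count strictly decreases at each consolidation; being a nonnegative integer it can do so only finitely often, after which we obtain an optimal $w^*$ with $\max\{z:w^*(z)>0\}\in P(k_0)$ and $\sum_{z\in A_{z_0}\setminus\{z_0\}}w^*(z)r_z<r_{z_0}$ for every $z_0\in\mathcal{M}$. I expect the main obstacle to be the argument that a consolidation cannot introduce a highest-indexed type outside $P(k_0)$; the exact change-making selection and the capacity comparison for descendants (both leaning on the cost rates being powers of $8$ and on descendants having strictly larger cost-per-capacity) are the other points needing care.
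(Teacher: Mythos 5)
Your proposal is correct and follows essentially the same strategy as the paper's proof: both consolidate a violating subtree by an exact change-making selection of descendant machines (enabled by the power-of-$8$ cost rates), preserve feasibility via the fact that every proper descendant $z$ of $z_0$ satisfies $g_z/r_z \leq g_{z_0}/r_{z_0}$, and preserve the first property using the consecutiveness of $A_{z_0}$ (Proposition \ref{pro:consecutive}). The only real difference is bookkeeping: the paper sweeps $z_0=1,\dots,|\mathcal{M}|$ once and argues by monotonicity that later consolidations cannot re-violate earlier constraints, whereas you iterate to a fixpoint and terminate via the strictly decreasing total machine count --- both are sound.
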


\begin{proof} [Proof of lemma \ref{lem:optmachine_step2}]
Take $w_1\in{}G_1 (\mathcal{O})$. It suffices to construct an function $G_2$ which maps $w_1$ to some optimal machine configuration $w_2$ such that $\max \{z:w_2 (z)>0\} \in{} P(k_0)$ and \\
$\sum_{z\in{}A_{z_0}\setminus{}\{z_0\}} w_2 (z) r_z < r_{z_0}$ for each $z_0\in \mathcal{M}$. (see Algorithm \ref{alg:G_2} for the definition of $G_2$)

\begin{algorithm}
\SetAlgoLined
\KwIn{$w_1\in G_1 (\mathcal{O})$}
\KwOut{$w_2$}
$w_2 (z)\gets w_1 (z)$ for each $z\in \mathcal{M}$\;
\For {$z_0=1,2,\ldots,|\mathcal{M}|$}
{
	\If {$\sum_{z\in A_{z_0}\setminus \{z_0\}} w_2 (z) r_z \geq r_{z_0}$}
	{
		Take $z^*\in A_{z_0}\setminus \{z_0\}$ and $n_{z^*} \in \{0,1,\ldots,w_2 (z^* )\}$ such that $n_{z^*} r_{z^*} + \sum_{z\in\{z^*+1,\ldots,z_0-1\}}  w_2 (z) r_z = \floor*{\frac{\sum_{z\in A_{z_0}\setminus \{z_0\}} w_2 (z) r_z}{r_{z_0}}}\cdot r_{z_0}$ (see remark \ref{rem:G_2})\;
		$w_2 (z_0) \gets w_2 (z_0) + \floor*{\frac{\sum_{z\in A_{z_0}\setminus \{z_0\}} w_2 (z) r_z}{r_{z_0}}}$\;
		$w_2 (z)\gets 0 $ for each $z=z^*+1,\ldots,z_0-1$\;
		$w_2 (z^*) \gets w_2 (z^* )-n_{z^*}$\;
	}
}
\caption{$G_2$}
\label{alg:G_2}
\end{algorithm}

\begin{remark} \label{rem:G_2}
By proposition \ref{pro:consecutive}, $A_{z_0}\setminus \{z_0\} = \{v(z_0),v(z_0 )+1,\ldots,z_0-1\}$.
Because $\floor*{\frac{\sum_{z\in A_{z_0}\setminus \{z_0\}} w_2 (z) r_z}{r_{z_0}}}\cdot r_{z_0} \leq \sum_{z\in A_{z_0}\setminus \{z_0\}} w_2 (z)\cdot r_z$ and $r_{z_0} > r_z$ for each $z\in A_{z_0}\setminus \{z_0\}$, there must exist $z^*\in \{v(z_0),v(z_0)+1,\ldots,z_0-1\}$ such that
\begin{equation*}
\begin{aligned}
\sum_{z\in \{z^*+1,\ldots,z_0-1\}}w_2 (z)\cdot r_z
&\leq \floor*{\frac{\sum_{z\in A_{z_0}\setminus \{z_0\}} w_2 (z) r_z}{r_{z_0}}}\cdot r_{z_0}\\
&\leq \sum_{z\in \{z^*,\ldots,z_0-1\}}w_2 (z)\cdot r_z, \\
\end{aligned}
\end{equation*}
which implies that
\begin{equation*}
\begin{aligned}
 0 &\leq \floor*{\frac{\sum_{z\in A_{z_0}\setminus \{z_0\}} w_2 (z) r_z}{r_{z_0}}}\cdot r_{z_0} - \sum_{z\in \{z^*+1,\ldots,z_0-1\}} w_2 (z)\cdot r_z \\
 &\leq w_2 (z^* )\cdot r_{z^*}. \\
\end{aligned}
\end{equation*}
Since the cost rate of each machine type is a power of $8$,
$r_{z^*}$ divides $\sum_{z=z^*+1,\ldots,z_0} a_z\cdot r_z$ for integers $a_{z^*+1},\ldots ,a_{z_0}$. The existence of $n_{z^*}$ follows.
\end{remark}

Now, we show that $w_2$ is indeed an optimal machine configuration. In the definition of $G_2$, $\sum_{z\in \mathcal{M}} w_2 (z)\cdot r_z = \sum_{z\in \mathcal{M}} w_1 (z)\cdot r_z$ holds at the end of each iteration. So, the output machine configuration $w_2$ has the same cost as $w_1$.
Next, we show that $w_2$ is feasible to the optimization problem (\ref{opt:1d}), i.e., $S(H_{\geq l}) \leq \sum_{z=l}^{|\mathcal{M}|} w_2 (z) g_z$, for $l=1,2,\ldots, |\mathcal{M}|$. It suffices to assume that the machine configuration represented by $w_2$ is feasible at the beginning of each iteration and we are to show that that is still feasible at the end of the iteration. Take any iteration $z_0\in \{1,2,\ldots,|\mathcal{M}|\}$. Denote the two machine configurations at the beginning and the end of the $z_0$-th iteration by $w_a$ and $w_b$ respectively.
By definition, say $n_{z^*} r_{z^*} + \sum_{z\in \{z^*+1,\ldots,z_0-1\}} w_a (z) r_z = \floor*{\frac{\sum_{z\in A_{z_0}\setminus \{z_0\}} w_a (z) r_z}{r_{z_0}}}\cdot r_{z_0}$. Clearly $\frac{g_z}{r_z} \leq \frac{g_{z_0}}{r_{z_0}}$ for each $z\in A_{z_0}$. Therefore, we have the following result, for each $l=1,2,\ldots,|\mathcal{M}|$:
\begin{equation*}
\begin{aligned}
&\sum_{z\in A_{z_0}\setminus \{z_0\} \land z\geq l} (w^a(z) - w^b(z)) g_z\\
&\leq n_{z^*} g_{z^*} + \sum_{z\in \{z^*+1,\ldots,z_0-1\}} w_a (z) g_z \\
&\leq \floor*{\frac{\sum_{z\in A_{z_0}\setminus \{z_0\}} w_a (z) r_z}{r_{z_0}}} \cdot g_{z_0}\\
&= (w_b (z_0 ) - w_a (z_0 ))\cdot g_{z_0}\\
\end{aligned}
\end{equation*}
Since $z_0=\max A_{z_0}$, we have that for each $l\in \mathcal{M}$, $\sum_{z=l}^{|\mathcal{M}|} w_a (z) g_z \leq \sum_{z=l}^{|\mathcal{M}|} w_b (z) g_z$. Since $w_a$ is feasible, we also have $w_b$ feasible and hence optimal.

It remains to show that the output machine configuration $w_2$ satisfies the two properties: (1). $\max \{z:w_2 (z)>0\}\in P(k_0)$ and (2). $\sum_{z\in A_{z_0}\setminus \{z_0\}} w_2 (z) r_z < r_{z_0}$ for each $z_0=1,2,…,|\mathcal{M}|$.

For property (1): denote $\max \{z:w_2 (z)>0\}$ and $\max \{z: w_1(z)>0\}$ by $k_2$ and $k_1$ respectively.
By the definition of $G_2$, clearly $k_2 \geq k_1$ because in each iteration of algorithm \ref{alg:G_2} the variable $\max \{z:w_2 (z)>0\}$ does not decrease.
By lemma \ref{lem:V}, $(P(k_0)\setminus \{k_0\}) \cup \left( \cup_{z\in V_{k_0}} A_z \right) = \mathcal{M}$, and also note that $k_1\in P(k_0)$. Then, $k_2$ must be in $P(k_1)$ or $A_e$ for some $e\in e(x)$ for some $x\in P(k_0)$. (see proposition \ref{pro:T_union}). We show that the latter case is impossible.
On the one hand, by the definition of $G_1$, $w_1(i)=0$ for each $i\in A_e$. On the other hand, by the definition of $G_2$, $\exists z\in A_{k_2}$ s.t. $w_1 (z) > 0$. Therefore, we have shown $k_2\in P(k_1)\subset P(k_0)$ and hence the property (1) holds.

For property (2), we show it by contradiction. Assume that $\sum_{z\in A_{z_1}\setminus \{z_1\}} w_2 (z) r_z \geq r_{z_1}$ for some $z_1\in \mathcal{M}$. For each $z_0$-th iteration with $z_0\in \mathcal{M}$, let $w^{z_0}$ denote the the machine configurations $w_2$ represents at the end of the $z_0$-th iteration. At the end of the $z_1$-th iteration, we must have $r_{z_1} > \sum_{z\in A_{z_1}\setminus \{z_1\}} w^{z_1} (z) r_z$.
By the definition of algorithm \ref{alg:G_2}, observe that $\sum_{z\in A_{z_1}\setminus \{z_1\}} w^{z_1} (z) r_z \geq \sum_{z\in A_{z_1}\setminus \{z_1\}} w^{z_1 + 1} (z) r_z \geq \ldots \geq \sum_{z\in A_{z_1}\setminus \{z_1\}} w^{|\mathcal{M}|} (z) r_z$.
Consequently, $r_{z_1} > \sum_{z\in A_{z_1}\setminus \{z_1 \}} w_2 (z) r_z$ since $w^{|\mathcal{M}|}$ is actually the output machine configuration $w_2$. Therefore, property (2) is proven.

\end{proof}

\begin{proof} [Proof of theorem \ref{thm:optmachine}]
Take $w_1\in G_2 (G_1 (\mathcal{O}))$. Let $k_{opt}:=\max \{z:w_1 (z)>0\}\in \{k_0,\ldots,k_h\}$. We are to construct an optimal machine configuration $w_2:=G_3 (w_1 )$ which is the optimal machine configuration described in theorem \ref{thm:optmachine}.
If $T(k_{opt})$ is singleton, i.e. $v(k_{opt})=1$, then $w_1 (k_{opt}) \leq \ceil* {\frac{S(H_{k_{opt}}) } {g_{k_{opt}} } }$ clearly holds because anymore type-$k_{opt}$ machine must be redundant. If $T(k_{opt})$ is singleton or $w_1(k_{opt}) \leq \ceil*{\frac{S(H_{k_{opt}})} {g_{k_{opt}}} }$, let $w_2:=w_1$.
Otherwise, i.e., $w_1 (k_{opt}) > \ceil*{ \frac{S(H_{k_{opt}} )} {g_{k_{opt}} }}$ and $v(k_{opt})>1$. Let $s=\max T(k_{opt})\setminus \{k_{opt}\}=v(k_{opt})-1$.
\begin{equation*}
w_2(z):=
\begin{cases}
\ceil* {\frac{S(H_{k_{opt}})} {g_{k_{opt}} }} \geq 1, \text{~if~} z=k_{opt}, \\
w_1 (s) + \frac{r_{k_{opt}}}{r_s} \cdot \left(w_1 (k_{opt}) - \ceil*{\frac{S(H_{k_{opt}} )}{g_{k_{opt}}}} \right), \text{~if~} z=s, \\
w_1 (z), \text{~otherwise}. \\
\end{cases}
\end{equation*}
The above has defined $w_2=G_3(w_1)$ in each of the cases. It is clear that $w_2$ and $w_1$ have the same costs. Now, we show that $w_2$ is feasible.
Since for each $l>k_0$, $S(H_{\geq l})=0$, it remains to consider $l=1,2,\ldots,k_0$. For any $l=s+1,\ldots,k_0$ (Note that $s+1=v(k_{opt})\leq k_0$. ), $S(H_{\geq l}) \leq S(H_{\geq s+1} ) = S(H_{k_{opt}}) \leq \ceil*{\frac{S(H_{k_{opt}})} {g_{k_{opt}}}} g_{k_{opt}} = w_2 (k_{opt}) g_{k_{opt}}\leq \sum_{z=l}^{|\mathcal{M}|} w_2 (z) g_z$. For $l=1,2,\ldots,s$, we have $S(H_{\geq l} ) \leq \sum_{z=l}^{|\mathcal{M}|} w_1 (z) g_z \leq \sum_{z=l}^{|\mathcal{M}|} w_2 (z) g_z$ where the first inequality is due to the feasibility of $w_1$ and the second is due to $\frac{g_s}{r_s} \geq \frac{g_{k_{opt}}} {r_{k_{opt}}}$ and $(w_2 (s) - w_1 (s)) r_s = (w_1 (k_{opt})-w_2 (k_{opt})) r_{k_{opt}} \geq 0$. Finally, we have shown that $w_2$ is feasible and hence optimal.

Next, we show that $w_2$ satisfies the first two properties. Clearly, $\max \{z:w_2 (z)>0\}= \max \{z:w_1 (z)>0\} = k_{opt}\in P(k_0)$. For the second property, since the only machine type whose usage has been increased from $w_1$ to $w_2$ is type $s$, it suffices to check the trees rooted at the nodes $P(s)\setminus \{s\}$. Since $s\in T(k_{opt})$, by definition of $T(k_{opt})$, $P(s)\setminus \{s\}\subset P(k_{opt})\setminus \{k_{opt}\}$. For each $z\in P(s)\setminus \{s\}$, both $k_{opt}$ and $s$ are in $A_z$. This implies that $\sum_{i\in A_z\setminus \{z\}} w_2 (i) r_i = \sum_{i\in A_z\setminus \{z\}} w_1 (i) r_i < r_z$ for each $z\in P(s)\setminus \{s\}$. Therefore, $w_2$ indeed satisfies the first two properties.

At last, we show that $\floor*{\frac{S(H_{k_{opt}} )} {g_{k_{opt}}} } \leq w_2 (k_{opt})$ holds as a consequence of the second property. Note that $\sum_{z\in A_{k_{opt}}\setminus \{k_{opt}\}} w_2 (z) r_z  < r_{k_{opt}}$ implies that
\begin{equation*}
\begin{aligned}
\sum_{z\in A_{k_{opt}}\setminus \{k_{opt}\}} w_2 (z) g_z
&= \sum_{z\in A_{k_{opt}}\setminus \{k_{opt}\}} w_2 (z) r_z \cdot \frac{g_z}{r_z} \\
&< \left( \sum_{z\in A_{k_{opt}}\setminus \{k_{opt}\}} w_2 (z) r_z \right) \cdot \frac{g_{k_{opt}}}{r_{k_{opt}}}\\
&<r_{k_{opt}} \cdot \frac{g_{k_{opt}}}{r_{k_{opt}}} = g_{k_{opt}}.\\
\end{aligned}
\end{equation*}
By the feasibility of $w_2$, $S(H_{k_{opt}} ) = S(H_{\geq v(k_{opt})}) \leq \sum_{z\in A_{k_{opt}}\setminus \{k_{opt}\}} w_2 (z) g_z + w_2 (k_{opt}) g_{k_{opt}} < (w_2 (k_{opt})+1) g_{k_{opt}}$, where the last inequality is the result from the above equation in display. After dividing both sides by $g_{k_{opt}}$, we have $\floor*{\frac{S(H_{k_{opt}} )} {g_{k_{opt} }}} \leq \frac{S(H_{k_{opt}} )} {g_{k_{opt} }} < w_2 (k_{opt})+1$, which implies that $\floor*{\frac{S(H_{k_{opt}} )} {g_{k_{opt} }}} \leq w_2 (k_{opt})$.

\end{proof}

\section{The offline setting} \label{sec:offline}
\subsection{The offline algorithm: $ALG_{offline}$}

\begin{proof} [Proof of property \ref{property:off_head}]
By the definition of $ALG_{offline}$, $k_{off}\notin \cup_{e\in \cup_{z\in P(k_0)} e(z)} A_e$. By lemma \ref{lem:V} and proposition \ref{pro:T_union}, $k_{off}\in \{1,2,\ldots,k_0\} \cup \left( P(k_0)\setminus \{k_0\} \right)$.
On the other hand, $\{J\in \mathcal{J}(t) : m(J)=k_0 \}\neq \emptyset$ implies that $K_z (t)\neq \emptyset$ for some $z\in P(k_0)$. This implies that $k_{off} \geq k_0$. Therefore, $k_{off}\in P(k_0)$.

\end{proof}

\begin{proof} [Proof of property \ref{property:off_upperb}]

We are to show that for any $z\in \mathcal{M}$ and $t\in \sspan(\mathcal{J})$, $\sum_{i\in{A_z\setminus\{z\}}} \ceil*{\frac{S(K_i,t)}{g_i} r_i} \leq 2\cdot{}r_z $ holds.
Use induction on the height $l\geq 1$ of the tree rooted at $z$, where the height of the tree rooted at $z$ is defined as $\max \{|P(i)\cap A_z| : i\in A_z\}$. The base case when $l=1$ is trivially true, so it suffices to consider the inductive case.
Let $\mathbb{T}:=\{t_0: \mathcal{R}_z^h (t_0)=\emptyset \land c_{z,t_0} < \frac{1}{3} r_z \}$. For each job $J\in \mathcal{R}_z$, by definition of $ALG_{offline}$, $J\in K_z$ if and only if $I(J)\cap \mathbb{T}=\emptyset$.(line 9-11) For the fixed time $t$, we have either $t\in \mathbb{T}$ or $t\notin \mathbb{T}$.

Case 1: $t\in \mathbb{T}$.

In this case, $K_z (t)=\emptyset$. For each job $J\in \mathcal{R}_z (t)$, $J\in \mathcal{R}_x (t)$ for some $x\in f(z)$, i.e., all the jobs in $\mathcal{R}_z(t)$ are passed to $z$'s children nodes. Clearly, $\mathcal{R}_z (t) = \cup_{x\in f(z)} \mathcal{R}_x (t)$. By definition, $F_{x,t}=\mathcal{R}_x(t)\supset K_x(t)$ for each $x\in f(z)$. (line 5) Hence, $\sum_{x\in f(z)} \ceil*{\frac{S(K_x,t)}{g_x}} r_x \leq \sum_{x\in f(z)} \ceil*{\frac{S(F_{x,t})} {g_x}} r_x = c_{z,t} < \frac{1}{3} r_z$. Consider the following:
\begin{equation*}
\begin{aligned}
&\quad \sum_{i\in A_z\setminus \{z\}} \ceil*{\frac{S(K_i,t)}{g_i}} r_i\\
&=\sum_{x\in f(z)} \left( \ceil*{\frac{S(K_x,t)}{g_x}} r_x + \sum_{i\in A_x\setminus \{x\}} \ceil*{\frac{S(K_i,t)}{g_i}} r_i \right)\\
&\leq \sum_{x\in f(z)} \ceil*{\frac{S( F_{x,t} ) } {g_x}} r_x + 2\cdot \ceil*{ \frac{ S( F_{x,t} ) }{g_x}} r_x\\
&=3\cdot c_{z,t} < 3\cdot (\frac{1}{3} r_z)=r_z.\\
\end{aligned}
\end{equation*}
Here we explain the above inequalities. By inductive hypothesis, $\sum_{i\in A_x\setminus \{x\}} \ceil*{\frac{S(K_i,t)}{g_i}} r_i \leq 2\cdot r_x$ for each $x\in f(z)$, since the heights of trees rooted at the children nodes of $z$ are all strictly less than the height of the tree rooted at $z$. If $\sum_{i\in A_x\setminus \{x\}} \ceil*{\frac{S(K_i,t)}{g_i}} r_i > 0$, then $\mathcal{R}_x(t)$ is nonempty as well as $F_{x,t}$. Therefore, $\sum_{i\in A_x\setminus \{x\}} \ceil*{\frac{S(K_i,t)}{g_i}} r_i \leq 2\cdot \ceil*{\frac{S(F_{x,t}) }{g_x}} r_x$ always holds.

Case 2: $t\notin \mathbb{T}$.

Denote $\mathcal{R}_z (t_0)\setminus K_z (t_0)$ by $\mathcal{R}_z' (t_0)$ for any $t_0\in \sspan(\mathcal{J})$, i.e., $\mathcal{R}_z' (t_0)$ represents the set of jobs active at $t_0$ which are passed into the children nodes of $z$. Clearly, $\mathcal{R}_z' (t_0)=\cup_{x\in f(z)} \mathcal{R}_x(t_0) $.
By definition, we also have $\mathcal{R}_z' (t) = \{J\in \mathcal{J}(t)\setminus \left(\cup_{i=|\mathcal{M},|\mathcal{M}|-1,\ldots,z+1} K_i (t) \right) : m(J)\in A_z\setminus \{z\} \land I(J)\cap \mathbb{T}\neq \emptyset\}$.
Consider two sets: $S_1:=\sspan(\mathcal{R}_z' (t))\cap \mathbb{T} \cap (-\infty,t)$ and $S_2:=\sspan(\mathcal{R}_z' (t))\cap \mathbb{T} \cap (t,\infty)$.
If both $S_1$ and $S_2$ are empty, clearly $\mathcal{R}_z' (t)=\emptyset$. Then, for each $i\in A_z\setminus \{z\}$, $K_i (t)=\emptyset$. Then $\sum_{i\in A_z\setminus \{z\}} \ceil*{\frac{S(K_i,t)} {g_i}} r_i = 0$. Therefore, it suffices to consider the cases when at least one of $S_1$ and $S_2$ is nonempty.

Case 2.1: both $S_1$ and $S_2$ are nonempty.

Let $t_1:=\sup S_1$ and $t_2:=\min S_2$. The essential reason why $\sup$ is used in the first definition and $\min$ is used in the second definition is that the active intervals of jobs are defined as left closed and right open intervals. Observe that $t_1 \leq t < t_2$. Suppose that $\epsilon > 0$ is infinitesimal. Note that $t_1-\epsilon \in \mathbb{T}$ and $t_2\in \mathbb{T}$. Then, we have for each $J\in \mathcal{R}_z' (t)$, either $t_1-\epsilon$ or $t_2$ is in $I(J)$, otherwise $I(J)\cap \mathbb{T}$ would be empty. So, we have $\mathcal{R}_z' (t)\subset \mathcal{R}_z' (t_1-\epsilon) \cup \mathcal{R}_z' (t_2) = \mathcal{R}_z (t_1-\epsilon) \cup \mathcal{R}_z (t_2)$, where the equality is because $t_1-\epsilon, t_2\in \mathbb{T}$. (see the discussion in case 1)
Consequently, after the partitioning of jobs into children nodes of $z$, we have $\mathcal{R}_x (t) \subset F_{x,t_1-\epsilon} \cup F_{x,t_2}$ for each $x\in f(z)$. Therefore,
\begin{equation} \label{eq:property_off_upperb_case2.1}
\begin{aligned}
&\quad \sum_{x\in f(z)} \ceil*{\frac{S(\mathcal{R}_x,t)} {g_x} } r_x\\
&\leq \sum_{x\in f(z)} \ceil*{\frac{S(F_{x,t_1-\epsilon}\cup F_{x,t_2})} {g_x} } r_x\\
&\leq \sum_{x\in f(z)} \ceil*{\frac{S(F_{x,t_1-\epsilon})} {g_x} } r_x + \sum_{x\in f(z)} \ceil*{\frac{S(F_{x,t_2})} {g_x} } r_x\\
&< \frac{2}{3} r_z, \text{~since~} t_1-\epsilon, t_2 \in \mathbb{T}. \\
\end{aligned}
\end{equation}

Similarly as case 1, we have
\begin{equation*}
\begin{aligned}
&\quad \sum_{i \in A_z\setminus \{z\}} \ceil*{\frac{S(K_i,t)} {g_i}} r_i\\
&=\sum_{x\in f(z)} \left(\ceil*{ \frac{S(K_x,t)} {g_x}} r_x + \sum_{i\in A_x\setminus \{x\}} \ceil*{ \frac{S(K_i,t)} {g_i}} r_i \right)\\
&\leq \sum_{x\in f(z)} \left( \ceil*{\frac{S(\mathcal{R}_x,t)}{g_x}} r_x+2\cdot \ceil*{\frac{S(\mathcal{R}_x,t)}{g_x}} r_x \right)\\
&=3\cdot \sum_{x\in f(z)} \ceil*{\frac{S(\mathcal{R}_x,t)}{g_x} } r_x\\
&<3\cdot (\frac{2}{3} r_z ) = 2 \cdot r_z, \text{~by~equation~(\ref{eq:property_off_upperb_case2.1})},\\
\end{aligned}
\end{equation*}
where the first inequality has used inductive hypothesis on the trees rooted at $x$ for each $x\in f(z)$.

Case 2.2: either $S_1$ or $S_2$ is empty.

Due to symmetry, it suffices to assume $S_1\neq \emptyset$ and $S_2=\emptyset$. Let $t_1:=\sup S_1$. Let $\epsilon>0$ be infinitesimal. Similarly, $\mathcal{R}_z' (t)\subset \mathcal{R}_z (t_1-\epsilon)$. Furthermore, $\mathcal{R}_z' (t) = \cup_{x\in f(z)} \mathcal{R}_x (t)$ and $\mathcal{R}_z (t_1-\epsilon) = \cup_{x\in f(z)} \mathcal{R}_x (t_1-\epsilon) = \cup_{x\in f(z)} F_{x,t_1-\epsilon}$. Therefore, $\mathcal{R}_x (t) \subset F_{x,t_1-\epsilon}$. Therefore,
\begin{equation} \label{eq:property_off_upperb_case2.2}
\begin{aligned}
&\quad \sum_{x\in f(z)} \ceil*{\frac{S(\mathcal{R}_x,t)} {g_x} } r_x\\
&\leq \sum_{x\in f(z)} \ceil*{\frac{S(F_{x,t_1-\epsilon})} {g_x} } r_x\\
&=c_{z,t_1-\epsilon}< \frac{1}{3} r_z.\\
\end{aligned}
\end{equation}

Finally,
\begin{equation*}
\begin{aligned}
&\quad \sum_{i\in A_z\setminus \{z\} } \ceil*{ \frac{S(K_i,t)} {g_i} } r_i\\
&=\sum_{x\in f(z)}  \left( \ceil*{ \frac{S(K_x,t)} {g_x} } r_x + \sum_{i\in A_x \setminus \{x\}} \ceil*{ \frac{S(K_i,t)} {g_i} } r_i \right)\\
&\leq \sum_{x\in f(z)} \left( \ceil*{\frac{S(\mathcal{R}_x,t)} {g_x} } r_x + 2\cdot \ceil*{\frac{S(\mathcal{R}_x,t)} {g_x} } r_x \right)\\
&=3\cdot \sum_{x\in f(z)} \ceil*{\frac{S(\mathcal{R}_x,t)} {g_x} } r_x\\
&< 3\cdot (\frac{1}{3} r_z )=r_z, \text{~by~equation~(\ref{eq:property_off_upperb_case2.2}),} \\
\end{aligned}
\end{equation*}
where the first inequality has used inductive hypothesis on the trees rooted at $x$ for each $x\in f(z)$.

Eventually, we have proved the inductive case.
\end{proof}

\begin{proof} [Proof of property \ref{property:off_lowerb}]
$K_z (t)\neq \emptyset$ and $\mathcal{R}_z^h=\emptyset$ implies that $c_{z,t}=\sum_{x\in f(z)} \ceil*{ \frac{S(F_{x,t})} {g_x}} r_x \geq \frac{1}{3} r_z$. Meanwhile, the condition $K_i(t)= \emptyset$ for each $i\in P(z)\setminus \{z\}$ says that $R_z (t) = \mathcal{R}_z (t)$. Therefore, $\{F_{x,t} : x\in f(z)\}$ is not only the partitioning of $\mathcal{R}_z (t)$ but also $R_z(t)$. Therefore, by the definition of $F_{x,t}$ and $R_x(t)$, we have $F_{x,t} = R_x (t)$ for each $x\in f(z)$. Consequently,
\begin{equation*}
\begin{aligned}
\sum_{x\in f(z)} S(R_x,t)  \frac{r_x} {g_x}
&= \sum_{x\in f(z)} S(F_{x,t})  \frac{r_x} {g_x}\\
&\geq \sum_{x\in f(z)} \left( \ceil*{\frac{S(F_{x,t})} {g_x}} - 1 \right) r_x\\
&= \sum_{x\in f(z)} \ceil*{\frac{S(F_{x,t})} {g_x}} r_x - \sum_{x\in f(z)} r_x\\
&\geq \frac{1}{3} r_z - \frac{1}{7} r_z = \frac{4}{21} r_z. \\
\end{aligned}
\end{equation*}
\end{proof}

\subsection{An $O(1)$ approximation of optimal one-shot scheduling} \label{subsec_app:one_shot}

\begin{proof} [Proof of proposition \ref{pro:cn&r'_total}]
By the definition of the alternative machine configuration $\cn_{z^*}$, for each $z\in T(z^*)\setminus \{z^*\}$,
\begin{equation} \label{eq:cn_total_1}
S(H_z) \geq \cn_{z^*}(z)\cdot g_z
\end{equation}
By proposition \ref{pro:T_ratio}, for each $z\in T(z^*)$,
\begin{equation} \label{eq:cn_total_2}
\frac{r_z}{g_z} \leq \frac{r_{z^*}}{g_{z^*}}
\end{equation}

There are two cases when $z_1=z^*$ or $z_1 < z^*$.

Case 1: $z_1=z^*$.

On the one hand, it can be observed that for each $z_0\in T(z^*)$, the initial part of the alternative machine configuration $\cn_{z^*} (z)$ with $z\in T(z^*) \land z\geq z_0$ is actually able to contain all the jobs in $\cup_{z\in T(z^*) \land z\geq z_0} H_z$. Therefore, $0\leq \sum_{z\in T(z^*) \land z \geq z_0} \cn_{z^*} (z) \cdot g_z - S(H_z)$ holds. So, we have
\begin{equation*}
\begin{aligned}
&\quad \sum_{z\in T(z^*) \land z\geq z_0} \cn_{z^*} (z) \cdot r_z - S(H_z) \cdot \frac{r_z}{g_z}\\
&= \sum_{z\in T(z^*) \land z\geq z_0} (\cn_{z^*} (z) g_z - S(H_z)) \cdot \frac{r_z} {g_z}\\
&\geq \sum_{z\in T(z^*) \land z\geq z_0} (\cn_{z^*} (z) g_z - S(H_z)) \cdot \frac{r_{z^*}} {g_{z^*}}, \text{~by~equation~(\ref{eq:cn_total_1})~and~(\ref{eq:cn_total_2})}\\
&= \left( \sum_{z\in T(z^*) \land z\geq z_0} \cn_{z^*} (z) g_z - S(H_z) \right)\cdot \frac{r_{z^*}} {g_{z^*}}\\
&\geq 0. \\
\end{aligned}
\end{equation*}

On the other hand, we have
\begin{equation*}
\begin{aligned}
&\quad \sum_{z\in T(z^*) \land z\geq z_0} \cn_{z^*} (z) \cdot r_z - S(H_z) \cdot \frac{r_z}{g_z}\\
&= \sum_{z\in T(z^*) \land z\geq z_0} \left( \cn_{z^*} (z) \cdot g_z - S(H_z) \right) \cdot \frac{r_z}{g_z}\\
&\leq \left(\cn_{z^*} (z^*) \cdot g_{z^*} - S(H_{z^*}) \right)\cdot \frac{r_{z^*}}{g_{z^*}}, \text{~by~equation~(\ref{eq:cn_total_1})}\\
&= \left( \ceil*{\frac{S(H_{z^*})}{g_{z^*}}}\cdot g_{z^*} - S(H_{z^*})\right)\cdot \frac{r_{z^*}}{g_{z^*}}\\
&\leq r_{z^*}\\
\end{aligned}
\end{equation*}

Case 2: $z_1 < z^*$.

By equation (\ref{eq:cn_total_1}), $\sum_{z\in T(z^*) \land z_0 \leq z \leq z_1} \cn_{z^*} (z) \cdot r_z - S(H_z) \cdot \frac{r_z}{g_z} = \sum_{z\in T(z^*) \land z_0 \leq z \leq z_1} \left( \cn_{z^*} (z) \cdot g_z - S(H_z) \right) \cdot \frac{r_z}{g_z} \leq 0$. On the other hand, by the definition of the alternative machine configuration, \\
$\sum_{z\in T(z^*) \land z_0 \leq z \leq z_1} S(H_z) - \cn_{z^*}(z)\cdot g_z < g_{z^*}$. Therefore, we have
\begin{equation*}
\begin{aligned}
&\quad \sum_{z\in T(z^*) \land z_0 \leq z \leq z_1} S(H_z) \cdot \frac{r_z}{g_z} - \cn_{z^*} (z) \cdot r_z\\
&= \sum_{z\in T(z^*) \land z_0 \leq z \leq z_1} \left( S(H_z) - \cn_{z^*} (z) \cdot g_z \right) \cdot \frac{r_z}{g_z}\\
&\leq \sum_{z\in T(z^*) \land z_0 \leq z \leq z_1} \left( S(H_z) - \cn_{z^*} (z) \cdot g_z \right) \cdot \frac{r_{z^*}}{g_{z^*}}, \text{~by~equation~(\ref{eq:cn_total_1}),~(\ref{eq:cn_total_2})}\\
&= \left( \sum_{z\in T(z^*) \land z_0 \leq z \leq z_1} S(H_z) - \cn_{z^*} (z) \cdot g_z  \right) \cdot \frac{r_{z^*}}{g_{z^*}}\\
&\leq r_{z^*}\\
\end{aligned}
\end{equation*}

\end{proof}

\begin{proof} [Proof of proposition \ref{pro:cn&r'_head}]

By Proposition \ref{pro:T_ratio}, we have $\frac{r_z}{g_z}\leq \frac{r_{z^*}}{g_{z^*}}$ for each $z\in T(z^*)$.
Suppose $i^*\in T(z^*)\setminus{} \{z^*\}$ is the boundary machine type satisfying equations (\ref{eq:boundary_machine_type1}) and (\ref{eq:boundary_machine_type2}). Either $z_0>i^*$ or $z_0\leq{}i^*$.

If $z_0>i^*$, by equation (\ref{eq:boundary_machine_type1}), we have
\begin{eqnarray}
\sum_{z\in{}T(z^*) \land z\geq{}z_0} S(H_z)\cdot{}\frac{r_z}{g_z} & \leq & \bigg(\sum_{z\in{}T(z^*) \land z\geq{}z_0} S(H_z)\bigg) \cdot \frac{r_{z^*}}{g_{z^*}} \nonumber \\
& \leq & \cn_{z^*}(z^*) \cdot g_{z^*} \cdot \frac{r_{z^*}}{g_{z^*}} = \cn_{z^*}(z^*) \cdot r_{z^*}. \nonumber
\end{eqnarray}

If $z_0 \leq i^*$, by the definition of $\cn_{z^*}$, we have
\begin{eqnarray}
\lefteqn{\sum_{z\in{}T(z^*) \land z\geq{}z_0} S(H_z)\cdot{}\frac{r_z}{g_z}} \nonumber \\
& = & \sum_{z\in{}T(z^*) \land z>i^*} S(H_z)\cdot{}\frac{r_z}{g_z} + \sum_{z\in{}T(z^*) \land i^*>z \geq z_0} S(H_z)\cdot{}\frac{r_z}{g_z} \nonumber \\
& & + \bigg(\sum_{z\in{}T(z^*) \land z\geq{}i^*} S(H_z) - \cn_{z^*}(z^*)\cdot{}g_{z^*} \bigg)\cdot \frac{r_{i^*}}{g_{i^*}} \nonumber\\
& & + \bigg(\cn_{z^*}(z^*)\cdot{}g_{z^*} - \sum_{z\in{}T(z^*) \land z>i^*} S(H_z) \bigg) \cdot{}\frac{r_{i^*}}{g_{i^*}} \nonumber \\
& = & \sum_{z\in{}T(z^*) \land z>i^*} S(H_z) \cdot \frac{r_{z}}{g_{z}} + \sum_{z\in{}T(z^*) \land i^*>z \geq z_0} \cn_{z^*}(z)\cdot{}g_z\cdot{}\frac{r_z}{g_z} \nonumber \\
& & + \cn_{z^*}(i^*)\cdot{}g_{i^*}\cdot \frac{r_{i^*}}{g_{i^*}} + \bigg(\cn_{z^*}(z^*)\cdot{}g_{z^*} - \sum_{z\in{}T(z^*) \land z>i^*} S(H_z) \bigg) \cdot{}\frac{r_{i^*}}{g_{i^*}} \nonumber \\
& \leq & \sum_{z\in{}T(z^*) \land z>i^*} S(H_z) \cdot \frac{r_{z^*}}{g_{z^*}} + \sum_{z\in{}T(z^*) \land i^*>z \geq z_0} \cn_{z^*}(z)\cdot{}r_z \nonumber \\
& & + \cn_{z^*}(i^*) \cdot r_{i^*} + \bigg(\cn_{z^*}(z^*)\cdot{}g_{z^*} - \sum_{z\in{}T(z^*) \land z>i^*} S(H_z) \bigg) \cdot{}\frac{r_{z^*}}{g_{z^*}} \nonumber \\
& = & \cn_{z^*}(z^*) \cdot r_{z^*} + \cn_{z^*}(i^*) \cdot r_{i^*} + \sum_{z\in{}T(z^*) \land i^*>z \geq z_0} \cn_{z^*}(z)\cdot{}r_z \nonumber \\
& = & \sum_{z\in{}T(z^*) \land z\geq{}z_0} \cn_{z^*}(z) \cdot r_z. \nonumber
\end{eqnarray}

\end{proof}

\begin{proof} [Proof of the left inequality in theorem \ref{thm:cn_app}]
To show all the details of the proof, we need a lemma first.

\begin{lemma} \label{lem:feasibility}
Suppose that $(a_j)^n_{j=1}, (b_j)^n_{j=1},(c_j)^n_{j=1}$ are three sequences of non-negative values. If $\sum_{j=x}^{n} a_j \leq \sum_{j=x}^{n} b_j$ for each $x=1,2,\ldots,n$ and $(c_j)^n_{j=1}$ is an non-decreasing sequence, then we have $a_1c_1+a_2c_2+\ldots+a_nc_n \leq b_1c_1+b_2c_2+\ldots+b_nc_n$.
\end{lemma}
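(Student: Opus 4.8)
The plan is to prove Lemma \ref{lem:feasibility} by Abel summation (summation by parts), which reduces the claim to a sum of manifestly non-negative terms. The hypotheses are tailor-made for this: the ``suffix-sum'' domination of $(a_j)$ by $(b_j)$ becomes positivity of partial sums, and the monotonicity of $(c_j)$ supplies non-negative increments to multiply them by.

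First I would set $d_j := b_j - a_j$ for $j = 1, \ldots, n$, so that the desired inequality $\sum_{j=1}^n a_j c_j \leq \sum_{j=1}^n b_j c_j$ is equivalent to $\sum_{j=1}^n d_j c_j \geq 0$. The assumption $\sum_{j=x}^n a_j \leq \sum_{j=x}^n b_j$ translates to $S_x := \sum_{j=x}^n d_j \geq 0$ for every $x = 1, \ldots, n$; I would also set $S_{n+1} := 0$ (the empty sum), so that $d_j = S_j - S_{j+1}$ holds for each $j = 1, \ldots, n$.

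Next I would substitute and regroup, shifting the index in the second sum:
\begin{align*}
\sum_{j=1}^n d_j c_j
&= \sum_{j=1}^n (S_j - S_{j+1}) c_j
 = \sum_{j=1}^n S_j c_j - \sum_{j=2}^{n+1} S_j c_{j-1} \\
&= S_1 c_1 - S_{n+1} c_n + \sum_{j=2}^{n} S_j (c_j - c_{j-1})
 = S_1 c_1 + \sum_{j=2}^n S_j (c_j - c_{j-1}),
\end{align*}
where the last equality uses $S_{n+1} = 0$. Every term on the right-hand side is non-negative: $S_1 \geq 0$ and $c_1 \geq 0$ by hypothesis, and for each $j \geq 2$ we have $S_j \geq 0$ together with $c_j - c_{j-1} \geq 0$ since $(c_j)$ is non-decreasing. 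Hence $\sum_{j=1}^n d_j c_j \geq 0$, which is exactly the claim.

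This argument is essentially mechanical, so there is no real obstacle; the only points requiring care are the index shift $\sum_{j=1}^n S_{j+1} c_j = \sum_{j=2}^{n+1} S_j c_{j-1}$ and appending the convention $S_{n+1} = 0$ so that the partial-summation identity closes cleanly. An alternative of comparable length would be induction on $n$, peeling off the $j=1$ term via $a_1 = \sum_{j=1}^n a_j - \sum_{j=2}^n a_j$ and comparing $c_1$ with $c_2$; but this requires re-deriving the suffix-sum hypotheses for the truncated sequences, so I would present the Abel-summation form above.
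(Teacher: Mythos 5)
Your proof is correct. Setting $d_j:=b_j-a_j$ and $S_x:=\sum_{j=x}^n d_j$ with $S_{n+1}:=0$, the identity $\sum_{j=1}^n d_jc_j=S_1c_1+\sum_{j=2}^n S_j(c_j-c_{j-1})$ is a valid summation by parts, and every term on the right is indeed non-negative under the stated hypotheses (note that the non-negativity of $c_1$ is genuinely needed for the first term, and you correctly invoke it). The paper instead proves the lemma by induction on $n$: it peels off the $j=1$ and $j=2$ terms, replaces $a_2$ by the larger quantity $\sum_{x=2}^n b_x-\sum_{x=3}^n a_x$ so that the induction hypothesis applies to the truncated sequences, and then verifies the resulting two-term comparison by hand using $c_2\geq c_1$ and $b_1\geq\sum_{x=1}^n a_x-\sum_{x=2}^n b_x$. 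Your Abel-summation route is arguably cleaner: it avoids the bookkeeping of re-establishing the suffix-sum hypotheses for modified sequences (which is the fiddly part of the paper's induction) and exhibits the inequality as a sum of manifestly non-negative terms in one pass. The paper's inductive argument, on the other hand, stays entirely within elementary term-by-term manipulations and never introduces the auxiliary partial sums $S_x$; both are complete and of comparable length.
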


\begin{proof} [Proof of lemma \ref{lem:feasibility}]
We prove the lemma by using induction on $n$. It suffices to consider the inductive case. Assume the lemma holds for $n-1\geq 1$, and we show that the lemma holds for $n \geq 2$.By the conditions given and the induction assumption, we have
\begin{equation} \label{eq:lem_feasibility_1}
\begin{aligned}
&\quad (\sum_{x=2}^{n} b_x - \sum_{x=3}^{n} a_x) c_2 + a_3c_3+\ldots+a_{n}c_{n} \\
&\leq b_2c_2 + b_3c_3 + \ldots + b_{n}c_{n}, \\
\end{aligned}
\end{equation}
because $\sum_{x=x_0}^{n} a_x \leq \sum_{x=x_0}^{n} b_x$ for each $x_0=3,4,\ldots,n$, $\sum_{x=2}^{n} b_x - \sum_{x=3}^{n} a_x \geq 0$, and $\sum_{x=2}^{n} b_x - \sum_{x=3}^{n} a_x + a_3 +\ldots + a_n = b_2 + b_3 + \ldots + b_n$.

Therefore,
\begin{equation*}
    \begin{aligned}
    &\quad a_1c_1+a_2c_2+\ldots+a_nc_n\\
    &\leq (\sum_{x=1}^n a_x - \sum_{x=2}^{n} b_x)c_1 + (\sum_{x=2}^{n} b_x - \sum_{x=3}^{n} a_x) c_2 + a_3c_3+\ldots+a_{n}c_{n}\\
    &\leq (\sum_{x=1}^n a_x - \sum_{x=2}^{n} b_x)c_1 + b_2c_2 + b_3c_3 + \ldots + b_{n}c_{n}, \text{~by~equation~(\ref{eq:lem_feasibility_1})}\\
    &\leq b_1c_1+b_2c_2+\ldots+b_n c_n,\\
    \end{aligned}
\end{equation*}
where
the first inequality is because $(\sum_{x=1}^n a_x - \sum_{x=2}^{n} b_x)c_1 + (\sum_{x=2}^{n} b_x - \sum_{x=3}^{n} a_x) c_2 - (a_1c_1+a_2c_2) = (\sum_{x=2}^n a_x - \sum_{x=2}^{n} b_x)c_1 + (\sum_{x=2}^{n} b_x - \sum_{x=2}^{n} a_x) c_2 = (\sum_{x=2}^{n} b_x - \sum_{x=2}^{n} a_x) (c_2-c_1) \geq 0$, and
the last inequality is due to $b_1 \geq \sum_{x=1}^n a_x - \sum_{x=2}^{n} b_x$ and $c_1 \geq 0$.
\end{proof}

Case 1:  $z^\diamond=k_{opt}$.

By Theorem \ref{thm:optmachine}, $w^*(k_{opt})\geq{}1$ is either $\floor*{\frac{S(H_{k_{opt}})}{g_{k_{opt}}}}$ or $\ceil*{\frac{S(H_{k_{opt}})}{g_{k_{opt}}}}$.

Case 1.1: $w^*(k_{opt})=\floor*{\frac{S(H_{k_{opt}})}{g_{k_{opt}}}}$.

By the definition of $z^\diamond$, we have
\begin{equation} \label{eq:cn1.1part1}
\begin{aligned}
\cn_{z^\diamond} (z^\diamond) \cdot r_{z^\diamond}
&= \ceil*{\frac{S(H_{k_{opt}})}{g_{k_{opt}}}} \cdot r_{k_{opt}}\\
&\leq{} 2\cdot{} \left(\floor*{\frac{S(H_{k_{opt}})}{g_{k_{opt}}}} \cdot  r_{k_{opt}} \right)\\
&= 2\cdot{} w^* (k_{opt})\cdot r_{k_{opt}}.\\
\end{aligned}
\end{equation}

On the other hand, we can also show that
\begin{equation} \label{eq:cn1.1part2}
\sum_{z\in{}T(z^\diamond)\setminus{}\{z^\diamond\}} \cn_{z^\diamond}(z) r_z \leq \sum_{z=1,2,\ldots,k_{opt}-1} w^*(z) r_z
\end{equation}

By the definition of $\cn_{z^\diamond}$, for each $z\in{}T(z^\diamond)\setminus{}\{z^\diamond\}$, $\cn_{z^\diamond}(z) r_z \leq{} S(H_z) \frac{r_z}{g_z}$. Thus,  $\sum_{z\in{}T(z^\diamond)\setminus{}\{z^\diamond\}} \cn_{z^\diamond}(z) r_z \leq \sum_{z\in{}T(z^\diamond )\setminus{}\{z^\diamond\}} S(H_z) \frac{r_z}{g_z}$. Therefore, it suffices to show that\\ $\sum_{z\in{}T(z^\diamond)\setminus{}\{z^\diamond\}} S(H_z) \frac{r_z}{g_z} \leq \sum_{z=1,2,\ldots,k_{opt}-1} w^*(z) r_z$.
Suppose that $T(z^\diamond)\setminus\{z^\diamond\}=\{i_1,i_2,\ldots,i_{n-1}\}$ where $i_1<i_2<\ldots<i_{n-1}$. By the feasibility of $w^*$, $\left(\sum_{j=x,x+1,\ldots,n-1} S(H_{i_{j}}) \right)+S(H_{z^\diamond}) \leq \sum_{z\geq{}v(i_{x})} w^*(z) g_z$ for each $x=1,2,\ldots,n-1$. Since $w^*(k_{opt})\cdot g_{k_{opt}} = \floor*{\frac{S(H_{k_{opt}})}{g_{k_{opt}}}}\cdot g_{k_{opt}} < S(H_{k_{opt}})$, we have $\sum_{j=x,x+1,\ldots,n-1} S(H_{i_{j}}) \leq \sum_{v(i_{x})\leq z \leq k_{opt}-1} w^*(z) g_z$. By Proposition $\ref{pro:T_ratio}$, $(\frac{r_{i_{j}}}{g_{i_{j}}})_{j=1}^{n-1}$ is a non-decreasing sequence. By Lemma \ref{lem:feasibility}, we have $\sum_{z\in{}T(z^\diamond)\setminus{}\{z^\diamond\}} S(H_z) \frac{r_z}{g_z}
\leq \left(\sum_{v(i_1)\leq z <v(i_2)} w^*(z) g_z\right)\cdot \frac{r_{i_1}}{g_{i_1}} + \left(\sum_{v(i_2)\leq z <v(i_3)} w^*(z) g_z \right) \cdot \frac{r_{i_2}}{g_{i_2}}+ \ldots + \left(\sum_{v(i_{n-1})\leq z \leq k_{opt}-1} w^*(z) g_z\right) \cdot \frac{r_{i_{n-1}}}{g_{i_{n-1}}}$. 
By the definition of the cost-per-capacity graph,\\ $\left(\sum_{v(i_1)\leq z <v(i_2)} w^*(z) g_z\right)\cdot \frac{r_{i_1}}{g_{i_1}} + \left(\sum_{v(i_2)\leq z <v(i_3)} w^*(z) g_z \right) \cdot \frac{r_{i_2}}{g_{i_2}}+ \ldots + \left(\sum_{v(i_{n-1})\leq z \leq k_{opt}-1} w^*(z) g_z\right) \cdot \frac{r_{i_{n-1}}}{g_{i_{n-1}}}
\leq \sum_{z=1,2,\ldots,k_{opt}-1} w^*(z) r_z$. Therefore, equation (\ref{eq:cn1.1part2}) has been proved.

In summary,
\begin{equation*}
\begin{aligned}
\sum_{z\in T(z^\diamond)} \cn_{z^\diamond}(z) r_z &= \cn_{z^\diamond}(z^\diamond)  r_{z^\diamond}+\sum_{z\in{}T(z^\diamond)\setminus{}\{z^\diamond\}} \cn_{z^\diamond}(z) r_z\\
&\leq{} 2\cdot{}w^*(k_{opt})\cdot r_{k_{opt}}+\sum_{z=1,2,\ldots{},k_{opt}-1}w^* (z) r_z\\ &\quad\text{~(by~equations~(\ref{eq:cn1.1part1}),(\ref{eq:cn1.1part2}))}\\
&\leq{} 2\cdot{} \sum_{z=1,2,\ldots{},k_{opt}} w^* (z) r_z.\\
\end{aligned}
\end{equation*}

Case 1.2: $w^* (k_{opt})=\ceil*{\frac{S(H_{k_{opt}})}{g_{k_{opt}}}}$.

By the definition of $\cn_{z^\diamond}$, $\cn_{z^\diamond}(z^\diamond)=w^*(k_{opt})$ and also, for any $z_0\in T(z^\diamond)$, $\sum_{z\in T(z^\diamond) \land z\geq{} z_0} S(H_z)=\sum_{z\in T(z^\diamond) \land z\geq{} z_0} \cn_{z^\diamond}(z) g_z$ if\\ $\sum_{z\in T(z^\diamond) \land z\geq{} z_0} \cn_{z^\diamond}(z) g_z > \cn_{z^\diamond}(z^\diamond) g_{z^\diamond}$. Therefore, due to the feasibility of $w^*$, we have $\sum_{z\in T(z^\diamond) \land z\geq{} z_0} \cn_{z^\diamond}(z) g_z \leq \sum_{z\geq v(z_0)} w^*(z) g_z$ for each $z_0\in T(z^\diamond)$. Suppose that $T(z^\diamond)\setminus{}\{z^\diamond\}=\{i_1,i_2,\ldots,i_{n-1}\}$ where $i_1<i_2<\ldots<i_{n-1}$. Similar to the discussion in Case 1.1, by Proposition \ref{pro:T_ratio}, Lemma \ref{lem:feasibility} and the definition of the cost-per-capacity graph, we have $\sum_{z\in T(z^\diamond)} \cn_{z^\diamond}(z) r_z \leq \left(\sum_{v(i_1)\leq z <v(i_2)} w^*(z) g_z\right) \cdot \frac{r_{i_1}}{g_{i_1}} + \left(\sum_{v(i_2)\leq z <v(i_3)} w^*(z) g_z \right) \cdot \frac{r_{i_2}}{g_{i_2}}+ \ldots + \left(\sum_{v(i_{n-1})\leq z < v(z^\diamond)} w^*(z) g_z\right) \cdot \frac{r_{i_{n-1}}}{g_{i_{n-1}}} + \left(\sum_{v(z^\diamond)\leq z \leq z^\diamond} w^*(z) g_z\right) \cdot \frac{r_{z^\diamond}}{g_{z^\diamond}}
\leq \sum_{z=1,2,\ldots,k_{opt}} w^*(z) r_z$.

Case 2: $z^\diamond<k_{opt}$.

By claim (1) of Proposition \ref{pro:alternative},
\begin{equation} \label{eq:cn2part1}
\sum_{z\in{}T(z^\diamond)\cap A_{k_{opt}}} \cn_{z^\diamond}(z) r_z \leq{} r_{k_{opt}}.
\end{equation}

Next, we show that
\begin{equation} \label{eq:cn2part2}
\sum_{z\in{}T(z^\diamond)\setminus{}A_{k_{opt}}} \cn_{z^\diamond}(z) r_z \leq{} \sum_{z=1,2,\ldots{},m}w^*(z) r_z.
\end{equation}

Similar to the discussion in Case 1.1, we have $\sum_{z\in{}T(z^\diamond )\setminus{}A_{k_{opt}}} \cn_{z^\diamond}(z) r_z
\leq{} \sum_{z\in{}T(z^\diamond )\setminus{}A_{k_{opt}}} S(H_z) \frac{r_z}{g_z}$. Suppose that $T(z^\diamond )\setminus{}A_{k_{opt}}=\{i_1,i_2,\ldots,i_a\}$ where $i_1<i_2<\ldots<i_a$. By the feasibility of $w^*$, Proposition \ref{pro:T_ratio}, Lemma \ref{lem:feasibility} and the definition of the cost-per-capacity graph, we have $\sum_{z\in T(z^\diamond)} \cn_{z^\diamond}(z) r_z \leq \left(\sum_{v(i_1)\leq z <v(i_2)} w^*(z) g_z\right)\cdot \frac{r_{i_1}}{g_{i_1}} + \left(\sum_{v(i_2)\leq z <v(i_3)} w^*(z) g_z\right)\cdot \frac{r_{i_2}}{g_{i_2}} + \ldots + \left(\sum_{v(i_a)\leq z < v(k_{opt})} w^*(z) g_z\right) \cdot \frac{r_{i_a}}{g_{i_a}} + \left(\sum_{v(k_{opt})\leq z \leq k_{opt}} w^*(z) g_z\right) \cdot \frac{r_{k_{opt}}}{g_{k_{opt}}}
\leq \sum_{z=1,2,\ldots,k_{opt}} w^*(z) r_z$.

In summary,
\begin{equation*}
\begin{aligned}
\sum_{z\in{}T(z^\diamond)} \cn_{z^\diamond}(z) r_z &= \sum_{z\in{}T(z^\diamond)\setminus{}A_{k_{opt}}} \cn_{z^\diamond}(z) r_z + \sum_{z\in{}T(z^\diamond)\cap{}A_{k_{opt}}} \cn_{z^\diamond}(z) r_z\\
&\leq{} \sum_{z=1,2,\ldots{},k_{opt}} w^* (z) r_z+r_{k_{opt}} \quad \text{(by~equations~(\ref{eq:cn2part1}),(\ref{eq:cn2part2}))}\\
&\leq{} 2\cdot{}\sum_{z=1,2,\ldots{},k_{opt}} w^* (z) r_z.\\
\end{aligned}
\end{equation*}

Case 3: $z^\diamond>k_{opt}$.

By Theorem \ref{thm:optmachine}, $\sum_{z\in{}A_{z^\diamond}}{w^*(z)r_z} < r_{z^\diamond}$. Consequently,\\ $\sum_{z\in{}A_{z^\diamond}}{w^*(z)g_z}
= \sum_{z\in{}A_{z^\diamond}}{w^*(z)r_z\cdot{}\frac{g_z}{r_z}}
\leq{} \sum_{z\in{}A_{z^\diamond}}{w^*(z)r_z\cdot{}\frac{g_{z^\diamond}}{r_{z^\diamond}}}
< r_{z^\diamond} \cdot \frac{g_{z^\diamond}}{r_{z^\diamond}} = g_{z^\diamond}$.
By the feasibility of $w^*$, $S(H_{z^\diamond})=
S(\{J \in \mathcal{J}^{1d}:m(J)\geq{}v(z^\diamond) \})
\leq{}\sum_{z\in{}A_{z^\diamond}}{w^*(z)g_z}$. Therefore, $S(H_{z^\diamond}) < g_{z^\diamond}$. By the definition of $\cn_{z^\diamond}$, we have
\begin{equation} \label{eq:cn3part1}
 \cn_{z^\diamond}(z^\diamond)=1.
\end{equation}

Consider $z':=\max \{z\in P(k_{opt}): z<z^\diamond\}$ which clearly exists. By claim (2) of Proposition \ref{pro:alternative}, $\sum_{z\in{}T(z')\cap{}A_{z^\triangle}}{\cn_{z'}(z) r_z} > r_{z^\triangle} \geq{} r_{z^\diamond}$ for some $z^\triangle\in P(z')\setminus\{z'\}$.
By Proposition \ref{pro:T_two} and
Proposition \ref{pro:cn&r'_total}, $\sum_{z\in{}T(z')\cap{}A_{z^\triangle}}{S(H_z) \frac{r_z}{g_z}} + r_{z'} \geq{} \sum_{z\in{}T(z')\cap{}A_{z^\triangle}}{\cn_{z'}(z) r_z} \geq{} r_{z^\diamond}$.
This implies that $\sum_{z\in{}T(z') \cap{} A_{z^\triangle}}{S(H_z) \frac{r_z}{g_z}} \geq{} r_{z^\diamond} - r_{z'} \geq{} \frac{7}{8} r_{z^\diamond}$.
Meanwhile, by the feasibility of $w^*$, Proposition \ref{pro:T_ratio}, Lemma \ref{lem:feasibility} and the definition of the cost-per-capacity graph, $\sum_{z=1,2,\ldots{},k_{opt}} {w^*(z) r_z} \geq{} \sum_{z\in{}T(z')\cap{}A_{z^\triangle}} {S(H_z)\frac{r_z}{g_z}}$.
Therefore,
\begin{equation} \label{eq:cn3part2}
\frac{7}{8} r_{z^\diamond} \leq{} \sum_{z=1,2,\ldots{},k_{opt}} w^*(z)r_z.
\end{equation}

On the other hand, note that by Proposition \ref{pro:T_two}, $T(z^\diamond)\setminus\{z^\diamond\}\subset T(k_{opt})$. Similar to the arguments for equation (\ref{eq:cn2part2}), we also have
\begin{equation} \label{eq:cn3part3}
\sum_{z\in{}T(z^\diamond)\setminus\{z^\diamond\}} \cn_{z^\diamond}(z) r_z \leq{} \sum_{z=1,2,\ldots{},k_{opt}} w^*(z)r_z.
\end{equation}

In summary,
\begin{equation*}
\begin{aligned}
\sum_{z\in T(z^\diamond)} \cn_{z^\diamond}(z) r_z
&= r_{z^\diamond} + \sum_{z\in{}T(z^\diamond)\setminus{}\{z^\diamond\}} \cn_{z^\diamond}(z) r_z \quad \text{(by~equation~(\ref{eq:cn3part1}))}\\
&\leq{} \frac{8}{7}\cdot{}\sum_{z=1,2,\ldots{},k_{opt}} w^*(z) r_z + \sum_{z=1,2,\ldots{},k_{opt}} w^*(z) r_z\\
&\quad \text{~(by~equations~(\ref{eq:cn3part2}),(\ref{eq:cn3part3}))}\\
&= \frac{15}{7}\cdot{}\sum_{z=1,2,\ldots,k_{opt}} w^*(z) r_z.\\
\end{aligned}
\end{equation*}

\end{proof}

\section{The online setting}

\subsection {About one-shot job scheduling} \label{sec:one_shot}

In this subsection, we keep denoting the given set of jobs for the optimization problem (\ref{opt:1d}) by $\mathcal{J}^{1d}$. In the following, we point out one special explicit way of scheduling the jobs in $\mathcal{J}^{1d}$ into any feasible machine configuration. Subsequently, for each job in $\mathcal{J}^{1d}$, its placement actually indicates a cost for each individual job in $\mathcal{J}^{1d}$.

Suppose that $w$ represents a feasible machine configuration. Sort all the machines in $w$ in the descending order of capacity. Also, sort the jobs in $\mathcal{J}^{1d}$ in the descending order of job size. Schedule the (divisible) jobs into the first available machine one by one until the machine is full or all parts of the jobs in $\mathcal{J}^{1d}$ are scheduled. In the latter case, the scheduling is done. If the former case happens, continue to schedule the rest parts of jobs into the next available machine. The feasibility of $w$ guarantees that there always exists the next available machine. Observe that each job can be scheduled into at most two distinct machines. For each $J\in{}\mathcal{J}^{1d}$, define $r^* (\mathcal{J}^{1d},w)(J):=s_1 \cdot \frac{r_{z_1}}{g_{z_1}} + s_2 \cdot \frac{r_{z_2}}{g_{z_2}}$, where $s(J)=s_1+s_2$, where $s_1$ part of $J$ is scheduled into some machine in type-$z_1$ and that $s_2$ part of $J$ is scheduled into some machine in type-$z_2$. The feasibility of $w$ also guarantees that $s(J)\leq{}g_{z_1}$ and $s(J)\leq{}g_{z_2}$.

\begin{remark} \label{rem:optone}
(1). Suppose that $w$ is any machine configuration feasible to (\ref{opt:1d}) and $k_{opt}:=\max \{z:w(z)>0\}$. For any job $J\in{}H_z=\{J\in \mathcal{J}^{1d}:m(J)\in A_z\}$, where $z$ is any machine type from $T(k_{opt})$, we show that $r^* (\mathcal{J}^{1d},w)(J)\geq{}s(J)\cdot \frac{r_z}{g_z}$.

Suppose $J$ is divided into two parts which are scheduled into some type $z_1$ machine and some $z_2$ machine, without loss of generality. By the feasibility of $w$, $z_1 \geq m(J) \geq v(z)$ and $z_2 \geq m(J) \geq v(z)$. On the other hand, by proposition \ref{pro:T_union} and \ref{pro:T_ratio}, for each $i\in \{v(z),v(z)+1,\ldots,k_{opt}\}$, $\frac{r_i}{g_i} \geq \frac{r_z}{g_z}$. Therefore, $r^* (\mathcal{J}^{1d},w)(J):=s_1\cdot  \frac{r_{z_1}}{g_{z_1}} + s_2\cdot \frac{r_{z_2}}{g_{z_2}} \geq s(J)\cdot \frac{r_z}{g_z}$.

(2). $\sum_{J\in \mathcal{J}^{1d}} r^*(\mathcal{J}^{1d}, w)(J) \leq \sum_{z\in \mathcal{M}} w(z)\cdot r_z$.
\end{remark}

\subsection {Proof of Theorem \ref{thm:N_leq_opt}} \label{sec:N_leq_opt}

\begin{lemma} \label{lem:online_t}
For each machine type $z\in \mathcal{M}$ and for each time $t\in \sspan(\mathcal{J})$, we have $\sum_{i\in A_z\setminus \{z\}} N(i,t) r_i < r_z$.
\end{lemma}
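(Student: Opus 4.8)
For each machine type $z\in \mathcal{M}$ and each time $t\in \sspan(\mathcal{J})$, we have $\sum_{i\in A_z\setminus \{z\}} N(i,t) r_i < r_z$.

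The plan is to observe that the inequality $\sum_{i\in A_z\setminus\{z\}} N(i,t) r_i < r_z$ is exactly the invariant that the opening rule of $ALG_{online}$ (line 6) is designed to enforce, and to prove it by induction on the chronological sequence of \emph{events}: a job release (at which $ALG_{online}$ runs and opens at most one new machine) or a machine closure (when all active jobs on it end). Since jobs arrive one at a time, the quantities $N(i,\cdot)$ are constant between consecutive events, so it suffices to verify the inequality, for every machine type $z$ simultaneously, immediately after each event.

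First I would handle the trivial cases. Before any job is released every $N(i,\cdot)=0$, so each left-hand side is $0<r_z$. A machine closure only decreases some $N(i,\cdot)$ and leaves the rest unchanged, so no left-hand side can increase, and the invariant is preserved. A job release in which $ALG_{online}$ places the job on an already-open machine (line 4) changes no $N(i,\cdot)$ at all.

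The crux is a release in which $ALG_{online}$ opens one new machine, say of type $z^\star\in P(m(J))$, at the iteration where the line-6 test passes. I would split on the relation between a fixed type $z$ and $z^\star$. If $z$ is not a proper ancestor of $z^\star$ in the cost-per-capacity graph, then $z^\star\notin A_z\setminus\{z\}$ (otherwise $z$ would be a proper ancestor of $z^\star$), so opening the new machine leaves $\sum_{i\in A_z\setminus\{z\}} N(i,\cdot) r_i$ unchanged and the induction hypothesis carries over; this in particular covers $z=z^\star$ itself. If instead $z\in P(z^\star)\setminus\{z^\star\}$, the line-6 condition, with $n_x$ the machine counts just before the release, gives $\sum_{x\in A_z\setminus\{z\}} n_x r_x < r_z - r_{z^\star}$ (and $r_z - r_{z^\star}>0$ since proper ancestors have strictly larger indices, hence strictly larger cost rates); adding the single new type-$z^\star$ machine contributes exactly $r_{z^\star}$ to the sum, so afterwards it is $< (r_z - r_{z^\star}) + r_{z^\star} = r_z$, which is the invariant for $z$. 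Concluding the induction yields the inequality at every time $t$, and in particular for all $t\in\sspan(\mathcal{J})$.

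I do not expect a genuine obstacle: the lemma is essentially a reformulation of the algorithm's opening criterion. The only points requiring care are (i) checking that the set of types $z$ whose sum is affected by opening a type-$z^\star$ machine is exactly the set of proper ancestors of $z^\star$ — precisely the set line 6 quantifies over, with the root case ``$p(z^\star)$ does not exist'' absorbed as the vacuous instance — and (ii) the left-closed/right-open convention for active intervals, so that a machine opened at $I(J)^-$ is counted from $I(J)^-$ onward; this affects only how the post-event state is read off and not the argument itself.
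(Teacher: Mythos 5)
Your proof is correct and rests on the same key fact as the paper's: the line-6 opening test guarantees $\sum_{x\in A_{z_a}\setminus\{z_a\}} n_x r_x < r_{z_a} - r_{z^\star}$ for every proper ancestor $z_a$ of the type $z^\star$ being opened, so the new machine's contribution of $r_{z^\star}$ keeps each such sum strictly below $r_{z_a}$, while all other types' sums are unaffected. The paper packages this as a minimal-counterexample contradiction (take the last-opened machine of a type in $A_{z_0}\setminus\{z_0\}$ still open at $t$ and contradict the line-6 condition at its opening time), whereas you run a forward induction over release/closure events; the two arguments are interchangeable.
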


\begin{proof} [Proof of lemma \ref{lem:online_t}]
This lemma can be easily proven by contradiction. Assume that at some time $t$ there exists some machine type $z_0$ such that $\sum_{i\in A_{z_0}\setminus \{z_0\}} N(i,t) r_i \geq r_{z_0}$.
Choose the last opened machine among all the machines being open at time $t$ and whose machine type is from $A_{z_0}\setminus \{z_0\}$, say machine $m_1$ whose machine type is $z_1\in A_{z_0}\setminus \{z_0\}$. (By assumption of the online setting, there is only one such machine) Call the job arriving at the same time as the starting time of the machine $m_1$ by $J_1$. (by the same assumption, there is only one such job. )
We have $I(J_1 )^- \leq t$. By the choice of $m_1$, we know that $\sum_{i\in A_{z_0}\setminus \{z_0\}} N(i,I(J_1 )^- ) r_i \geq \sum_{i\in A_{z_0 }\setminus \{z_0\}} N(i,t) r_i \geq r_{z_0}$, where the first inequality is because every machine being open at time t from the machine types $A_{z_0} \setminus \{z_0\}$ must also be open at $I(J_1 )^-$.
Also note that since $J_1$ is scheduled into a new machine $m_1$ of type $z_1\in A_{z_0}\setminus \{z_0\}$, we have $N(i,I(J)^-)=n_i$ for each $i\in A_{z_0} \setminus \{z_0,z_1\}$, and $N(z_1,I(J)^-)=n_{z_1}+1$, where $n_{x}$ for each $x\in A_{z_0}\setminus \{z_0\}$ is the number of type $x$ machines being open when $J_1$ is arriving but has not been scheduled yet. Therefore, $\sum_{i\in A_{z_0}\setminus \{z_0\}} n_i\cdot r_i \geq r_{z_0} - r_{z_1}$ which contradicts to the condition defined in the online algorithm $ALG_{online}$. (line 6)

\end{proof}

\begin{proof} [Proof of theorem \ref{thm:N_leq_opt}]

Take any time $t\in \sspan(\mathcal{J})$.
Denote $\max \{m(J) : J\in \mathcal{J}(t)\}$ by $k_0$.

In the following we briefly show that $k(t)\in P(k_0)$, because the arguments will be identical to the proof of property \ref{property:off_head}.
By the definition of $ALG_{online}$, for each $J\in \mathcal{J}(t)$, the machine type $J$ is scheduled into must be in $P(m(J))$. Therefore, all the possible machine types $J$ is scheduled into are $\{z\in \mathcal{M}: \exists i, \text{~s.t.~}, 1\leq i \leq k_0 \land z\in P(i)\}=\{1,2,\ldots,k_0\}\cup (P(k_0)\setminus \{k_0\})$. Since the jobs in $\{J\in \mathcal{J}(t): m(J)=k_0\}$ must be scheduled into some machine of type from $P(k_0)$, $k(t)$ must be in $P(k_0)$.

Denote by $w^* (z)$ with $z\in \mathcal{M}$ some optimal machine configuration for the one-shot scheduling of the jobs in $\mathcal{R}(t)\cup \mathcal{J}(t)$ chosen by theorem \ref{thm:optmachine}. Subsequently, let $k_{opt}:=\max \{z : w^* (z) > 0\}$ denote the highest machine type used by the chosen optimal machine configuration. We have $k_{opt}\in P(k_0)$.

Case 1: $k_{opt} \geq k(t)$.

By proposition \ref{pro:T_union}, $\{1,2,\ldots,k(t)\}\subset \{1,2,\ldots,k_{opt}\}= \cup_{z\in T(k_{opt})} A_z$. Note that if $N(k_{opt},t)>1$, then $k_{opt}=k(t)$.
For each $z\in I:=\{z\in T(k_{opt}) : N(z,t)>1\}$, denote the union of $K_z (t)$ and $\{J\in \mathcal{R}(t) : m(J)\in A_z \}$ by $L_z$. Clearly all such $L_z$'s are mutually disjoint.

We have
\begin{equation} \label{eq:N_leq_opt_case1_1}
\begin{aligned}
&\quad \optone (\mathcal{R}(t)\cup \mathcal{J}(t))\\
&= \sum_{z\in \mathcal{M}} w^*(z) \cdot r_z\\
&\geq \sum_{J\in \cup_{z\in I} L_z} r^*(\mathcal{R}(t)\cup \mathcal{J}(t),w^*) (J), \text{~by~remark~\ref{rem:optone}~(2)}\\
&\geq \sum_{J\in \cup_{z\in I} L_z} s(J)\cdot \frac{r_z}{g_z},\text{~by~remark~\ref{rem:optone}~(1)}\\
&\geq \sum_{z\in I} (N(z,t)-1)\cdot r_z, \text{~by~lemma~\ref{thm:R_case2}}\\
\end{aligned}
\end{equation}

Consider the following sequence of inequalities:
\begin{equation} \label{eq:N_leq_opt_case1_2}
\begin{aligned}
&\quad \sum_{i=1,2,\ldots,k(t)} N(i,t)\cdot r_i\\
&= \sum_{i\in A_{k_{opt}}} N(i,t)\cdot r_i + \sum_{z\in T(k_{opt})\setminus \{k_{opt}\}} \sum_{i\in A_z} N(i,t)\cdot r_i\\
&\leq (N(k_{opt},t) + 1)\cdot r_{k_{opt}} + \sum_{z\in T(k_{opt})\setminus \{k_{opt}\}} (N(z,t) + 1)\cdot r_{z}, \text{~by~lemma~\ref{lem:online_t}}\\
&\leq (N(k_{opt},t) + 1)\cdot r_{k_{opt}} + \frac{2}{7}\cdot r_{k_{opt}} + \sum_{z\in I\setminus \{k_{opt}\}} (N(z,t) - 1)\cdot r_{z}, \\
\end{aligned}
\end{equation}
where we have used the assumption which says that the cost rate of each machine type is a power of $8$.

Note that $r_{k_{opt}} \leq \optone (\mathcal{R}(t) \cup \mathcal{J}(t))$.
No matter $N(k_{opt},t) \leq 1$ or $N(k_{opt} ,t) \geq 2$, the right end of equation (\ref{eq:N_leq_opt_case1_2}) is bounded by $(3+\frac{2}{7})\cdot \optone (\mathcal{R}(t) \cup \mathcal{J}(t))$ by using equation (\ref{eq:N_leq_opt_case1_1}).

Therefore, in case 1, we have $\sum_{i=1,2,\ldots,k(t)} N(i,t)\cdot r_i \leq (3+\frac{2}{7})\cdot \optone (\mathcal{R}(t) \cup \mathcal{J}(t))$. Clearly, $3+\frac{2}{7} \leq 5$.

Case 2: $k_{opt} < k(t)$

Firstly, we show that only the case that $N(k(t),t)=1$ and $\forall J\in K_{k(t)}(t)$, $m(J) < k(t)$ will happen.
Since $k_0 \leq k_{opt} < k(t)$, we have $\forall J\in \mathcal{J}(t)$, $m(J) < k(t)$. For the sake of contradiction, assume that $N(k(t),t) \geq 2$. Lemma \ref{thm:R_case2} and the definition of $ALG_{online}$ implies that
\begin{equation} \label{eq:N_leq_opt_case2_1}
\begin{aligned}
&\quad S(\{J\in{}\mathcal{R}(t)\cup \mathcal{J}(t):m(J)\in{}A_{k(t)}\})\\
&\geq S(\{J\in{}\mathcal{R}(t):m(J)\in{}A_{k(t)}\})+S(K_{k(t)},t)\\
&\geq{} (N(k(t),t) - 1)\cdot g_{k(t)}.\\
\end{aligned}
\end{equation}
However, the choice of the optimal machine configuration says that $\sum_{z\in A_{k(t)}} w^*(z) \cdot r_z < r_{k(t)}$. Since the node $k(t)$ has the lowest cost rate among all the nodes in $A_{k(t)}$, immediately we have $g_{k(t)}>\sum_{z\in A_{k(t)}} w^*(z) \cdot g_z \geq S(\{J\in{}\mathcal{R}(t)\cup \mathcal{J}(t) : m(J)\in{}A_{k(t)}\})$, where the last inequality is due to the feasibility of $w^*$. This clearly contradicts to equation (\ref{eq:N_leq_opt_case2_1}).

Therefore, $N(k(t),t)=1$. By Lemma \ref{thm:R_case1}, take $\hat{J}\in K_{k(t)}(t)$. Since $m(\hat{J}) < k(t)$, $m(\hat{J})\in A_{f_1}$ for some $f_1\in f(k(t))$. By the definition of the online algorithm, if $\hat{J}$ were to be scheduled into some type $f_1$ machine, one new type $f_1$ machine must be opened for processing $\hat{J}$. However, although it is known that $\hat{J}$ is scheduled into some type $k(t)$ machine, it is not known that whether a new type $k(t)$ machine is opened for processing it or not. There are two cases in general:

Case 2.1: $\sum_{z\in A_{k(t)}\setminus \{k(t)\}} N(z,I(\hat{J})^-) \cdot r_z + r_{f_1} \geq r_{k(t)}$.

Case 2.2: there exists some $k^\triangle \in P(k(t))\setminus \{k(t)\}$ such that $\sum_{z\in A_{k^\triangle}\setminus \{k^\triangle, k(t)\}} N(z,I(\hat{J})^-) \cdot r_z + n_{k(t)}\cdot r_{k(t)} + r_{f_1} \geq r_{k^\triangle}$, where $n_{k(t)}$ is either $N(k(t),I(\hat{J})^-)$ or $N(k(t),I(\hat{J})^-) - 1$. (see line 6 of the definition of $ALG_{online}$)

Case 2.1. In this case, we have
\begin{equation} \label{eq:N_leq_opt_case2.1_1}
\begin{aligned}
&\quad \sum_{z\in f(k(t)) \land N(z,I(\hat{J})^-) > 1} (N(z,I(\hat{J})^-) - 1)\cdot r_z\\
&\geq r_{k(t)} - r_{f_1} - \sum_{z\in f(k(t))} \left(r_z + \sum_{i\in A_z\setminus \{z\}} N(i,I(\hat{J})^-)\cdot r_i\right) \\
&\geq r_{k(t)} - r_{f_1} - \sum_{z\in f(k(t))} 2\cdot r_z, \text{~by~lemma~\ref{lem:online_t}} \\
&\geq (1-\frac{1}{8}-\frac{2}{7})\cdot r_{k(t)}, \\
\end{aligned}
\end{equation}
where in the last inequality we have used the assumption that the cost rate for each machine type is a power of $8$.

Consider those $z\in f(k(t))$ such that $N(z,I(\hat{J} )^- ) > 1$ (corresponding to the left-hand side of the above equation (\ref{eq:N_leq_opt_case2.1_1})). For each $J\in L_z:=K_z (t)\cup \{\mathcal{R}(t) : m(J)\in A_z \}$, we have
\begin{equation} \label{eq:N_leq_opt_case2.1_2}
\quad r^* (\mathcal{R}(t)\cup \mathcal{J}(t),w^* )(J) \geq s(J)\cdot \frac{r_{z_0}}{g_{z_0}} \geq s(J)\cdot \frac{r_z}{g_z},
\end{equation}
where $z_0$ is the node in $T(k_{opt})$ such that $m(J)\in A_{z_0}$. The first inequality is by remark \ref{rem:optone} (1), and the second is because $k_{opt}<k(t)$ implies that $z_0\in A_z$.

We have the following inequalities hold:
\begin{equation*}
\begin{aligned}
&\quad \optone(\mathcal{R}(t)\cup \mathcal{J}(t)) \\
&\geq \sum_{J\in \cup_{z\in f(k(t)) \land N(z,I(\hat{J})^- ) > 1} L_z} r^* (\mathcal{R}(t)\cup \mathcal{J}(t) ,w^* )(J),\\ &\quad \text{~by~remark~\ref{rem:optone}~(2)}\\
&\geq \sum_{z\in f(k(t)) \land N(z,I(\hat{J})^- ) > 1} S(L_z ) \cdot \frac{ r_z}{g_z}, \text{~by~equation~(\ref{eq:N_leq_opt_case2.1_2})}\\
&\geq \sum_{z\in f(k(t)) \land N(z,I(\hat{J})^- )>1} (N(z,I(\hat{J})^- )-1)\cdot r_z,\text{~by~lemma~\ref{thm:R_case2}}\\
&\geq (1-\frac{1}{8}-\frac{2}{7})\cdot r_{k(t)}, \text{~by~equation~(\ref{eq:N_leq_opt_case2.1_1})}\\
\end{aligned}
\end{equation*}

Eventually, we have $\optone(\mathcal{R}(t)\cup \mathcal{J}(t)) \geq (1-\frac{1}{8}-\frac{2}{7})\cdot r_{k(t)}$.

Case 2.2.
At first, we show that $\hat{J}$ was scheduled into some type $k(t)$ machine which has been opened before $\hat{J}$ arrives, i.e., $n_{k(t)} = N(k(t),I(\hat{J} )^- )$. For the sake of contradiction, assume that a new type $k(t)$ machine was opened for scheduling $\hat{J}$ at time $I(\hat{J} )^-$. In this case, we have $\sum_{z\in A_{k^\triangle}\setminus\{k^\triangle\}} N(z,I(\hat{J} )^- ) r_z - r_{k(t)} + r_{f_1} \geq r_{k^\triangle}$. This implies that $\sum_{z\in A_{k^\triangle}\setminus\{k^\triangle\}} N(z,I(\hat{J} )^- ) r_z \geq r_{k^\triangle}$ which contradicts to lemma \ref{lem:online_t}.

It is easy to see that $k(t)-1=\max f(k(t))$. Denote $T(k(t)-1)\cap A_{k^\triangle}$ by $H$. By proposition \ref{pro:consecutive} and proposition \ref{pro:T_union}, we have $\cup_{z\in H} A_z = \{v(k^\triangle),v(k^\triangle)+1,\ldots,k(t)-1\}$. Now, the condition of the case 2.2 says that $\sum_{z\in A_{k^\triangle}\setminus \{k^\triangle\}} N(z,I(\hat{J})^-) \cdot r_z + r_{f_1} \geq r_{k^\triangle}$ which implies that
\begin{equation} \label{eq:N_leq_opt_case2.2_1}
\begin{aligned}
&\quad \sum_{z\in H \land N(z,I(\hat{J})^-)>1} (N(z,I(\hat{J})^-)-1)\cdot r_z \\
&\quad + \sum_{z\in H} \left(r_z + \sum_{i\in A_z\setminus \{z\}} N(i,I(\hat{J})^-)\cdot r_i \right) + r_{f_1}\\
&\geq r_{k^\triangle} - \sum_{z=k(t),k(t)+1,\ldots,k^\triangle-1} N(z,I(\hat{J})^-)\cdot r_z.\\
\end{aligned}
\end{equation}
For the right-hand side of the above equation (\ref{eq:N_leq_opt_case2.2_1}), by lemma \ref{lem:online_t}, $r_{k^\triangle} - \sum_{z=k(t),k(t)+1,\ldots,k^\triangle-1} N(z,I(\hat{J})^-)\cdot r_z$ must be positive. Furthermore, after dividing it by $r_{k(t)}$, the result is a positive integer. Therefore, $r_{k^\triangle} - \sum_{z=k(t),k(t)+1,\ldots,k^\triangle-1} N(z,I(\hat{J})^-)\cdot r_z$ must be at least $1\cdot r_{k(t)}$. By equation (\ref{eq:N_leq_opt_case2.2_1}), we have the following
\begin{equation*}
\begin{aligned}
&\quad \sum_{z\in H \land N(z,I(\hat{J})^-)>1} (N(z,I(\hat{J})^-)-1)\cdot r_z \\
&\quad + \sum_{z\in H} \left(r_z + \sum_{i\in A_z\setminus \{z\}} N(i,I(\hat{J})^-)\cdot r_i \right) + r_{f_1} \\
&\geq r_{k(t)}, \\
\end{aligned}
\end{equation*}
and consequently,
\begin{equation} \label{eq:N_leq_opt_case2.2_2}
\begin{aligned}
&\quad \sum_{z\in H \land N(z,I(\hat{J})^-)>1} (N(z,I(\hat{J})^-)-1)\cdot r_z\\
&\geq r_{k(t)} - r_{f_1} - \sum_{z\in H} \left(r_z + \sum_{i\in A_z\setminus \{z\}} N(i,I(\hat{J})^-)\cdot r_i \right) \\
&\geq r_{k(t)} - \frac{1}{8} \cdot r_{k(t)} - \sum_{z\in H} 2\cdot r_z\\
&\geq (1-\frac{1}{8} - \frac{2}{7})\cdot r_{k(t)}. \\
\end{aligned}
\end{equation}

Similarly as case 2.1, let $L_z:=K_z (t)\cup \{\mathcal{R}(t) : m(J)\in A_z \}$ for each $z\in H$ such that $N(z,I(\hat{J})^-) > 1$. we have the inequalities:
\begin{equation*}
\begin{aligned}
&\quad \optone (\mathcal{R}(t)\cup \mathcal{J}(t)) \\
&\geq \sum_{J\in \cup_{z\in H \land N(z,I(\hat{J} )^- )>1} L_z} r^* (\mathcal{R}(t)\cup \mathcal{J}(t),w^* )(J)\\
&\geq \sum_{z\in H \land N(z,I(\hat{J} )^- )>1} S(L_z) \cdot \frac{r_z}{g_z}, \text{~since~} k_{opt} < k(t) \\
&\geq \sum_{z\in H \land N(z,I(\hat{J} )^- ) > 1} (N(z,I(\hat{J} )^- )-1)\cdot r_z, \text{~by~lemmas~\ref{thm:R_case2}~and~\ref{thm:R_case1}} \\
&\geq  (1-\frac{1}{8} - \frac{2}{7})\cdot r_{k(t)}, \text{~by~equation~(\ref{eq:N_leq_opt_case2.2_2})}. \\
\end{aligned}
\end{equation*}

In summary, all the above discussion for the two subcases 2.1 and 2.2 has shown that
\begin{equation} \label{eq:N_leq_opt_case2_2}
\optone (\mathcal{R}(t)\cup \mathcal{J}(t)) \geq (1-\frac{1}{8} - \frac{2}{7})\cdot r_{k(t)}
\end{equation}

Also let $L_{z}:=K_{z} (t)\cup \{J\in \mathcal{R}(t) : m(J)\in A_{z} \}$ for each $z\in \{z\in T(k(t)) : N(z,t)>1\}$. Since $k_{opt} < k(t)$, we have $\cup_{z\in T(k_{opt})} A_z = \{1,2,\ldots,k_{opt}\} \subset \{1,2,\ldots,k(t)\} = \cup_{z\in T(k(t))} A_z$. For each $z\in \{z\in T(k(t)) : N(z,t)>1\}$, for each job $J\in L_z$, $r^* (\mathcal{R}(t)\cup \mathcal{J}(t),w^* )(J) \geq s(J)\cdot \frac{r_{z}}{g_{z}}$. Therefore,
\begin{equation} \label{eq:N_leq_opt_case2_3}
\begin{aligned}
&\quad \optone (\mathcal{R}(t)\cup \mathcal{J}(t))\\
&\geq \sum_{J\in \cup_{z\in T(k(t)) \land N(z,t )>1} L_z} r^* (\mathcal{R}(t)\cup \mathcal{J}(t),w^* )(J)\\
&\geq \sum_{z\in T(k(t)) \land N(z,t) > 1} S(L_z)  \cdot \frac{r_z}{g_z}\\
&\geq \sum_{z\in T(k(t)) \land N(z,t) > 1} (N(z,t) - 1)\cdot r_{z}.\\
\end{aligned}
\end{equation}

At last, we have
\begin{equation*}
\begin{aligned}
&\quad \sum_{z=1,2,\ldots,k(t)} N(z,t)\cdot r_z \\
&\leq \sum_{z\in T(k(t)) \land N(z,t) > 1} (N(z,t) - 1)\cdot r_z \\
&+ \sum_{z\in T(k(t))} \left(r_z + \sum_{i\in A_z\setminus \{z\}} N(i,t)\cdot r_i \right) \\
&\leq \optone (\mathcal{R}(t)\cup \mathcal{J}(t)) + \sum_{z\in T(k(t))} 2\cdot r_z,\\
&\quad \text{~by~equation~(\ref{eq:N_leq_opt_case2_3})~and~lemma~\ref{lem:online_t}}\\
&\leq \optone (\mathcal{R}(t)\cup \mathcal{J}(t)) + (2+\frac{2}{7}) \cdot r_{k(t)}\\
&\leq \left( 1 + \frac{2 + \frac{2}{7}} {1 - \frac{1}{8} - \frac{2}{7}} \right)\cdot \optone (\mathcal{R}(t)\cup \mathcal{J}(t)), \text{~by~equation~(\ref{eq:N_leq_opt_case2_2})}. \\
\end{aligned}
\end{equation*}

Clearly, $1 + \frac{2 + \frac{2}{7}} {1 - \frac{1}{8} - \frac{2}{7}} \leq 1 + \frac{2 + \frac{2}{7}} {1 - \frac{3}{7}} = 5$. We have finished the discussion of case 2.

\end{proof}

\subsection {Proof of Theorem \ref{thm:tilder_1}} \label{sec:tilder_1}

For ease of reference,
Figure \ref{fig:rtilde} summarizes the costs  $\tilde{r}(\mathcal{J}^{1d})(J)$ charged on individual jobs in different cases.

\begin{figure}[h]
  \centering
  \includegraphics[width=18cm]{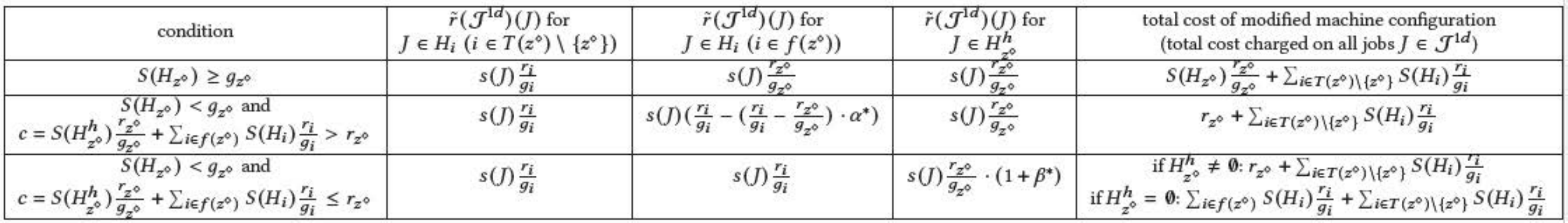}
  \caption{Definition of $\tilde{r}(\mathcal{J}^{1d})(J)$}
  \label{fig:rtilde}
\end{figure}

In this section, we prove Theorem \ref{thm:tilder_1}. Suppose $\mathcal{X}$ and $\mathcal{Y}$ are two sets of jobs such that $\mathcal{X}\subset \mathcal{Y}$. Let $h_0:=\max \{m(J): J\in \mathcal{X}\}$ and $k_0:=\max \{m(J):J\in \mathcal{Y}\}$. Furthermore, let $\cn_{z^\diamond_1}^{\mathcal{X}}$ and $\cn_{z^\diamond_2}^{\mathcal{Y}}$ denote the alternative machine configurations for the one-shot scheduling $\mathcal{X}$ and $\mathcal{Y}$ respectively. By definition, $z^\diamond_1\in P(h_0)$ and $z^\diamond_2\in P(k_0)$.

\begin{lemma} \label{lem:compare_0}

Take any $k_1\in P(k_0)$ arbitrarily. There exists $h_1\in T(k_1)$ such that $h_0\in A_{h_1}$. Furthermore, for each $z_0\in T(h_1)$, we have $\sum_{z\in T(h_1) \land z\geq z_0} \cn_{h_1}^{\mathcal{X}} (z) \cdot r_z \leq \sum_{z\in T(k_1) \land z\geq z_0} \cn_{k_1}^{\mathcal{Y}} (z) \cdot r_z$.

\end{lemma}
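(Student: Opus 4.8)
The plan is to treat the two assertions of the lemma separately. For the existence of $h_1$: since $\mathcal{X}\subseteq\mathcal{Y}$ we have $h_0\le k_0\le k_1$, so $h_0\in\{1,\dots,k_1\}$, and by Proposition~\ref{pro:T_union} the trees $\{A_z:z\in T(k_1)\}$ partition $\{1,\dots,k_1\}$; hence there is a unique $h_1\in T(k_1)$ with $h_0\in A_{h_1}$, and then $h_1\in P(h_0)$, so $\cn_{h_1}^{\mathcal{X}}$ is defined. Along the way I would record that $T(h_1)=\{z\in T(k_1):z\le h_1\}$: writing $T(k_1)=\{i_1<\cdots<i_n\}$ with $k_1=i_n$ and $h_1=i_j$, Propositions~\ref{pro:consecutive} and~\ref{pro:T_union} make $\{A_{i_1},\dots,A_{i_j}\}$ a partition of $\{1,\dots,i_j\}$ into subtrees, and such a partitioning index set is unique (remove the maximum index repeatedly), so it equals $T(i_j)=T(h_1)$.

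For the inequality I would first establish a ``water-filling'' closed form for the partial cost of any configuration $\cn_{i_m}^{\mathcal{Z}}$ of the kind built in Section~\ref{sec:first_app} with top type $i_m\in T(k_1)$ and job set $\mathcal{Z}$: writing $\rho_q:=r_{i_q}/g_{i_q}$ (non-decreasing in $q$ by Proposition~\ref{pro:T_ratio}), $C_{\mathcal{Z}}:=g_{i_m}\ceil{S(H_{i_m}^{\mathcal{Z}})/g_{i_m}}$ and $O_q^{\mathcal{Z}}:=\max\big(0,\textstyle\sum_{q'=q}^{m}S(H_{i_{q'}}^{\mathcal{Z}})-C_{\mathcal{Z}}\big)$, one has for every $l\le m$
\[
\sum_{z\in T(i_m),\,z\ge i_l}\cn_{i_m}^{\mathcal{Z}}(z)\,r_z \;=\; \rho_m C_{\mathcal{Z}} + \rho_l O_l^{\mathcal{Z}} + \sum_{q=l+1}^{m-1}(\rho_q-\rho_{q-1})\,O_q^{\mathcal{Z}},
\]
a sum of non-negative terms (proof: read off $\cn_{i_m}^{\mathcal{Z}}(i_q)g_{i_q}=O_q^{\mathcal{Z}}-O_{q+1}^{\mathcal{Z}}$ for $q<m$ from the definition of the boundary type $i^{*}$, then Abel summation, as in Propositions~\ref{pro:cn&r'_total}--\ref{pro:cn&r'_head}). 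Applying this with $(\mathcal{Z},m)=(\mathcal{X},j)$ and $(\mathcal{Z},m)=(\mathcal{Y},n)$ and using $S(H_{i_q}^{\mathcal{X}})\le S(H_{i_q}^{\mathcal{Y}})$ for $q\le j$, I would drop the non-negative $\mathcal{Y}$-terms of index $>j$ and compare the rest. Setting $D:=C_{\mathcal{Y}}-\sum_{q'=j+1}^{n}S(H_{i_{q'}}^{\mathcal{Y}})$ (the capacity the type-$k_1$ machines of $\cn_{k_1}^{\mathcal{Y}}$ leave for jobs of exact type $\le h_1$), one has $O_q^{\mathcal{Y}}=\max(0,\sum_{q'=q}^{j}S(H_{i_{q'}}^{\mathcal{Y}})-D)$ for $q\le j$, and I would split on $D$. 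If $D\le C_{\mathcal{X}}$ then $O_q^{\mathcal{Y}}\ge O_q^{\mathcal{X}}$ for all $q\le j$, and the whole claim collapses to the single scalar inequality $\rho_j C_{\mathcal{X}}\le \rho_n C_{\mathcal{Y}}+(\rho_j-\rho_{j-1})O_j^{\mathcal{Y}}$. In the subcase $D\le 0$ this is easy: $\cn_{k_1}^{\mathcal{Y}}$ then puts all of $H_{h_1}^{\mathcal{Y}}$ on type-$h_1$ machines, so it spends at least $S(H_{h_1}^{\mathcal{Y}})\rho_j+\sum_{q=l}^{j-1}S(H_{i_q}^{\mathcal{Y}})\rho_q+r_{k_1}$ on the types $\le h_1$ together with the top, whereas $\cn_{h_1}^{\mathcal{X}}$ spends less than $S(H_{h_1}^{\mathcal{X}})\rho_j+r_{h_1}+\sum_{q=l}^{j-1}S(H_{i_q}^{\mathcal{X}})\rho_q$, and $r_{k_1}\ge 8r_{h_1}$ closes the gap.

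I expect the main obstacle to be closing the scalar inequality $\rho_j C_{\mathcal{X}}\le \rho_n C_{\mathcal{Y}}+(\rho_j-\rho_{j-1})O_j^{\mathcal{Y}}$ (and the symmetric case $D>C_{\mathcal{X}}$), because $C_{\mathcal{X}}=g_{h_1}\ceil{S(H_{h_1}^{\mathcal{X}})/g_{h_1}}$ can exceed $S(H_{h_1}^{\mathcal{X}})$ by almost a full $g_{h_1}$ while $C_{\mathcal{Y}}$, which depends only on $S(H_{k_1}^{\mathcal{Y}})$, is a priori not comparable in magnitude to $C_{\mathcal{X}}$. The slack has to be recovered from three facts used together: the power-of-$8$ gap $r_{k_1}\ge 8r_{h_1}$; the ratio inequality $g_{k_1}\cdot r_{h_1}/g_{h_1}\le r_{k_1}$ from Proposition~\ref{pro:T_ratio}; and the fact that $H_{k_1}^{\mathcal{Y}}$ is nonempty with every member of exact machine type $\ge v(k_1)$, so $S(H_{k_1}^{\mathcal{Y}})>g_{v(k_1)-1}$ and the type-$k_1$ machines of $\cn_{k_1}^{\mathcal{Y}}$ have free capacity strictly below $g_{k_1}-g_{v(k_1)-1}$, forcing extra overflow (hence extra cost) onto the low types of $\cn_{k_1}^{\mathcal{Y}}$ precisely in the regime where $C_{\mathcal{X}}$ is large. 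A cleaner reorganisation I would attempt first is to interpolate through $\mathcal{Y}_{\le h_1}:=\{J\in\mathcal{Y}:m(J)\le h_1\}$: the displayed identity is monotone in the job sizes at a fixed top type (add the extra jobs one at a time; each addition either raises some $O_q$ or raises $C$, and the net change of the right-hand side is non-negative), which handles $\mathcal{X}\subseteq\mathcal{Y}_{\le h_1}$, leaving only the raising of the top type from $h_1$ to $k_1$ — i.e.\ the $D$-case analysis above — to be carried out.
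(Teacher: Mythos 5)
Your setup is sound: the identification of $h_1$, the fact that $T(h_1)=\{z\in T(k_1):z\le h_1\}$, and the ``water-filling'' closed form for the partial costs of $\cn_{i_m}^{\mathcal{Z}}$ are all correct, and the reduction of the fixed-top-type comparison to monotonicity in the job sizes is workable. But the proof has a genuine gap exactly where you flag it yourself: the entire content of the lemma, once the easy reductions are made, is the comparison between a configuration with top type $h_1$ and one with top type $k_1>h_1$, and you leave this as ``the scalar inequality $\rho_j C_{\mathcal{X}}\le \rho_n C_{\mathcal{Y}}+(\rho_j-\rho_{j-1})O_j^{\mathcal{Y}}$ \dots to be carried out,'' with the regime $D>C_{\mathcal{X}}$ dismissed as ``the symmetric case'' and never examined. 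Listing the three ingredients you expect to need (the power-of-$8$ gap, Proposition \ref{pro:T_ratio}, and $S(H_{k_1}^{\mathcal{Y}})>g_{v(k_1)-1}$) is not the same as combining them; as submitted, the proof does not establish the inequality of the lemma.

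For comparison, the paper sidesteps this bookkeeping by first reducing to $|\mathcal{Y}\setminus\mathcal{X}|=1$ (the general case follows by adding jobs one at a time). With a single new job $J_0$, the ceiling $\ceil*{S(\{J\in\cdot:m(J)\in A_{k_1}\})/g_{k_1}}$ changes by at most one, so there are only two cases. When it is unchanged the tops coincide and the comparison is componentwise; when it jumps from $0$ to $1$ (the case where the top type genuinely rises from $h_1<k_1$ to $k_1$), the extra cost $r_{h_1}$ that the ceiling in $\cn_{h_1}^{\mathcal{X}}$ can contribute is absorbed by the single new job via $r_{h_1}\le \frac{s(J_0)}{g_{h_1}}r_{h_1}\le s(J_0)\frac{r_{k_1}}{g_{k_1}}$, using $s(J_0)>g_{v(k_1)-1}\ge g_{h_1}$ (since $m(J_0)\in A_{k_1}$) together with Propositions \ref{pro:cn&r'_total}, \ref{pro:T_ratio} and \ref{pro:cn&r'_head}. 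This one-job-at-a-time reduction is the missing idea that makes the ceiling slack you worry about tractable; if you want to salvage your route, I would adopt it (or carry out your $D$-case analysis in full, including $D>C_{\mathcal{X}}$) rather than leaving the top-type-raising step as a sketch.
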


\begin{proof}

For the first part, $\mathcal{X}\subset \mathcal{Y}$ implies that $h_0 \leq k_0 \leq k_1$. By proposition \ref{pro:T_union}, $h_0\in A_{h_1}$ for some $h_1\in T(k_1)$.
For the second part, it suffices to prove the special case of lemma \ref{lem:compare_0} when $\mathcal{Y}\setminus \mathcal{X}$ is singleton. Hence, assume that $\mathcal{Y}\setminus \mathcal{X} = \{J_0\}$.

Case 1: $\ceil*{\frac{S({J\in \mathcal{X} : m(J)\in A_{k_1} })}{g_{k_1}}} = \ceil*{\frac{S({J\in \mathcal{X}\cup\{J_0\} : m(J)\in A_{k_1} })}{g_{k_1}}}$.

In this case, clearly $h_1 = k_1$. No matter where $m(J_0)$ is within $\cup_{z\in T(k_1)} A_z$,  $\cn_{k_1}^{\mathcal{X}} (z) \leq \cn_{k_1}^\mathcal{Y} (z)$ for each $z\in T(k_1 )\setminus \{k_1\}$. Therefore, we are done.

Case 2:  $\ceil*{\frac{S({J\in \mathcal{X} : m(J)\in A_{k_1} })}{g_{k_1}}} + 1 = \ceil*{\frac{S({J\in \mathcal{X}\cup\{J_0\} : m(J)\in A_{k_1} })}{g_{k_1}}}$.

There are two subcases about $\{J\in \mathcal{X} : m(J)\in A_{k_1} \}$.

Case 2.1: $\ceil*{\frac{S(\{J\in \mathcal{X} : m(J)\in A_{k_1} \})}{g_{k_1}}}=0$, and
Case2.2: $\ceil*{\frac{S(\{J\in \mathcal{X} : m(J)\in A_{k_1} \})}{g_{k_1}}} \geq 1$. For simplicity of notations, for each $z\in T(k_1)\setminus \{k_1\}$, let $H_z$ denote the set of jobs $\{J\in \mathcal{X}: m(J)\in A_z\}$.

In case 2.1. $\ceil*{\frac{S(\{J\in \mathcal{X} : m(J)\in A_{k_1} \})}{g_{k_1}}}=0$ implies that $h_1<k_1$ must hold. Furthermore, $\{J_0\} = \{J\in \mathcal{X}\cup \{J_0\}: m(J)\in A_{k_1}\}$. We have the following inequalities:
\begin{equation*}
\begin{aligned}
&\quad \sum_{z\in T(h_1) \land z\geq z_0} \cn_{h_1}^{\mathcal{X}} (z)\cdot r_z\\
&\leq r_{h_1} + \sum_{z\in T(h_1) \land z\geq z_0} S(H_z)\cdot \frac{r_z}{g_z}, \text{~by~proposition~\ref{pro:cn&r'_total}}\\
&\leq \frac{s(J_0)}{g_{h_1}}\cdot r_{h_1} +  \sum_{z\in T(h_1) \land z\geq z_0} S(H_z)\cdot \frac{r_z}{g_z}, \text{~because~} m(J_0)\in A_{k_1}\\
&\leq s(J_0) \cdot \frac{r_{k_1}}{g_{k_1}} + \sum_{z\in T(h_1) \land z\geq z_0} S(H_z)\cdot \frac{r_z}{g_z}, \text{~by~proposition~\ref{pro:T_ratio}}\\
&\leq  s(J_0) \cdot \frac{r_{k_1}}{g_{k_1}} + \sum_{z\in T(k_1) \land z_0 \leq z < k_1} S(H_z)\cdot \frac{r_z}{g_z}\\
&\leq \sum_{z\in T(k_1) \land z\geq z_0} \cn_{k_1}^\mathcal{Y} (z)\cdot r_z, \text{~by~proposition~\ref{pro:cn&r'_head}}\\
\end{aligned}
\end{equation*}

In case 2.2, clearly $h_1=k_1$. And $m(J_0)$ must be in $A_{k_1}$. We have the following inequalities:

\begin{equation*}
\begin{aligned}
&\quad \sum_{z\in T(h_1) \land z\geq z_0} \cn_{h_1}^\mathcal{X} (z)\cdot r_z\\
&= \cn_{k_1}^\mathcal{X} (k_1) \cdot r_{k_1} + \sum_{z\in T(k_1) \land z_0 \leq z < k_1} \cn_{k_1}^\mathcal{X} (z)\cdot r_z\\
&\leq \cn_{k_1}^\mathcal{X} (k_1) \cdot r_{k_1} + \sum_{z\in T(k_1) \land z_0 \leq z < k_1} S(H_z)\cdot \frac{r_z}{g_z}, \\
&\quad \text{~since~} \cn_{k_1}^\mathcal{X} (z)\cdot g_z \leq S(H_z), \text{~for~each~} z\in T(k_1)\setminus \{k_1\}\\
&\leq (\cn_{k_1}^\mathcal{X} (k_1) + 1) \cdot r_{k_1} + \sum_{z\in T(k_1) \land z_0 \leq z < k_1} \cn_{k_1}^\mathcal{Y} (z) \cdot r_{z},\\
&\quad \text{~by~proposition~\ref{pro:cn&r'_total}}\\
&= \sum_{z\in T(k_1) \land z \geq z_0} \cn_{k_1}^\mathcal{Y} (z) \cdot r_{z}\\
\end{aligned}
\end{equation*}

\end{proof}

\begin{lemma} \label{lem:compare}
We have $z^\diamond_1 \in{} A_i$ for some $i\in{}T(z^\diamond_2)$. \textit{Equivalently}, we show that the highest machine type used by the alternative machine configuration for one-shot scheduling $\mathcal{X}$ is less than or equal to the highest machine type used by the alternative machine configuration for one-shot scheduling $\mathcal{Y}$. (see Proposition \ref{pro:T_union})
\end{lemma}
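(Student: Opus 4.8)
The plan is to show that $z^\diamond_1 \leq z^\diamond_2$ in the sense that $z^\diamond_1 \in A_i$ for some $i \in T(z^\diamond_2)$, by contradiction combined with the characterization of $z^\diamond$ from Proposition \ref{pro:alternative}. Recall $z^\diamond_1 \in P(h_0)$ and $z^\diamond_2 \in P(k_0)$, and $h_0 \le k_0$ since $\mathcal{X} \subset \mathcal{Y}$. First I would invoke Proposition \ref{pro:T_union}: since $h_0 \le z^\diamond_2$, there is a unique $i^\star \in T(z^\diamond_2)$ with $h_0 \in A_{i^\star}$, and $P(i^\star)$ consists of $z^\diamond_2$-related ancestors. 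The claim ``$z^\diamond_1 \in A_i$ for some $i \in T(z^\diamond_2)$'' is equivalent to $z^\diamond_1 \le z^\diamond_2$ (using that the node sets of subtrees rooted at $T(z^\diamond_2)$ partition $\{1,\dots,z^\diamond_2\}$, Proposition \ref{pro:T_union}); indeed $z^\diamond_1 \in A_{i^\star}$ precisely when $z^\diamond_1 \le i^\star$, and otherwise $z^\diamond_1$ would sit in a higher subtree or be an ancestor of $z^\diamond_2$.

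Next, suppose for contradiction that $z^\diamond_1 > z^\diamond_2$, so $z^\diamond_1$ is an ancestor of $i^\star$ (hence of $h_0$) lying strictly above $z^\diamond_2$. By the definition of $z^\diamond_2$ as the \emph{lowest}-indexed type in $P(k_0)$ for which $\cn_{z^\diamond_2}^{\mathcal{Y}}$ is decent, and the fact that every type in $P(h_0)$ strictly below $z^\diamond_1$ is not decent for $\mathcal{X}$, I would compare the ``decency'' condition for $\mathcal{X}$ against that for $\mathcal{Y}$ at the common ancestor type. The key technical input is Lemma \ref{lem:compare_0}: taking $k_1 = z^\diamond_2 \in P(k_0)$, there is $h_1 \in T(z^\diamond_2)$ with $h_0 \in A_{h_1}$, and for every $z_0 \in T(h_1)$,
\[
\sum_{z\in T(h_1)\land z\ge z_0} \cn_{h_1}^{\mathcal{X}}(z)\, r_z \;\le\; \sum_{z\in T(z^\diamond_2)\land z\ge z_0} \cn_{z^\diamond_2}^{\mathcal{Y}}(z)\, r_z.
\]
Since $\cn_{z^\diamond_2}^{\mathcal{Y}}$ is decent, the right-hand side summed over any subtree rooted at an ancestor $z^\triangle$ of $z^\diamond_2$ is at most $r_{z^\triangle}$; pushing the inequality through, $\cn_{h_1}^{\mathcal{X}}$ must also satisfy the decency bound on all such ancestor subtrees. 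But if $z^\diamond_1 > z^\diamond_2 \ge h_1$, then $z^\diamond_1$ is an ancestor of $h_1$, and by Proposition \ref{pro:alternative}(2) applied to $\mathcal{X}$ with $z^* = h_1 < z^\diamond_1$ (noting $h_1 \in P(h_0)$, or more precisely that $h_1$ lies on the relevant chain), there exists some ancestor $z^\triangle$ of $h_1$ violating the decency bound for $\cn_{h_1}^{\mathcal{X}}$ — contradicting what we just derived. This forces $z^\diamond_1 \le z^\diamond_2$.

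The main obstacle I anticipate is bookkeeping the relationship between the configuration $\cn_{h_1}^{\mathcal{X}}$ (which uses highest type $h_1 \in T(z^\diamond_2)$, possibly strictly below $z^\diamond_2$) and the \emph{alternative} configuration $\cn_{z^\diamond_1}^{\mathcal{X}}$ for $\mathcal{X}$ (which uses highest type $z^\diamond_1 \in P(h_0)$): Lemma \ref{lem:compare_0} controls $\cn_{h_1}^{\mathcal{X}}$, not $\cn_{z^\diamond_1}^{\mathcal{X}}$, so I need to carefully verify that $h_1 \in P(h_0)$ (so that $h_1$ is one of the candidate highest-types considered in the definition of $z^\diamond_1$) and then use the ordering in Proposition \ref{pro:alternative}(2) — that all candidates strictly below $z^\diamond_1$ fail to be decent — to pin down that $z^\diamond_1 \le h_1 \le z^\diamond_2$. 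The delicate point is that $h_1$ need not equal $h_0$ or $z^\diamond_1$; establishing $h_1 \in P(h_0)$ uses $h_0 \in A_{h_1}$ together with Proposition \ref{pro:consecutive}/\ref{pro:T_union}, and once that is in hand the decency comparison closes the argument.
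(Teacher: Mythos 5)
Your proposal is correct and follows essentially the same route as the paper: both hinge on applying Lemma \ref{lem:compare_0} with $k_1 = z^\diamond_2$ to obtain $h_1 \in T(z^\diamond_2)$ with $h_0 \in A_{h_1}$ (hence $h_1 \in P(h_0)$), transferring the decency of $\cn^{\mathcal{Y}}_{z^\diamond_2}$ (Proposition \ref{pro:alternative}(1)) to $\cn^{\mathcal{X}}_{h_1}$ via the tail inequality and Proposition \ref{pro:T_two}, and concluding $z^\diamond_1 \leq h_1$ from the minimality in the definition of $z^\diamond_1$. The paper states this directly while you phrase it as a contradiction through Proposition \ref{pro:alternative}(2), but the logical content is identical.
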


\begin{proof}

Recall that $h_0:=\max \{m(J): J\in \mathcal{X}\}$ and $k_0:=\max \{m(J):J\in \mathcal{Y}\}$. By lemma \ref{lem:compare_0}, $h_0\in A_{h_1}$ for some $h_1\in T(z^\diamond_2)$. To show that $z^\diamond_1\in A_{h_1}$, it suffices to show that $\sum_{z\in T(h_1)\cap A_{h_2}} \cn^\mathcal{X}_{h_1} (z)\cdot r_z \leq r_{h_2}$ for each $h_2\in P(h_1)\setminus \{h_1\}$.

For each $h_2\in P(h_1)\setminus \{h_1\}$, let $z_0=\min T(h_1)\cap A_{h_2}$. Indeed, we have
\begin{equation*}
\begin{aligned}
&\quad \sum_{z\in T(h_1)\cap A_{h_2}} \cn^\mathcal{X}_{h_1} (z)\cdot r_z \\
&= \sum_{z\in T(h_1) \land z\geq z_0} \cn^\mathcal{X}_{h_1} (z)\cdot r_z, \text{~by~Proposition~\ref{pro:T_two}} \\
&\leq \sum_{z\in T(z^\diamond _2) \land z\geq z_0} \cn^\mathcal{Y}_{z^\diamond_2}(z)\cdot r_z, \text{~by~lemma~\ref{lem:compare_0}} \\
&= \sum_{z\in T(z^\diamond _2)\cap A_{h_2}} \cn^{\mathcal{Y}}_{z^\diamond_2}(z)\cdot r_z, \text{~by~Proposition~\ref{pro:T_two}}\\
&\leq r_{h_2}, \text{~by~Proposition~\ref{pro:alternative}~(1)}.\\
\end{aligned}
\end{equation*}

\end{proof}

\begin{proof} [Proof of Theorem \ref{thm:tilder_1}]

By lemma \ref{lem:compare}, $z^\diamond_1\in{}A_{h_1}$ for some $h_1\in{}T(z^\diamond_2)$ with $P(h_1)\subset P(z^\diamond_1)\subset P(h_0)$.
Take any job $J\in{}\mathcal{X}\setminus{}\{J\in{}\mathcal{X}:m(J)\in{}A_{h_1}\}$.
Since $h_0\in{}A_{h_1}$, clearly $J\in{}H_z$ for some $z\in{}T(h_1)\setminus{}\{h_1\}$.
By definition, $T(h_1)\setminus{}\{h_1\}\subset{}T(z^\diamond_1)\setminus{}\{z^\diamond_1\}$. $\tilde{r}(\mathcal{X})(J)=s(J)\cdot \frac{r_z}{g_z}$ by definition.
On the other hand, again by definition, since $h_1\in T(z^\diamond_2)$, $T(h_1 )\setminus{}\{h_1\}\subset{}T(z^\diamond_2)\setminus{}\{z^\diamond_2\}$, which implies that $\tilde{r}(\mathcal{Y})(J)=s(J)\cdot \frac{r_z}{g_z}$.
Therefore, it suffices to consider the set of remaining jobs, i.e., $\{J\in{}\mathcal{X}:m(J)\in{}A_{h_1}\}$.

We show that it suffices to assume that $h_1=z^\diamond_2$.
Otherwise, $h_1\in{}T(z^\diamond_2)\setminus{}\{z^\diamond_2\}$. We have $\{1,2,\ldots{},h_0\}\cap{}A_{h_1}\subset{}\{1,2,\ldots{},z^\diamond_1\}\cap{}A_{h_1}=\cup_{z\in{}T(z^\diamond_1)\cap{}A_{h_1}} A_z$, where the inclusion is due to $h_0\leq{}z^\diamond_1$ and the equality is due to proposition \ref{pro:T_union}.
Therefore, $\{J\in{}\mathcal{X}:m(J)\in{}A_{h_1}\} = \{J\in{}\mathcal{X}:m(J)\in{}\cup{}_{z\in{}T(z^\diamond_1)\cap{}A_{h_1}} A_z \}$.
For each $i\in{}T(z^\diamond_1 )\cap{}A_{h_1}$, for each $J\in\{J\in{}\mathcal{X}:m(J)\in{}A_i\}$, we have $\tilde{r}(\mathcal{X})(J) \geq{} s(J)\cdot \frac{r_i}{g_i} \geq{} s(J)\cdot\frac{r_{h_1}}{g_{h_1}}$.
On the other hand, note that $h_1\in{}T(z^\diamond_2 )\setminus{}\{z^\diamond_2\}$.
Therefore, for each $J\in{}\{J\in{}\mathcal{X}:m(J)\in{}A_{h_1}\}$, $\tilde{r}(\mathcal{Y})(J)=s(J)\cdot\frac{r_{h_1}}{g_{h_1}}$.
Therefore, we have $\tilde{r}(\mathcal{X})(J) \geq{} \tilde{r}(\mathcal{Y})(J)$ for each $J\in\{J\in{}\mathcal{X}:m(J)\in{} A_{h_1}\}$.

Furthermore, it suffices to assume $z^\diamond_1=z^\diamond_2$.
This is because otherwise $z^\diamond_2\in P(z^\diamond_1)\setminus \{z^\diamond_1\}$ and hence $\{J\in{}\mathcal{X}:m(J)\in{}A_{z^\diamond_2}\}=\{J\in{}\mathcal{X}:m(J)\in{}A_{z^\diamond_2}\setminus{}\{z^\diamond_2\}\}=\cup{}_{i\in{}f(z^\diamond_2)} \{J\in{}\mathcal{X}:m(J)\in{}A_i\}$.
For each $i\in{}f(z^\diamond_2)$, for each $J\in{}\{J\in{}\mathcal{X}:m(J)\in{}A_i \}$, $\tilde{r}(\mathcal{X})(J) \geq{} s(J)\cdot\frac{r_i}{g_i}$ by definition.
On the other hand, for each $i\in{}f(z^\diamond_2)$, for each $J\in{}\{J\in{}\mathcal{X}:m(J)\in{}A_i\}$, $\tilde{r}(\mathcal{Y})(J) \leq{} s(J)\cdot\frac{r_i}{g_i}$.
Clearly, we must have $\tilde{r}(\mathcal{X})(J) \geq{} \tilde{r}(\mathcal{Y})(J)$ for each $J\in{}\{J\in{}\mathcal{X}:m(J)\in{}A_{z^\diamond_2}\}$.

Let $z^*=z^\diamond_2$.

Let $h:=S({J\in{}\mathcal{Y}:m(J)=z^* })\cdot  \frac{r_{z^*}}{g_{z^*}}$.

Let $c_1:=\sum_{i\in{}f(z^*)} S(\{J\in{}\mathcal{Y}:m(J)\in{}A_i\})\cdot \frac{r_i}{g_i}$.

Let  $c_2:=S(\{J\in{}\mathcal{Y}:m(J)\in{}A_{z^*}\setminus{}\{z^*\}\})\cdot \frac{r_{z^*}}{g_{z^*}}$.

Let $h':= S(\{J\in{}\mathcal{X}:m(J)=z^*\})\cdot \frac{r_{z^*}}{g_{z^*}}$.

Let $c'_1:=\sum_{i\in{}f(z^*)} S(\{J\in{}\mathcal{X}:m(J)\in{}A_i\})\cdot \frac{r_i}{g_i}$.

Let $c'_2:=S(\{J\in{}\mathcal{X}:m(J)\in{}A_{z^*}\setminus\{z^*\}\})\cdot \frac{r_{z^*}}{g_{z^*}}$.

Clearly, we have $h' \leq{} h$, $c'_1 \leq{} c_1$ and $c'_2 \leq{} c_2$.
In the following, we show that $\tilde{r}(\mathcal{X})(J) \geq{} \tilde{r}(\mathcal{Y})(J)$ for each $J\in{}\{J\in{}\mathcal{X}:m(J)\in{}A_{z^*}\}$ case by case.

Case 1: $h+c_1 \leq{} r_{z^*}$.

Consequently, $h'+c'_1 \leq{} r_{z^*}$.
By definition, for each $i\in{}f(z^*)$, for each $J\in{}\{J\in{}\mathcal{X}:m(J)\in{}A_i\}$, we have $\tilde{r}(\mathcal{X})(J)=\tilde{r}(\mathcal{Y})(J)=s(J)\cdot\frac{r_i}{g_i}$.
If $h' > 0$, which implies $h > 0$, for each $J\in{}\{J\in{}\mathcal{X}:m(J)=z^*\}$, $\tilde{r}(\mathcal{X})(J)=s(J)\cdot \frac{r_{z^*}}{g_{z^*}} \cdot{} \frac{r_{z^*}-c'_1}{h'} \geq{} \frac{r_{z^*}}{g_{z^*}} \cdot{} \frac{r_{z^*} - c'_1}{h} \geq{} \frac{r_{z^*}}{g_{z^*}}\cdot{} \frac{r_{z^*} - c_1}{h} = \tilde{r}(\mathcal{Y})(J)$.

Case 2: $h+c_1>r_{z^*}$ and $h+c_2<r_{z^*}$.

Consequently, we must have $h'+c'_2 < r_{z^*}$.

Case 2.1: $h' + c'_1 \leq{} r_{z^*}$ and $h' + c'_2 < r_{z^*}$.

For each $i\in{}f(z^*)$, for each $J\in{}\{J\in{}\mathcal{X}:m(J)\in{}A_i\}$, we have $\tilde{r}(\mathcal{X})(J) = s(J)\cdot \frac{r_i}{g_i} >s(J)\cdot \left(\frac{r_i}{g_i} +(\frac{r_{z^*}}{g_{z^*}} -\frac{r_i}{g_i}) \cdot{} \alpha^*\right)=\tilde{r}(\mathcal{Y})(J)$. For each $J\in{}\{J\in{}\mathcal{X}:m(J)=z^* \}$, $\tilde{r}(\mathcal{X})(J) \geq{} s(J)\cdot \frac{r_{z^*}}{g_{z^*}} =\tilde{r}(\mathcal{Y})(J)$ no matter whether $h'>0$ or not.

Case 2.2: $h'+c'_1 > r_{z^*}$ and $h'+c'_2 < r_{z^*}$.

For each $J\in{}\{J\in{}\mathcal{X}:m(J)=z^*\}$, $\tilde{r}(\mathcal{X})(J)=s(J)\cdot \frac{r_{z^*}}{g_{z^*}} =\tilde{r}(\mathcal{Y})(J)$ by definition.
It suffices to check $\{J\in{}\mathcal{X}:m(J)\in{}A_i\}$ for each $i\in{}f(z^* )$.
Define the linear function
\begin{equation}
l(\alpha)=\sum_{i\in{}f(z^*)} \sum_{J\in{}\{J\in{}\mathcal{Y}:m(J)\in{}A_i\}}\\
 s(J)\cdot{} \left( \frac{r_i}{g_i} +(\frac{r_{z^*}}{g_{z^*}} - \frac{r_i}{g_i} )\cdot{}\alpha \right)
\end{equation}
\begin{equation}
    l'(\beta)=\sum_{i\in{}f(z^*)} \sum_{J\in{}\{J\in{}\mathcal{X}:m(J)\in{}A_i\}} s(J)\cdot{} \left( \frac{r_i}{g_i} +(\frac{r_{z^*}}{g_{z^*}} - \frac{r_i}{g_i} )\cdot{}\beta \right)
\end{equation}
such that $l(\alpha^* )=r_{z^*} - h$ and $l' (\beta^*)=r_{z^*}-h'$.
For each $i\in{}f(z^*)$, for each $J\in{}\{J\in{}\mathcal{X}:m(J)\in{}A_i\}$, in order to show that $\tilde{r}(\mathcal{X})(J)= s(J)\cdot \left(\frac{r_i}{g_i} + (\frac{r_{z^*}}{g_{z^*}} - \frac{r_i}{g_i} )\cdot{}\beta^* \right) \geq{} s(J)\cdot \left( \frac{r_i}{g_i} +(\frac{r_{z^*}}{g_{z^*}} - \frac{r_i}{g_i} )\cdot{}\alpha^*\right) = \tilde{r}(\mathcal{Y})(J)$, it suffices to show that $\beta^* \leq{} \alpha^*$.

Look at
\begin{equation*}
\begin{aligned}
l'(\alpha^*) &= \sum_{i\in{}f(z^*)} \sum_{J\in{}\{J\in{}\mathcal{X}:m(J)\in{}A_i\}} s(J)\cdot{} \left( \frac{r_i}{g_i} + (\frac{r_{z^*}}{g_{z^*}} - \frac{r_i}{g_i} )\cdot{}\alpha^* \right) \\
&\leq{} \sum_{i\in{}f(z^* )} \sum_{J\in{}\{J\in{}\mathcal{Y}:m(J)\in{}A_i\}} s(J)\cdot{}\left( \frac{r_i}{g_i} + (\frac{r_{z^*}}{g_{z^*}} - \frac{r_i}{g_i} )\cdot{}\alpha^* \right)\\
&= l(\alpha^* )=r_{z^*} - h \leq{} r_{z^*} - h' = l' (\beta^*).\\
\end{aligned}
\end{equation*}
Since $l'$ is always decreasing, we must have $\beta^* \leq{} \alpha^*$.

Case 3: $h+c_2 \geq{} r_{z^*}$.

By definition of $\tilde{r}(\mathcal{Y})$, for each $J\in{}\{J\in{}\mathcal{X}:m(J)\in{}A_{z^*}\}$, $\tilde{r}(\mathcal{Y})(J) = s(J)\cdot \frac{r_{z^*}}{g_{z^*}}$.
For each $J\in{}\{J\in{}\mathcal{X}:m(J)=z^*\}$, $\tilde{r}(\mathcal{X})(J) \geq{} s(J)\cdot \frac{r_{z^*}}{g_{z^*}}$ in general.
For each $i\in{}f(z^*)$, for each $J\in{}\{J\in{}\mathcal{X}:m(J)\in{}A_i\}$, $\tilde{r}(\mathcal{X})(J) \geq{} s(J)\cdot \frac{r_{z^*}}{g_{z^*}}$ in general, again.

\end{proof}

\subsection {Proof of Theorem \ref{thm:tilder_2}}

\begin{proof} [Proof of theorem \ref{thm:tilder_2}]

Recall the $z^\diamond$ is the largest machine type used by the alternative machine configuration with the input set of jobs $\mathcal{J}^{1d}$. For each machine type $z\in \mathcal{M}$, let $H_z$ denote the set of jobs $\{J\in \mathcal{J}^{1d}: m(J)\in A_z\}$ for simplicity.
Let $h:= S(H_{z^\diamond}^h)\cdot  \frac{r_{z^\diamond}}{g_{z^\diamond}}$.
Let $c_1 := \sum_{i\in{}f(z^\diamond)} S(H_i)\cdot \frac{r_i}{g_i}$.
Let $c_2 := S(H_{z^\diamond}\setminus{}H_{z^\diamond}^h)\cdot \frac{r_{z^\diamond}}{g_{z^\diamond}}$, where $H_{z^\diamond}^h=\{J\in \mathcal{J}^{1d}: m(J)=z^\diamond\}$.

Case 1: $h+c_1 \leq r_{z^\diamond}$.
Clearly, in this case, $\cn_{z^\diamond} (z^\diamond)=1$.

Case 1.1: $h>0$.

We have the following results immediately:
\begin{equation*}
\begin{aligned}
&\quad \sum_{J\in \mathcal{J}^{1d}}  \tilde{r}(\mathcal{J}^{1d})(J) \\
&= r_{z^\diamond} + \sum_{z\in T(z^\diamond) \setminus \{z^\diamond\}} S(H_z )\cdot  \frac{r_z}{g_z}\\
&\leq 2\cdot r_{z^\diamond} + \sum_{z\in T(z^\diamond )\setminus \{z^\diamond \}} \cn_{z^\diamond} (z)\cdot r_z, \text{~by~proposition~\ref{pro:cn&r'_total}}\\
&\leq 2\cdot \sum_{z\in T(z^\diamond)} \cn_{z^\diamond} (z)\cdot r_z. \\
\end{aligned}
\end{equation*}
and
\begin{equation*}
\begin{aligned}
&\quad \sum_{z\in T(z^\diamond)} \cn_{z^\diamond} (z)\cdot r_z\\
&= r_{z^\diamond} + \sum_{z\in T(z^\diamond)\setminus \{z^\diamond\}} \cn_{z^\diamond} (z)\cdot r_z\\
&\leq r_{z^\diamond} + \sum_{z\in T(z^\diamond)\setminus \{z^\diamond\}} S(H_z)\cdot \frac{r_z}{g_z},\\
&\quad \text{~since~} \cn_{z^\diamond}(z)\cdot g_z \leq S(H_z), \text{~for~each~} z\in T(z^\diamond)\setminus \{z^\diamond\}\\
&= \sum_{J\in \mathcal{J}^{1d}} \tilde{r}(\mathcal{J}^{1d})(J). \\
\end{aligned}
\end{equation*}

Case 1.2: $h=0$.

In this case, we have $c_1 \leq r_{z^\diamond}$. Immediately,
\begin{equation*}
\begin{aligned}
&\quad \sum_{J\in \mathcal{J}^{1d}} \tilde{r} (\mathcal{J}^{1d} )(J) \\
&= c_1 + \sum_{z\in T(z^\diamond)\setminus \{z^\diamond \}} S(H_z ) \cdot \frac{r_z}{g_z}\\
&\leq r_{z^\diamond} + \sum_{z\in T(z^\diamond)\setminus \{z^\diamond\}} S(H_z ) \cdot \frac{r_z}{g_z}\\
&\leq 2\cdot r_{z^\diamond} + \sum_{z\in T(z^\diamond)\setminus \{z^\diamond\}} \cn_{z^\diamond}(z)\cdot r_z, \text{~by~proposition~\ref{pro:cn&r'_total}} \\
&\leq 2\cdot \sum_{z\in T(z^\diamond)} \cn_{z^\diamond}(z)\cdot r_z. \\
\end{aligned}
\end{equation*}

For the other inequality,
let $k_0:=\max \{m(J) : J\in \mathcal{J}^{1d}\}$. By definition of $z^\diamond$ and $h=0$, $z^\diamond \in P(k_0)\setminus \{k_0\}$. Suppose $z'\in P(k_0)$ such that $p(z') = z^\diamond$. Clearly, $\{1,2,\dots,k_0\}\subset \{1,2,\ldots,z'\}=\cup_{z\in T(z')} A_z$.
Also by the choice of $z^\diamond$, $\sum_{z\in T(z') \cap A_{z^\triangle}} \cn_{z^\diamond} (z)\cdot r_z > r_{z^\triangle}$ for some $z^\triangle \in P(z^\diamond)$. We have
\begin{equation} \label{eq:tilder_case1.2_1}
\begin{aligned}
&\quad \sum_{J\in \mathcal{J}^{1d}} \tilde{r}(\mathcal{J}^{1d})(J) \\
&= \sum_{z\in T(z')} S(H_z) \cdot \frac{r_z}{g_z}\\
&\geq \left(\sum_{z\in T(z')} \cn_{z'}(z)\cdot r_z \right)- r_{z'}, \text{~by~proposition~\ref{pro:cn&r'_total}}\\
&\geq \left(\sum_{z\in T(z') \cap A_{z^\triangle}} \cn_{z^\diamond} (z)\cdot r_z \right)- r_{z'}\\
&> r_{z^\triangle} - r_{z'} \geq \frac{7}{8} \cdot r_{z^\diamond}. \\
\end{aligned}
\end{equation}
Consequently,
\begin{equation*}
\begin{aligned}
&\quad  \sum_{z\in T(z^\diamond)} \cn_{z^\diamond}(z)\cdot r_z\\
&= r_{z^\diamond} + \sum_{z\in T(z^\diamond)\setminus \{z^\diamond\}} \cn_{z^\diamond}(z)\cdot r_z\\
&\leq (\frac{8}{7} + 1)\cdot \sum_{J\in \mathcal{J}^{1d}} \tilde{r}(\mathcal{J}^{1d})(J), \text{~by~equation~(\ref{eq:tilder_case1.2_1})}\\
&= \frac{15}{7} \cdot \sum_{J\in \mathcal{J}^{1d}} \tilde{r}(\mathcal{J}^{1d})(J)\\
\end{aligned}
\end{equation*}

Case2: $h+c_1 > r_{z^\diamond}$ and $h+c_2 < r_{z^\diamond}$.

Again, in this case, we have $\cn_{z^\diamond}(z^\diamond) = 1$. The arguments in this case are identical to case 1.1 and we still have $\sum_{J\in \mathcal{J}^{1d}}  \tilde{r}(\mathcal{J}^{1d})(J) \leq 2\cdot \sum_{z\in T(z^\diamond)} \cn_{z^\diamond} (z)\cdot r_z$ and $\sum_{z\in T(z^\diamond)} \cn_{z^\diamond} (z)\cdot r_z \leq \sum_{J\in \mathcal{J}^{1d}} \tilde{r}(\mathcal{J}^{1d})(J)$.

Case 3: $h+c_1 > r _{z^\diamond}$ and $h+c_2 \geq r_{z^\diamond}$.

In this case, we have
\begin{equation*}
\begin{aligned}
&\sum_{J\in \mathcal{J}^{1d}} \tilde{r}(\mathcal{J}^{1d})(J) \\
&= \sum_{z\in T(z^\diamond)} S(H_z)\cdot \frac{r_z}{g_z}, \text{~by~the~definition~of~$\tilde{r}$}\\
&\leq \sum_{z\in T(z^\diamond)} \cn_{z^\diamond} (z)\cdot r_z, \text{~by~proposition~\ref{pro:cn&r'_head}}. \\
\end{aligned}
\end{equation*}
and
\begin{equation*}
\begin{aligned}
&\quad \sum_{z\in T(z^\diamond)} \cn_{z^\diamond} (z)\cdot r_z\\
&= \cn_{z^\diamond} (z^\diamond)\cdot r_{z^\diamond}  + \sum_{z\in T(z^\diamond)\setminus \{z^\diamond\}} \cn_{z^\diamond} (z)\cdot r_z\\
&\leq 2\cdot \floor*{\frac{S(H_{z^\diamond})}{g_{z^\diamond}}} \cdot r_{z^\diamond} + \sum_{z\in T(z^\diamond)\setminus \{z^\diamond\}} S(H_z)\cdot \frac{r_z}{g_z}\\
&\leq 2\cdot S(H_{z^\diamond})\cdot \frac{r_{z^\diamond}}{g_{z^\diamond}} + \sum_{z\in T(z^\diamond)\setminus \{z^\diamond\}} S(H_z)\cdot \frac{r_z}{g_z}\\
&\leq 2\cdot \sum_{z\in T(z^\diamond)} S(H_z)\cdot \frac{r_z}{g_z}\\
&= 2\cdot \sum_{J\in \mathcal{J}^{1d}} \tilde{r}(\mathcal{J}^{1d})(J).\\
\end{aligned}
\end{equation*}

\end{proof}

\end{document}